\keywords{Games on graphs, subgame-perfect equilibria, mean-payoff objectives.}
\newcommand{\<}{\langle}
\renewcommand{\>}{\rangle}
\newcommand{\bsigma}{\bar{\sigma}}
\newcommand{\btau}{\bar{\tau}}
\renewcommand{\|}{\upharpoonright}
\newcommand{\hr}{\hat{r}}
\newcommand{\hmu}{\hat{\mu}}
\newcommand{\N}{\mathbb{N}}
\newcommand{\Z}{\mathbb{Z}}
\newcommand{\R}{\mathbb{R}}
\newcommand{\Q}{\mathbb{Q}}
\newcommand{\bR}{\overline{\mathbb{R}}}
\newcommand{\playcircle}{{\scriptsize{\Circle}}}
\renewcommand{\l}{\ell}
\renewcommand{\epsilon}{\varepsilon}
\renewcommand{\phi}{\varphi}
\newcommand{\opt}{\mathsf{opt}}
\newcommand{\SC}{\mathsf{SC}}
\newcommand{\Plays}{\mathsf{Plays}}
\newcommand{\Hist}{\mathsf{Hist}}
\newcommand{\Conv}{\mathsf{Conv}}
\newcommand{\nego}{\mathsf{nego}}
\newcommand{\lCons}{\lambda\mathsf{Cons}}
\newcommand{\lRat}{\lambda\mathsf{Rat}}
\newcommand{\Prop}{\mathsf{Prop}}
\newcommand{\Acc}{\mathsf{Acc}}
\newcommand{\Dev}{\mathsf{Dev}}
\newcommand{\ML}{\mathsf{ML}}
\newcommand{\Rat}{\mathsf{Rat}}
\newcommand{\Abs}{\mathsf{Abs}}
\newcommand{\Conc}{\mathsf{Conc}}
\newcommand{\val}{\mathsf{val}}
\newcommand{\last}{\mathsf{last}}
\newcommand{\first}{\mathsf{first}}
\newcommand{\Occ}{\mathsf{Occ}}
\newcommand{\Inf}{\mathsf{Inf}}
\newcommand{\MP}{\mathsf{MP}}
\newcommand{\SConn}{\mathsf{SConn}}
\newcommand{\Aff}{\mathsf{Aff}}
\newcommand{\Sum}{\mathsf{sum}}
\newcommand{\dseal}{\!\,^\llcorner}
\renewcommand{\P}{\mathbb{P}}
\newcommand{\C}{\mathbb{C}}
\renewcommand{\S}{\mathbb{S}}
\newcommand{\Mem}{\mathsf{Mem}}
\newcommand{\NP}{\mathbf{NP}}
\newcommand{\ExpTime}{\mathbf{ExpTime}}
\newcommand{\bx}{\bar{x}}
\renewcommand{\by}{\bar{y}}
\newcommand{\bz}{\bar{z}}
\newcommand{\balpha}{\bar{\alpha}}
\newcommand{\bbeta}{\bar{\beta}}
\newcommand{\ba}{\bar{a}}
\newcommand{\blambda}{\lambda \mspace{-10mu} \bar{\phantom{v}}}
\newcommand{\bzero}{\bar{0}}
\newcommand{\dH}{\dot{H}}
\newcommand{\dc}{\dot{c}}
\newcommand{\dpi}{\dot{\pi}}
\newcommand{\dchi}{\dot{\chi}}
\newcommand{\dxi}{\dot{\xi}}
\begin{document}

\title{Subgame-perfect Equilibria in Mean-payoff Games}
\titlecomment{This paper is an enhanced version of \cite{Concur}.
All the proofs are now included in the main body of the paper, and some of them have been rewritten to be more readable, especially those of Theorems~\ref{thm_piecewise_affine} and \ref{thm_decidable}.
The proof of Theorem~\ref{thm_spe} has been modified in order to relax its hypotheses, \emph{via} a change in the definition of \emph{steady negotiation} (Definition~\ref{def_nego}).
We have also taken advantage of the new available space to add examples in the main body of this paper, especially Example~\ref{ex_propagation}.}
\bibliographystyle{alphaurl}

\author[L.~Brice]{Léonard Brice\lmcsorcid{0000-0001-7748-7716}}[a]
\author[J.~Raskin]{Jean-François Raskin\lmcsorcid{0000-0002-3673-1097}}[a]
\author[M.~van~den~Bogaard]{Marie van den Bogaard\lmcsorcid{0009-0007-2070-1196}}[b]

\address{Université Libre de Bruxelles, Faculté des Sciences
Campus de la Plaine - CP 212
Boulevard du Triomphe, ACC.2
1050 Bruxelles,
Belgique}
\email{leonard.brice@ulb.be, jraskin@ulb.be}

\address{Université Gustave Eiffel,
bâtiment Copernic,
5 boulevard Descartes
77420 Champs-sur-Marne,
France}
\email{marie.van-den-bogaard@univ-eiffel.fr}

\begin{abstract}
\noindent
In this paper, we provide an effective characterization of all the subgame-perfect equilibria in infinite duration games played on finite graphs with mean-payoff objectives.
To this end, we introduce the notion of requirement, and the notion of negotiation function.
We establish that the plays that are supported by SPEs are exactly those that are consistent with a fixed point of the negotiation function.
Finally, we use that characterization to prove that the SPE threshold problem, whose status was left open in the literature, is decidable.
\end{abstract}

\maketitle

\section{Introduction}
The notion of Nash equilibrium (NE) is one of the most important and most studied solution concepts in game theory.
A profile of strategies is an NE when no rational player has an incentive to change their strategy unilaterally, i.e. while the other players keep their strategies.
Thus an NE models a stable situation. 
Unfortunately, it is well known that, in sequential games, NEs suffer from the problem of {\em non-credible threats}, see e.g.~\cite{Osborne}. In those games, some NEs only exist when some players do {\em not} play rationally in subgames and so use non-credible threats to force the NE.
This is why, in sequential games, the stronger notion of {\em subgame-perfect equilibrium} is used instead: a profile of strategies is a subgame-perfect equilibrium (SPE) if it is an NE in all the subgames of the sequential game.
Thus SPEs impose rationality even after a deviation has occured.

In this paper, we study sequential games that are infinite-duration games played on graphs with mean-payoff objectives, and focus on SPEs.
While NEs are guaranteed to exist in infinite duration games played on graphs with mean-payoff objectives, it is known that it is not the case for SPEs, see e.g.~\cite{solan2003deterministic,DBLP:conf/csl/BrihayeBMR15}.
We provide in this paper a constructive characterization of the entire set of SPEs, which allows us to decide, among others, the SPE threshold problem.
This problem was left open in previous contributions on the subject.
More precisely, our contributions are described in the next paragraphs.

\subsection{Contributions}

First, we introduce two important new notions that allow us to capture NEs, and more importantly SPEs, in infinite duration games played on graphs with mean-payoff objectives\footnote{A large part of our results apply to the larger class of games with prefix-independent objectives. For the sake of readability of this introduction, we focus here on mean-payoff games but the technical results in the paper usually cover broader classes of games.}: the notion of {\em requirement} and the notion of {\em negotiation function}. 

A requirement $\lambda$ is a function that assigns to each vertex $v \in V$ of a game graph a value in $\mathbb{R} \cup \{ -\infty, +\infty\}$.
The value $\lambda(v)$ represents a requirement on any play $\rho = \rho_0 \rho_1 \dots \rho_n \dots$ that traverses this vertex: if one wants the player who controls the vertex $v$ to follow $\rho$ and to give up deviating from $\rho$, then the play must offer to that player a payoff that is at least $\lambda(v)$.
An infinite play $\rho$ is \emph{$\lambda$-consistent} if, for each player~$i$, the payoff of $\rho$ for player~$i$ is larger than or equal to the largest value of $\lambda$ on vertices occurring along $\rho$ and controlled by player~$i$.

We first use these notions to rephrase a classical result about NEs: if $\lambda$ maps each vertex $v$ to the largest value that the player who controls $v$ can secure against a fully adversarial coalition of the other players, i.e. if $\lambda(v)$ is the zero-sum worst-case value, then the set of plays that are $\lambda$-consistent is exactly the set of plays that are supported by an NE (Theorem~\ref{thm_ne}). 

As SPEs are forcing players to play rationally in all subgames, we cannot rely on the zero-sum worst-case value to characterize them.
Indeed, when considering the worst-case value, we allow adversaries to play fully adversarially after a deviation and so potentially in an irrational way w.r.t. their own objective. 
In fact, in an SPE, a player is refrained to deviate when opposed by a coalition of {\em rational adversaries}. 
To characterize this relaxation of the notion of worst-case value, we rely on our notion of \emph{negotiation function}.

The negotiation function $\nego$ operates from the set of requirements into itself.
To understand the purpose of the negotiation function, let us consider its application on the requirement $\lambda$ that maps every vertex $v$ to the worst-case value as above.
Now, we can naturally formulate the following question: given $v$ and $\lambda$, can the player who controls $v$ improve the value that they can ensure against all the other players, if only plays that are consistent with $\lambda$ are proposed by the other players?
In other words, can this player enforce a better value when playing against the other players if those players are not willing to give away their own worst-case value?
Clearly, securing this worst-case value can be seen as a minimal goal for any {\em rational} adversary.
So $\nego(\lambda)(v)$ returns this value; and this reasoning can be iterated.
One of the contributions of this paper is to show that the least fixed point $\lambda^*$ of the negotiation function is exactly characterizing the set of plays supported by SPEs (Theorem~\ref{thm_spe}).

To turn this fixed point characterization of SPEs into algorithms, we additionally draw links between the negotiation function and two classes of zero-sum games, that are called {\em abstract} and {\em concrete} (see Theorem~\ref{thm_concrete_game}) negotiation games.
The abstract negotiation game is conceptually simple but is played on an uncountably infinite graph, and therefore cannot be turned into an effective algorithm.
However, it captures the intuition behind the concrete negotiation game, which is played on a finite graph.
We show that in the concrete negotiation game, one of the players has a memoryless optimal strategy (Lemma~\ref{lm_concrete_memoryless}), which can be used to solve it effectively.
Thus, the negotiation function is computable.
However, that is not sufficient to compute its least fixed point, because the sequence of Kleene-Tarski's iterations may require a transfinite number of steps to reach a fixed point (Theorem~\ref{thm_not_stationary}).

Nevertheless, we prove that the concrete negotiation game can be used to construct a geometrical representation of the fixed points of the negotiation function, from which one can extract its least fixed point (Theorem~\ref{thm_piecewise_affine}).
Thus, the SPE threshold problem is decidable (Theorem~\ref{thm_decidable}).

All the previous results do also apply to $\epsilon$-SPEs, a classical quantitative relaxation of SPEs --- see for example~\cite{DBLP:journals/ijgt/FleschP16}.
In particular, Theorem~\ref{thm_decidable} does also apply to the $\epsilon$-SPE threshold problem.

\subsection{Related works}

Non-zero sum infinite duration games have attracted a large attention in recent years, with applications targeting reactive synthesis problems. We refer the interested reader to the following survey papers~\cite{DBLP:conf/lata/BrenguierCHPRRS16,DBLP:conf/dlt/Bruyere17} and their references for the relevant literature. We detail below contributions more closely related to the work presented here.

In~\cite{DBLP:conf/lfcs/BrihayePS13}, Brihaye et al. offer a characterization of NEs in quantitative games for cost-prefix-linear reward functions based on the worst-case value. The mean-payoff is cost-prefix-linear. In their paper, the authors do not consider the stronger notion of SPE, which is the central solution concept studied in our paper.
In \cite{DBLP:conf/csl/BruyereMR14}, Bruy{\`{e}}re et al. study secure equilibria that are a refinement of NEs. Secure equilibria are not subgame-perfect and are, as classical NEs, subject to non-credible threats in sequential games.

In~\cite{DBLP:conf/fsttcs/Ummels06}, Ummels proves that there always exists an SPE in games with $\omega$-regular objectives and defines algorithms based on tree automata to decide constrained SPE problems.
Strategy logics, see e.g.~\cite{DBLP:journals/iandc/ChatterjeeHP10}, can be used to encode the concept of SPE in the case of $\omega$-regular objectives with application to the rational synthesis problem~\cite{DBLP:journals/amai/KupfermanPV16} for instance.  
In~\cite{DBLP:journals/mor/FleschKMSSV10}, Flesch et al. show that the existence of $\epsilon$-SPEs is guaranteed when the reward function is {\em lower-semicontinuous}. The mean-payoff reward function is neither $\omega$-regular, nor lower-semicontinuous, and so the techniques defined in the papers cited above cannot be used in our setting. Furthermore, as already recalled above, see e.g.~\cite{vieille:hal-00464953,DBLP:conf/csl/BrihayeBMR15}, contrary to the $\omega$-regular case, SPEs in games with mean-payoff objectives may fail to exist.

In~\cite{DBLP:conf/csl/BrihayeBMR15}, Brihaye et al. introduce and study the notion of weak subgame-perfect equilibria, which is a weakening of the classical notion of SPE. This weakening is equivalent to the original SPE concept on reward functions that are {\em continuous}.
This is the case for example for the quantitative reachability reward function, on which Brihaye et al. solve the SPE threshold problem in \cite{DBLP:conf/concur/BrihayeBGRB19}.
On the contrary, the mean-payoff cost function is not continuous and the techniques used in~\cite{DBLP:conf/csl/BrihayeBMR15}, and generalized in~\cite{DBLP:conf/fossacs/Bruyere0PR17}, cannot be used to characterize SPEs for the mean-payoff reward function.

In~\cite{thesis_noemie}, Meunier develops a method based on Prover-Challenger games to solve the problem of the existence of SPEs on games with a finite number of possible payoffs. This method is not applicable to the mean-payoff reward function, as the number of payoffs in this case is uncountably infinite.

In~\cite{DBLP:journals/mor/FleschP17}, Flesch and Predtetchinski present another characterization of SPEs on games with finitely many possible payoffs, based on a game structure that we will present here under the name of \emph{abstract negotiation game}.
Our contributions differ from this paper in two fundamental aspects. First, it lifts the restriction to finitely many possible payoffs. This is crucial as mean-payoff games violate this restriction. Instead, we identify a class of games, that we call {\em with steady negotiation}, that encompasses mean-payoff games and for which some of the conceptual tools introduced in that paper can be generalized. Second, the procedure developed by Flesch and Predtetchinski is not an algorithm in computer science acceptation: it needs to solve infinitely many games that are not represented effectively, and furthermore it needs a transfinite number of iterations. On the contrary, our procedure is effective and leads to a complete algorithm in the classical sense: with guarantee of termination in finite time and applied on effective representations of games.

In \cite{DBLP:conf/concur/ChatterjeeDEHR10}, Chaterjee et al. study mean-payoff \emph{automata}, and give a result that can be translated into an expression of all the possible payoff vectors in a mean-payoff game.
In \cite{DBLP:conf/cav/BrenguierR15}, Brenguier and Raskin give an algorithm to build the Pareto curve of a multi-dimensional two-player zero-sum mean-payoff game.
Techniques defined in these papers are used in several technical steps of our algorithm.

\subsection{Structure of the paper}

In Section~\ref{sec_background}, we introduce the necessary background.
Section~\ref{sec_negotiation} defines the notion of requirement and the negotiation function.
Section~\ref{sec_link_nego_spe} shows that the set of plays that are supported by an SPE are those that are $\lambda$-consistent, where $\lambda$ is a fixed point of the negotiation function.
Section~\ref{sec_abstract_game} draws a link between the negotiation function and the abstract negotiation game.
Section~\ref{sec_concrete_game} shows that the abstract negotiation game can be transformed into a game on a finite graph, the concrete negotiation game, which can be solved to compute effectively the negotiation function.
Section~\ref{sec_analysis} uses the concrete negotiation game to prove that the negotiation function is a piecewise affine function, of which one can compute an effective representation.
Finally, Section~\ref{sec_conclusion} applies these results to prove that the SPE threshold problem in mean-payoff games is decidable, $2\ExpTime$-easy and $\NP$-hard.

	\section{Background} \label{sec_background}

        \subsection{Games, strategies, equilibria}

In all what follows, we will use the word \emph{game} for the infinite duration turn-based quantitative games on finite graphs with complete information.

\begin{defi}[Non-initialized game]\label{defi_game}
	A \emph{non-initialized game} --- or \emph{game} for short --- is a tuple
	$G = \left(\Pi, V, (V_i)_{i \in \Pi}, E, \mu\right)$, where:
	
	\begin{itemize}
		\item $\Pi$ is a finite set of \emph{players};
		
		\item $(V, E)$ is a directed graph, called the \emph{underlying graph} of $G$, whose vertices are sometimes called \emph{states} and whose edges are sometimes called \emph{transitions}, and in which every state has at least one outgoing transition.
		For the simplicity of writing, a transition $(v, w) \in E$ will often be written $vw$.
		
		\item $(V_i)_{i \in \Pi}$ is a partition of $V$, in which $V_i$ is the set of states \emph{controlled} by player $i$;
		
		\item $\mu: V^\omega \to \R^\Pi$ is an \emph{payoff function}, that maps each infinite word $\rho$ to the tuple $\mu(\rho) = (\mu_i(\rho))_{i \in \Pi}$ of the players' \emph{payoffs}.
	\end{itemize}
\end{defi}

\begin{defi}[Initialized game]
	An \emph{initialized game} --- or game for short --- is a tuple $(G, v_0)$, often written $G_{\|v_0}$, where $G$ is a non-initialized game and $v_0 \in V$ is a state called \emph{initial state}.
\end{defi}

When the context is clear, we often use the word \emph{game} for both initialized and non-initialized games.

\begin{defi}[Play, history]
	A \emph{play} (resp. history) in the game $G$ is an infinite (resp. finite) path in the graph $(V, E)$.
	It is also a play (resp. history) in the initialized game $G_{\|v_0}$, where $v_0$ is its first vertex.
	The set of plays (resp. histories) in the game $G$ (resp. the initialized game $G_{\|v_0}$) is denoted by $\Plays G$ (resp. $\Plays G_{\|v_0}, \Hist G, \Hist G_{\|v_0}$).
	We write $\Hist_i G$ (resp. $\Hist_i G_{\|v_0}$) for the set of histories in $G$ (resp. $G_{\|v_0}$) of the form $hv$, where $v$ is a vertex controlled by player $i$.
\end{defi}

Given a play $\rho$ (resp. a history $h$), we write $\Occ(\rho)$ (resp. $\Occ(h)$) the set of vertices that appear in $\rho$ (resp. $h$), and $\Inf(\rho)$ the set of vertices that appear infinitely often in $\rho$.
For a given index $k$, we write $\rho_{\leq k}$ (resp. $h_{\leq k}$), or $\rho_{< k+1}$ (resp. $h_{<k+1}$), the finite prefix $\rho_0 \dots \rho_k$ (resp. $h_0 \dots h_k$), and $\rho_{\geq k}$ (resp. $h_{\geq k}$), or $\rho_{> k-1}$ (resp. $h_{>k-1}$), the infinite (resp. finite) suffix $\rho_k \rho_{k+1} \dots$ (resp. $h_k h_{k+1} \dots h_{|h|-1}$).
Finally, we write $\first(\rho)$ (resp. $\first(h)$) the first vertex of $\rho$, and $\last(h)$ the last vertex of $h$.

\begin{defi}[Strategy, strategy profile]
	A \emph{strategy} for player $i$ in the initialized game $G_{\|v_0}$ is a function $\sigma_i: \Hist_i G_{\|v_0} \to V$, such that $v\sigma_i(hv)$ is an edge of $(V, E)$ for every $hv$.
	A history $h$ is \emph{compatible} with a strategy $\sigma_i$ if and only if $h_{k+1} = \sigma_i(h_0 \dots h_k)$ for all $k$ such that $h_k \in V_i$. A play $\rho$ is compatible with $\sigma_i$ if all its prefixes are.

	A \emph{strategy profile} for $P \subseteq \Pi$ is a tuple $\bsigma_P = (\sigma_i)_{i \in P}$, where each $\sigma_i$ is a strategy for player $i$ in $G_{\|v_0}$.
    A play or a history is \emph{compatible} with $\bsigma_P$ if it is compatible with every $\sigma_i$ for $i \in P$.
    A \emph{complete} strategy profile, usually written $\bsigma$, is a strategy profile for $\Pi$.
    Exactly one play is compatible with the strategy profile $\bsigma$: we call it its \emph{outcome} and write it $\< \bsigma \>$.
    
    When $i$ is a player and when the context is clear, we will often write $-i$ for the set $\Pi \setminus \{i\}$.
    We will often refer to $\Pi \setminus \{i\}$ as the \emph{environment} against player $i$.
    When $\btau_P$ and $\btau'_Q$ are two strategy profiles with $P \cap Q = \emptyset$, $(\btau_P, \btau'_Q)$ denotes the strategy profile $\bsigma_{P \cup Q}$ such that $\sigma_i = \tau_i$ for $i \in P$, and $\sigma_i = \tau'_i$ for $i \in Q$.
\end{defi}

In a strategy profile $\bsigma_P$, the $\sigma_i$'s domains are pairwise disjoint. Therefore, we can consider $\bsigma_P$ as one function: for $hv \in \Hist G_{\|v_0}$ such that $v \in \bigcup_{i \in P} V_i$, we liberally write $\bsigma_P(hv)$ for $\sigma_i(hv)$ with $i$ such that $v \in V_i$.

Before moving on to SPEs, let us recall the notion of Nash equilibrium.

\begin{defi}[Nash equilibrium]
	Let $G_{\|v_0}$ be a game. A strategy profile $\bsigma$ is a \emph{Nash equilibrium} --- or \emph{NE} for short --- in $G_{\|v_0}$ if and only if for each player $i$ and for every strategy $\sigma'_i$, called \emph{deviation of $\sigma_i$}, we have the inequality $\mu_i\left(\< \sigma'_i, \bsigma_{-i} \>\right) \leq \mu_i\left(\< \bsigma \>\right)$.
\end{defi}

To define SPEs, we need the notion of subgame.

\begin{defi}[Subgame, substrategy]
	Let $hv$ be a history in the game $G$. The \emph{subgame} of $G$ after $hv$ is the game $\left(\Pi, V, (V_i)_i, E, \mu_{\|hv}\right)_{\|v}$, where $\mu_{\|hv}$ maps each play to its payoff in $G$, assuming that the history $hv$ has already been played: formally, for every $\rho \in \Plays G_{\|hv}$, we have $\mu_{\|hv}(\rho) = \mu(h\rho)$.
    
    If $\sigma_i$ is a strategy in $G_{\|v_0}$, its \emph{substrategy} after $hv$ is the strategy $\sigma_{i\|hv}$ in $G_{\|hv}$, defined by $\sigma_{i\|hv}(h') = \sigma_i(hh')$ for every $h' \in \Hist_i G_{\|hv}$.
\end{defi}

\begin{rem}
    The initialized game $G_{\|v_0}$ is also the subgame of $G$ after the one-state history $v_0$.
\end{rem}

\begin{defi}[Subgame-perfect equilibrium]
	Let $G_{\|v_0}$ be a game.
	The strategy profile $\bsigma$ is a \emph{subgame-perfect equilibrium} --- or \emph{SPE} for short --- in $G_{\|v_0}$ if and only if for every history $h$ in $G_{\|v_0}$, the strategy profile $\bsigma_{\|h}$ is a Nash equilibrium in the subgame $G_{\|h}$.
\end{defi}

The notion of subgame-perfect equilibrium can be seen as a refinement of Nash equilibrium: it is a stronger equilibrium which excludes players resorting to non-credible threats.

\begin{exa}
In the game represented in Figure~\ref{fig_ne_spe}, where the square state is controlled by player $\Box$ and the round states by player $\Circle$, if both players get the payoff $1$ by reaching the state $d$ and $0$ in the other cases, there are actually two NEs: one, in blue, where $\Box$ goes to the state $b$ and then player $\Circle$ goes to $d$, and both win, and one, in red, where player $\Box$ goes to $c$ because player $\Circle$ was planning to go to $e$. However, only the blue one is an SPE, as moving from $b$ to $e$ is irrational for player $\Circle$ in the subgame $G_{\|ab}$.
\end{exa}

An $\epsilon$-SPE is a strategy profile which is \emph{almost} an SPE: if a player deviates after some history, they will not be able to improve their payoff by more than a quantity $\epsilon \geq 0$.

\begin{defi}[$\epsilon$-SPE]
	Let $G_{\|v_0}$ be a game, and $\epsilon \geq 0$. A strategy profile $\bsigma$ from $v_0$ is an $\epsilon$-SPE if and only if for every history $hv$, for every player $i$ and every strategy $\sigma'_i$, we have $\mu_i(\< \bsigma_{-i\|hv}, \sigma'_{i\|hv} \>) \leq \mu_i(\< \bsigma_{\|hv} \>) + \epsilon$.
\end{defi}

\begin{rem} \label{rem_0SPE}
    Note that a $0$-SPE is an SPE, and conversely.
\end{rem}

Hereafter, we focus on \emph{prefix-independent} games, and in particular \emph{mean-payoff} games.

        \subsection{Mean-payoff games}
	
	\begin{defi}[Mean-payoff, mean-payoff game]
	    In a graph $(V, E)$, we associate to each \emph{reward function} $r: E \to \Q$ the \emph{mean-payoff function}:
	    $$\MP_r: h_0 \dots h_n \mapsto \frac{1}{n} \underset{k=0}{\overset{n-1}{\sum}} r_i\left(h_k h_{k+1}\right).$$
		A game $G = \left(\Pi, V, (V_i)_i, E, \mu \right)$ is a \emph{mean-payoff game} if its underlying graph is finite, and if there exists a tuple $(r_i)_{i \in \Pi}$ of reward functions, such that for each player $i$ and every play $\rho$:
		$$\mu_i(\rho) = \liminf_{n \to \infty} \MP_{r_i}(\rho_{\leq n}).$$
	\end{defi}
	
In a mean-payoff game, the quantities given by the function $r_i$ represent the immediate reward that each action gives to player $i$.
The final payoff of player $i$ is their average payoff along the play, classically defined as the limit inferior over $n$ (since the limit may not be defined) of the average payoff after $n$ steps.
When the context is clear, we liberally write $\MP_i(h)$ for $\MP_{r_i}(h)$, and $\MP(h)$ for $(\MP_i(h))_i$, as well as $r(uv)$ for $(r_i(uv))_i$.

\begin{defi}[Prefix-independent game]
	A game $G$ is \emph{prefix-independent} if, for every history $h$ and for every play $\rho$, we have $\mu(h\rho) = \mu(\rho)$.
	We also say, in that case, that the payoff function $\mu$ is prefix-independent.
\end{defi}

\begin{figure}
    \begin{subfigure}[b]{0.35\textwidth}
        \centering
    	\begin{tikzpicture}[->,>=latex,shorten >=1pt, initial text={}, scale=0.75, every node/.style={scale=0.75}]
    	\node[initial above, state, rectangle] (a) at (0, 0) {$a$};
    	\node[state] (b) at (-2, -2) {$b$};
    	\node[state] (c) at (2, -2) {$c$};
    	\node[state, double] (d) at (-3, -4) {$d$};
    	\node[state] (e) at (-1, -4) {$e$};
    	\node[state] (f) at (1, -4) {$f$};
    	\node[state] (g) at (3, -4) {$g$};
    	\path[->, blue] (a) edge (b);
    	\path[->, red] (a) edge (c);
    	\path[->, red] (b) edge (e);
    	\path[->, blue] (b) edge (d);
    	\path[->, blue] (c) edge (f);
    	\path[->, red] (c) edge (g);
    	\path (d) edge [loop below] (d);
    	\path (e) edge [loop below] (e);
    	\path (f) edge [loop below] (f);
    	\path (g) edge [loop below] (g);
    	\end{tikzpicture}
    	\caption{Two NEs and one SPE}
    	\label{fig_ne_spe}
    \end{subfigure}
    \begin{subfigure}[b]{0.5\textwidth}
        \centering
		\begin{tikzpicture}[->,>=latex,shorten >=1pt, initial text={}, scale=0.8, every node/.style={scale=0.8}]
		\node[state] (a) at (0, 0) {$a$};
		\node[state] (c) at (-2, 0) {$c$};
		\node[state, rectangle] (b) at (2, 0) {$b$};
		\node[state, rectangle] (d) at (4, 0) {$d$};
		\path[->] (a) edge (c);
		\path[<->] (a) edge node[above] {$\stackrel{\playcircle}{0} \stackrel{\Box}{3}$} (b);
		\path[->] (b) edge (d);
		\path (d) edge [loop right] node {$\stackrel{\playcircle}{2} \stackrel{\Box}{2}$} (d);
		\path (c) edge [loop left] node {$\stackrel{\playcircle}{1} \stackrel{\Box}{1}$} (c);
		
		\node[red] (l0) at (-3, -0.7) {$(\lambda_0)$};
		\node[red] (l0a) at (0, -0.7) {$-\infty$};
		\node[red] (l0b) at (2, -0.7) {$-\infty$};
		\node[red] (l0c) at (-2, -0.7) {$-\infty$};
		\node[red] (l0d) at (4, -0.7) {$-\infty$};
		
		\node[red] (l1) at (-3, -1.4) {$(\lambda_1)$};
		\node[red] (l1a) at (0, -1.4) {$1$};
		\node[red] (l1b) at (2, -1.4) {$2$};
		\node[red] (l1c) at (-2, -1.4) {$1$};
		\node[red] (l1d) at (4, -1.4) {$2$};
		
		\node[red] (l2) at (-3, -2.1) {$(\lambda_2)$};
		\node[red] (l2a) at (0, -2.1) {$2$};
		\node[red] (l2b) at (2, -2.1) {$2$};
		\node[red] (l2c) at (-2, -2.1) {$1$};
		\node[red] (l2d) at (4, -2.1) {$2$};
		
		\node[red] (l3) at (-3, -2.8) {$(\lambda_3)$};
		\node[red] (l3a) at (0, -2.8) {$2$};
		\node[red] (l3b) at (2, -2.8) {$3$};
		\node[red] (l3c) at (-2, -2.8) {$1$};
		\node[red] (l3d) at (4, -2.8) {$2$};
		
		\node[red] (l4) at (-3, -3.5) {$(\lambda_4)$};
		\node[red] (l4a) at (0, -3.5) {$+\infty$};
		\node[red] (l4b) at (2, -3.5) {$+\infty$};
		\node[red] (l4c) at (-2, -3.5) {$1$};
		\node[red] (l4d) at (4, -3.5) {$2$};
		\end{tikzpicture}
		\caption{A game without SPE} 
		\label{fig_sans_spe}
    \end{subfigure}
    \caption{Two examples of games}
\end{figure}
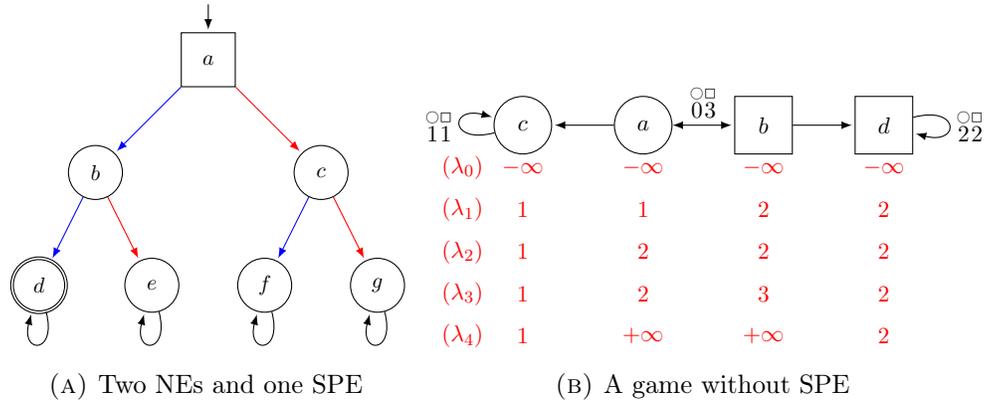

Mean-payoff games are prefix-independent.
A first important result that we need is the characterization of the set of possible payoffs in a mean-payoff game, which has been introduced in \cite{DBLP:conf/concur/ChatterjeeDEHR10}.
Given a graph $(V, E)$, we write $\SC(V, E)$ the set of simple cycles it contains.
Given a finite set $D$ of dimensions and a set $X \subseteq \R^D$, we write $\Conv X$ the convex hull of $X$.
We will often use the subscript notation $\Conv_{x \in X} f(x)$ for the set $\Conv f(X)$.

\begin{defi}[Downward sealing] \label{def_dseal}
    Given a set $Y \subseteq \R^D$, the \emph{downward sealing} of $Y$ is the set $\dseal Y = \left\{\left. \left( \min_{\bz \in Z} z_d \right)_{d \in D} ~\right|~ Z \mathrm{~is~a~finite~subset~of~} Y \right\}.$
\end{defi}

\begin{exa}
    In $\R^2$, if $Y$ is the blue area in Figure~\ref{fig_inf_spe_payoffs}, then $\dseal Y$ is obtained by adding the gray area.
\end{exa}

\begin{lemC}[\cite{DBLP:conf/concur/ChatterjeeDEHR10}]\label{lm_dseal}
    Let $G$ be a mean-payoff game, whose underlying graph is strongly connected.
    The set of the payoffs $\mu(\rho)$, where $\rho$ is a play in $G$, is exactly the set:
    $$\dseal\left( \underset{c \in \SC(V, E)}{\Conv} \MP(c) \right).$$
\end{lemC}

        \subsection{The \texorpdfstring{$\epsilon$}{ε}-SPE threshold problem}

In the sequel, we prove the decidability of the \emph{$\epsilon$-SPE threshold problem}, which is a generalization of the \emph{SPE threshold problem} (since SPEs are $0$-SPEs and conversely, by Remark~\ref{rem_0SPE}), defined as follows.

\begin{defi}[$\epsilon$-SPE threshold problem]
    Given a rational number $\epsilon \geq 0$, a mean-payoff game $G_{\|v_0}$ and two thresholds $\bx, \by \in \Q^\Pi$, does there exist an $\epsilon$-SPE $\bsigma$ in $G_{\|v_0}$ such that $\bx \leq \mu(\< \bsigma \>) \leq \by$?
\end{defi}

That problem is illustrated by the two following examples.

\begin{exa}[A game without SPEs]\label{ex_sans_spe}
	Let $G$ be the mean-payoff game of Figure~\ref{fig_sans_spe}, where each edge is labelled by the rewards $r_\playcircle$ and $r_\Box$. No reward is given for the edges $ac$ and $bd$ since they can be used only once, and therefore do not influence the final payoff.
	For now, the reader should not pay attention to the red labels below the states.
	As shown in~\cite{DBLP:journals/corr/Bruyere0PR16}, this game does not have any SPE, neither from the state $a$ nor from the state $b$.
	
	Indeed, the only NE outcomes from the state $b$ are the plays where player $\Box$ eventually leaves the cycle $ab$ and goes to $d$: if he stays in the cycle $ab$, then player $\Circle$ would be better off leaving it, and if she does, player $\Box$ would be better off leaving it before.
    From the state $a$, if player $\Circle$ knows that player $\Box$ will leave, she has no incentive to do it before: there is no NE where $\Circle$ leaves the cycle and $\Box$ plans to do it if ever she does not. Therefore, there is no SPE where $\Circle$ leaves the cycle.
    But then, after a history that terminates in $b$, player $\Box$ has actually no incentive to leave if player $\Circle$ never plans to do it afterwards: contradiction.
\end{exa}

\begin{figure}
	\begin{subfigure}[b]{0.5\textwidth}
	    \centering
		\begin{tikzpicture}[->,>=latex,shorten >=1pt, initial text={}, scale=0.8, every node/.style={scale=0.8}]
		\node[state] (a) at (0, 0) {$a$};
		\node[state, rectangle] (b) at (3, 0) {$b$};
		\path[<->] (a) edge node[above, rectangle] {$\stackrel{\playcircle}{2}\stackrel{\Box}{2}$} (b);
		\path (a) edge [loop left] node {$\stackrel{\playcircle}{0}\stackrel{\Box}{1}$} (a);
		\path (b) edge [loop right] node {$\stackrel{\playcircle}{1}\stackrel{\Box}{0}$} (b);
		\end{tikzpicture}
		\caption{The game $G$}
		\label{fig_inf_spe}
	\end{subfigure}
	\begin{subfigure}[b]{0.3\textwidth}
	    \centering
		\begin{tikzpicture}[scale=0.9]
		\draw [->] (0,0) -- (2.3,0);
		\draw (2.3,0) node[right] {${\Circle}$};
		\draw [->] (0,0) -- (0,2.3);
		\draw (0,2.3) node[above] {${\Box}$};
		\fill [blue] (0, 1) -- (1, 0) -- (2, 2);
		\fill [gray!50] (0, 1) -- (1, 0) -- (0, 0);
		\draw (0, 0) grid (2, 2);
		\foreach \x in {1,2} \draw(0,\x)node[left]{\x};
		\foreach \x in {0,1,2} \draw(\x,0)node[below]{\x};
		
		\draw [red, very thick] (1, 1) -- (1, 1.5) -- (2, 2) -- (1.5, 1) -- (1, 1);
		\end{tikzpicture}
	    \caption{The payoffs of plays and SPE outcomes in $G$} \label{fig_inf_spe_payoffs}
	\end{subfigure}
	
	\caption{A game with an infinity of SPEs}
\end{figure}
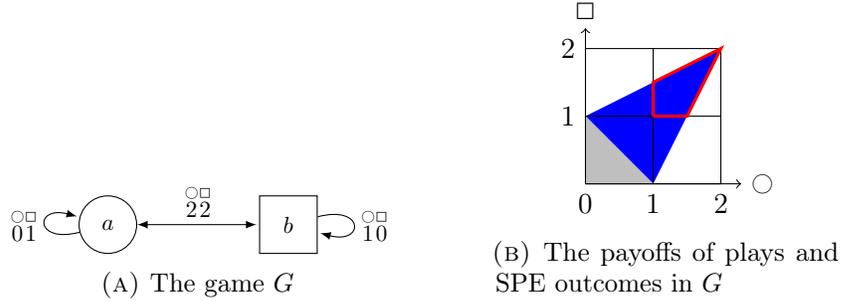

\begin{exa}[A game where SPEs require infinite memory] \label{ex_inf_spe}
    Let us now study the game of Figure~\ref{fig_inf_spe}.
    By Lemma~\ref{lm_dseal}, the payoffs of possible plays in that game correspond to the gray and blue areas in Figure~\ref{fig_inf_spe_payoffs}.
	Indeed, following exclusively one of the three simple cycles $a$, $ab$ and $b$ of the game graph during a play yields the payoffs $01, 10$ and $22$, respectively.
	By combining those cycles with well chosen frequencies, one can obtain any payoff in the convex hull of those three points.
	Now, it is also possible to obtain the point $00$ by using the properties of the limit inferior: it is for instance the payoff of the play $a^2 b^4 a^{16} b^{256} \dots a^{2^{2^n}} b^{2^{2^{n+1}}} \dots$.
	In fact, one can construct a play that yields any payoff in the convex hull of the four points $00, 10, 01$, and $22$.

	We claim that the payoffs of SPEs plays correspond to the red-circled area in Figure~\ref{fig_inf_spe_payoffs}: there exists an SPE $\bsigma$ in $G_{\|a}$ with $\< \bsigma \> = \rho$ if and only if $\mu_{\Box}(\rho), \mu_\playcircle(\rho) \geq 1$.
	That statement will be a direct consequence of the results we show in the remaining sections, but let us give a first intuition: a play with such a payoff necessarily uses infinitely often both states. It is an NE outcome because none of the players can get a better payoff by looping forever on their state, and they can both force each other to follow that play, by threatening them to loop for ever on their state whenever they can. But such a strategy profile is clearly not an SPE.
	
	It can be transformed into an SPE as follows: when a player deviates, say player $\Box$, then player $\Circle$ can punish him by looping on $a$, not forever, but a large number of times, until player $\Box$'s mean-payoff gets very close to $1$. 
	Afterwards, both players follow again the play that was initially planned. 
	Since that threat is temporary, it does not affect player $\Circle$'s payoff on the long term, but it really punishes player $\Box$ if that one tries to deviate infinitely often.
	
	Not that such an SPE requires infinite memory.
\end{exa}

        \subsection{Two-player zero-sum games}

The concept of SPEs has been designed for non-zero-sum games with arbitrarily many players, but the methods we will present in the sequel will bring us back to the more classical framework of two-player zero-sum games, with more complex payoff functions.
We will therefore need the following notions and results.

\begin{defi}[Zero-sum game]
	A game $G$, with $\Pi = \{1, 2\}$, is \emph{zero-sum} if $\mu_2 = -\mu_1$.
\end{defi}
	
\begin{defi}[Borel game]
	A game $G$ is \emph{Borel} if the function $\mu$, from the set $V^\omega$ equipped with the product topology to the Euclidian space $\R^\Pi$, is Borel, i.e. if, for every Borel set $B \subseteq \R^\Pi$, the set $\mu^{-1}(B)$ is Borel.
\end{defi}

\begin{rem}
    Mean-payoff games are Borel.
\end{rem}
	
\begin{lemC}[\cite{BorelDeterminacy}] \label{lm_borel_determinacy}
	Let $G_{\|v_0}$ be a zero-sum Borel game, with $\Pi = \{1, 2\}$. Then, we have the following equality:
	$$\sup_{\sigma_1} ~ \inf_{\sigma_2} ~ \mu_1(\< \bsigma \>) = \inf_{\sigma_2} ~ \sup_{\sigma_1} ~ \mu_1(\< \bsigma \>).$$
\end{lemC}

That quantity is called \emph{value} of $G_{\|v_0}$, denoted by $\val_1(G_{\|v_0})$.
\emph{Solving} a game $G_{\|v_0}$ means computing its value.

\begin{defi}[Optimal strategy]
    Let $G_{\|v_0}$ be a zero-sum Borel game, with $\Pi = \{1, 2\}$.
    The strategy $\sigma_1$ is \emph{optimal} in $G_{\|v_0}$ if $\sup_{\sigma_2} \mu_1(\< \bsigma \>) = \val_1(G_{\|v_0})$.
\end{defi}

Now, let us define memoryless strategies, and state a condition under which they can be optimal.

\begin{defi}[Memoryless strategy]
	A strategy $\sigma_i$ in a game $G_{\|v_0}$ is \emph{memoryless} if for all vertices $v \in V_i$ and for all histories $h$ and $h'$, we have $\sigma_i(hv) = \sigma_i(h'v)$.
\end{defi}

For every game $G_{\|v_0}$ and each player $i$, we write $\ML_i\left(G_{\|v_0}\right)$, or $\ML\left(G_{\|v_0}\right)$ when the context is clear, for the set of memoryless strategies for player $i$ in $G_{\|v_0}$.

\begin{defi}[Shuffling]
    Let $\rho, \eta$ and $\theta$ be three plays in a game $G$.
    The play $\theta$ is a \emph{shuffling} of $\rho$ and $\eta$ if there exist two sequences of indices $k_0 < k_1 < \dots$ and $\l_0 < \l_1 < \dots$ such that $\eta_0 = \rho_{k_0} = \eta_{\l_0} = \rho_{k_1} = \dots$, and:
    $$\theta = \rho_0 \dots \rho_{k_0-1} \eta_0 \dots \eta_{\l_0-1} \rho_{k_0} \dots \rho_{k_1-1} \eta_{\l_0} \dots \eta_{\l_1-1} \dots.$$
\end{defi}

\begin{defi}[Convexity, concavity]
    A payoff function $\mu_i: \Plays G \to \R$ is \emph{convex} if for every shuffling $\theta$ of two plays $\rho$ and $\eta$, we have $\mu_i(\theta) \geq \min\{\mu_i(\rho), \mu_i(\eta)\}$.
    It is \emph{concave} if $-\mu_i$ is convex.
\end{defi}

\begin{rem}
    Mean-payoff functions are convex.
\end{rem}

\begin{lem} \label{lm_memoryless}
    In a two-player zero-sum game played on a finite graph, every player whose payoff function is concave has an optimal strategy that is memoryless.
\end{lem}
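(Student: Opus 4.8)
The plan is to fix a two-player zero-sum game $G$ on a finite graph where player~2's payoff function $\mu_2$ is concave (equivalently, $\mu_1 = -\mu_2$ is convex), and to show player~2 has a memoryless optimal strategy. By Borel determinacy (Lemma~\ref{lm_borel_determinacy}) the value $\val_1(G_{\|v})$ is well-defined at every vertex $v$; call it $\val(v)$ for short. First I would record the standard local consistency of the value: for $v \in V_2$ we have $\val(v) = \min_{vw \in E} \val(w)$ (player~2, the minimizer of $\mu_1$, picks the best successor), and for $v \in V_1$ we have $\val(v) = \max_{vw \in E} \val(w)$. This follows from the one-step shift of the sup/inf game, which works at the level of values without yet committing to memoryless strategies. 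Define the candidate memoryless strategy $\sigma_2$ by choosing, at each $v \in V_2$, some successor $w$ with $\val(w) = \val(v)$; I must show $\sup_{\sigma_1} \mu_1(\langle \sigma_1, \sigma_2\rangle) \le \val(v_0)$ for every initial $v_0$.

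The key step is to analyze an arbitrary play $\rho$ compatible with $\sigma_2$ and bound $\mu_1(\rho)$ by $\val(\first(\rho))$. Along such a play, by local consistency the value is non-increasing at player-2 moves (in fact constant, by our choice) and non-decreasing at player-1 moves; hence the sequence $\val(\rho_0), \val(\rho_1), \dots$ is eventually monotone? — not quite, since player~1 can raise it back. The right observation is that player~1 can only ever move to a vertex of value $\le \val(\rho_0)$ is false too; rather, I would argue that $\rho$ must eventually stay inside the set of vertices reachable from $\rho_0$ while keeping the value along $\rho$ bounded by $\val(\rho_0)$ — more precisely, every vertex visited has value $\le \val(\rho_0)$, because $\sigma_2$ never increases it and player~1's moves keep it $\le \max$-successor $= \val$ of the current vertex. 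Wait: this gives $\val(\rho_n) \le \val(\rho_0)$ for all $n$ only if player~1's moves don't increase it, which is false. So instead I would use a coinductive / shuffling argument: decompose any compatible play as a shuffle of finitely many "plateau" plays, each of which stays at a fixed value level, and invoke convexity of $\mu_1$ to bound the payoff of the shuffle by the minimum of the levels, which is $\le \val(\rho_0)$.

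Concretely, the cleanest route is: let $m = \min_n \val(\rho_n)$ (attained, as the graph is finite). Since $\sigma_2$ keeps the value constant at player-2 vertices, and player-1 moves never decrease it, once the value along $\rho$ reaches $m$ it never drops below $m$; and from any vertex of value $m$, one can extend by $\sigma_2$-compatible plays all of whose values equal $m$ — these are "pure level-$m$" continuations whose $\mu_1$-value is exactly $m$ (using that at a vertex $v$ of value $m$ with $v \in V_2$, $\sigma_2$ goes to value $m$, and at $v \in V_1$ there is a successor of value $m$, so there is a play staying at level $m$, and by definition of value such a play has payoff $\ge m$ for player~1 only if ... — here I would appeal to the sub-case that restricting the game to the value-$m$ region gives a game of value $m$ where every vertex has value $m$, forcing payoff $= m$ on any play). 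Finally, the actual play $\rho$ is a shuffling of a level-$m$ play $\eta$ (the tail behaviour) and a play of value $\le \val(\rho_0)$; by convexity of $\mu_1$, $\mu_1(\rho) \ge \min\{\ldots\}$ is the wrong direction — I need an \emph{upper} bound, so I would instead play this for the \emph{opponent}: show every play compatible with $\sigma_2$ has payoff $\le \val(\rho_0)$ by contradiction, extracting from a play of payoff $> \val(\rho_0)$ a cycle of high average that player~1 could repeat, contradicting that the value-$m$ subgame has value exactly $\val(\rho_0)$ at the relevant vertex. The main obstacle, which I expect to occupy most of the proof, is exactly this point: converting the "eventual monotonicity of values along a $\sigma_2$-play" into a genuine payoff bound, which is where concavity/convexity of $\mu_i$ must be used (to handle the shuffling of segments living at different value levels) rather than just the local max/min value equations.
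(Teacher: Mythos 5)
There is a genuine gap, and it sits exactly where you yourself locate it. Your plan is to define $\sigma_2$ by ``move to any successor of equal value'' and then to prove that every play compatible with $\sigma_2$ has payoff at most $\val(\rho_0)$. That candidate strategy is simply not optimal, even in the prefix-independent mean-payoff setting the paper cares about: take a one-player game for the minimizer with vertices $a,b$, a self-loop on $a$ of reward $1$, an edge $a\to b$ of reward $0$, and a self-loop on $b$ of reward $0$. Then $\val(a)=\val(b)=0$, so the self-loop at $a$ is a value-preserving choice, yet the resulting play $a^\omega$ has payoff $1$. So local value-consistency cannot be the whole story; the selection among value-preserving successors matters, and nothing in your sketch addresses it. The subsequent attempts to repair the payoff bound do not close the gap: convexity of $\mu_1$ gives $\mu_1(\theta)\geq\min\{\mu_1(\rho),\mu_1(\eta)\}$ for a shuffling $\theta$, which, as you note, is an inequality in the wrong direction for the upper bound you need on $\sigma_2$-compatible plays; and the final ``extract a cycle of high average and repeat it'' argument is both undeveloped and specific to mean-payoff, whereas the lemma is stated for arbitrary concave payoff functions. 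A secondary issue: the local equations $\val(v)=\min_{vw\in E}\val(w)$ (resp.\ $\max$) require prefix-independence of $\mu$, which is not among the hypotheses of the lemma.

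For comparison, the paper does not argue from scratch at all: it observes that for a concave $\mu_i$ each threshold objective $\{\rho \mid \mu_i(\rho)\geq\alpha\}$ is a concave winning condition in the sense of Kopczy\'nski, invokes his half-positional determinacy theorem to get a memoryless strategy ensuring each achievable threshold, and then uses finiteness of the graph (hence of the set of memoryless strategies) to conclude that the supremum over thresholds is attained by a single memoryless strategy. The difficulty you ran into --- turning ``the value never degrades along the play'' into an actual bound on the payoff of the infinite play --- is essentially the content of that cited theorem; if you want a self-contained proof you would need to reprove it, which requires a more careful strategy construction than value-preservation alone.
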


\begin{proof}
    According to \cite{DBLP:conf/icalp/Kopczynski06}, this result is true for qualitative objectives, i.e. when $\mu$ can only take the values $0$ and $1$.
    It follows that for every $\alpha \in \R$, if a player $i$, whose payoff function is concave, has a strategy that ensures $\mu_i(\rho) \geq \alpha$ (understood as a qualitative objective), then they have a memoryless one.
    Hence the equality:
    $$\val_1(G_{\|v_0}) = \sup_{\sigma_1 \in \ML(G_{\|v_0})} ~ \inf_{\sigma_2} ~ \mu_1(\< \bsigma \>).$$
    Since the underlying graph $(V, E)$ is assumed to be finite, there exists a finite number of memoryless strategies, hence the infimum above is realized by a memoryless strategy $\sigma_1$ that is, therefore, finite.
\end{proof}

	\section{Requirements and negotiation} \label{sec_negotiation}

We will now see that SPEs are strategy profiles that respect some \emph{requirements} about the payoffs, depending on the states they traverse.
In this part, we develop the notions of \emph{requirement} and \emph{negotiation}.

		\subsection{Requirement}

In the method we will develop further, we will need to analyze the players' behaviours when they have some \emph{requirement} to satisfy.
Intuitively, one can see requirements as \emph{rationality constraints} for the players, that is, a threshold payoff value under which a player will not accept to follow a play.
In all what follows, $\bR$ denotes the set $\R \cup \{\pm \infty\}$.

\begin{defi}[Requirement]
	A \emph{requirement} on the game $G$ is a mapping $\lambda: V \to \bR$.
\end{defi}

For a given state $v$, the quantity $\lambda(v)$ represents the minimal payoff that the player controlling $v$ will require in a play beginning in $v$.

\begin{defi}[$\lambda$-consistency]
	Let $\lambda$ be a requirement on a game $G$. A play $\rho$ in $G$ is \emph{$\lambda$-consistent} if and only if, for all $i \in \Pi$ and $n \in \N$ with $\rho_n \in V_i$, we have $\mu_i(\rho_{\geq n})~\geq~\lambda(\rho_n)$.
	The set of the $\lambda$-consistent plays from a state $v$ is denoted by $\lCons(v)$.
\end{defi}

\begin{defi}[$\lambda$-rationality]
	Let $\lambda$ be a requirement on a mean-payoff game $G$. Let $i \in \Pi$. A strategy profile $\bsigma_{-i}$ is \emph{$\lambda$-rational} if and only if there exists a strategy $\sigma_i$ such that, for every history $hv$ compatible with $\bsigma_{-i}$, the play $\< \bsigma_{\|hv} \>$ is $\lambda$-consistent.
	We then say that the strategy profile $\bsigma_{-i}$ is $\lambda$-rational \emph{assuming} $\sigma_i$.
	The set of $\lambda$-rational strategy profiles in $G_{\|v}$ is denoted by $\lRat(v)$.	
\end{defi}
	
Note that $\lambda$-rationality is a property of a strategy profile for all the players but one, player $i$. Intuitively, their rationality is justified by the fact that they collectively assume that player $i$ will, eventually, play according to the strategy $\sigma_i$: if it is the case, then everyone gets their payoff satisfied.

Finally, let us define a particular requirement: the \emph{vacuous requirement}, that requires nothing, and with which every play is consistent.

\begin{defi}[Vacuous requirement]
	In any game, the \emph{vacuous requirement}, denoted by $\lambda_0$, is the requirement constantly equal to $-\infty$.
\end{defi}

		\subsection{Negotiation} \label{ss_def_nego}

We will show that SPEs in prefix-independent games are characterized by the fixed points of a function on requirements. That function captures a \emph{negotiation} process: when a player has a requirement to satisfy, another player can hope a better payoff than what they can secure in general, and therefore update their own requirement.
Note that we always use the convention $\inf \emptyset = +\infty$.

\begin{defi}[Negotiation function, steady negotiation] \label{def_nego}
	Let $G$ be a game.
	The \emph{negotiation function} is the function that transforms any requirement $\lambda$ on $G$ into a requirement $\nego(\lambda)$ on $G$, such that for each $i \in \Pi$ and $v \in V_i$, we have:
	$$\nego(\lambda)(v) = \inf_{\bsigma_{-i} \in \lRat(v)} \sup_{\sigma_i} \mu_i(\< \bsigma\>).$$
	If that infimum is realized for every $\lambda$, $i$ and $v \in V_i$ such that $\lRat(v) \neq \emptyset$, then the game $G$ is called a game \emph{with steady negotiation}\footnote{The reader having read \cite{Concur} may note that the definition of steady negotiation has been relaxed here.
	The proof of Theorem~\ref{thm_spe} has been modified to fit with this new definition.}.
\end{defi}

\begin{rem}
    The negotiation function satisfies the following properties.
    
    \begin{itemize}
        \item It is monotone: if $\lambda \leq \lambda'$ (for the pointwise order, i.e. if for each $v$, $\lambda(v) \leq \lambda'(v)$), then $\nego(\lambda) \leq \nego(\lambda')$.
        
        \item It is also non-decreasing: for every $\lambda$, we have $\lambda \leq \nego(\lambda)$.
        
        \item There exists a $\lambda$-rational strategy profile from $v$ against the player controlling $v$ if and only if $\nego(\lambda)(v) \neq +\infty$.
    \end{itemize}
\end{rem}

In the general case, the quantity $\nego(\lambda)(v)$ represents the worst case value that the player controlling $v$ can ensure, assuming that the other players play $\lambda$-rationally.

\begin{exa}
    Let us consider the game of Example~\ref{ex_sans_spe}: in Figure~\ref{fig_sans_spe}, on the two first lines below the states, we present the requirements $\lambda_0$ and $\lambda_1 = \nego(\lambda_0)$, which is easy to compute since any strategy profile is $\lambda_0$-rational: for each $v$, $\lambda_1(v)$ is the classical \emph{worst-case value} or \emph{antagonistic value} of $v$, i.e. the best value the player controlling $v$ can enforce against a fully hostile environment. Let us now compute the requirement $\lambda_2 = \nego(\lambda_1)$.
	
	From $c$, there exists exactly one $\lambda_1$-rational strategy profile $\bsigma_{-\playcircle} = \sigma_\Box$, which is the empty strategy since player $\Box$ has never to choose anything. Against that strategy, the best and the only payoff player $\Circle$ can get is $1$, hence $\lambda_2(c) = 1$.
	For the same reasons, $\lambda_2(d) = 2$.
	
	From $b$, player $\Circle$ can force $\Box$ to get the payoff $2$ or less, with the strategy profile $\sigma_\playcircle: h \mapsto c$. Such a strategy is $\lambda_1$-rational, assuming the strategy $\sigma_\Box: h \mapsto d$. Therefore, we have $\lambda_2(b) = 2$.
		
	Finally, from $a$, player $\Box$ can force $\Circle$ to get the payoff $2$ or less, with the strategy profile $\sigma_\Box: h \mapsto d$. Such a strategy is $\lambda_1$-rational, assuming the strategy $\sigma_\playcircle: h \mapsto c$. But, he cannot force her to get less than the payoff $2$, because she can force the access to the state $b$, and the only $\lambda_1$-consistent plays from $b$ are the plays with the form $(ba)^k b d^\omega$. Therefore, $\lambda_2(a) = 2$.
\end{exa}

It will be proved in Section~\ref{sec_concrete_game} that mean-payoff games are with steady negotiation.

		\subsection{Link with Nash equilibria}

Requirements and the negotiation function are able to capture Nash equilibria.
Indeed, if $\lambda_0$ is the vacuous requirement, then $\nego(\lambda_0)$ characterizes the NE outcomes, in the following formal sense:

\begin{thm}\label{thm_ne}
	Let $G$ be a game with steady negotiation.
	Then, a play $\rho$ in $G$ is an NE outcome if and only if $\rho$ is $\nego(\lambda_0)$-consistent.
\end{thm}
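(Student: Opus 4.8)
The plan is to prove the two implications separately, using the fact that $\lambda_0$-rationality is vacuous, so that $\lRat(v)$ is the set of \emph{all} strategy profiles $\bsigma_{-i}$ for every $v$, and $\nego(\lambda_0)(v)$ is exactly the classical worst-case (antagonistic) value of the vertex $v$ for the player controlling it. This is where the hypothesis of steady negotiation is used: the infimum in the definition of $\nego(\lambda_0)(v)$ is realized, so there is an actual strategy profile of the environment witnessing the worst-case value against the player who owns $v$.

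\textbf{From NE outcome to $\nego(\lambda_0)$-consistency.} Suppose $\rho = \< \bsigma \>$ for an NE $\bsigma$ in $G_{\|v_0}$ (note $\rho_0 = v_0$). Fix a player $i$ and an index $n$ with $\rho_n \in V_i$; I must show $\mu_i(\rho_{\geq n}) \geq \nego(\lambda_0)(\rho_n)$, i.e. $\mu_i(\rho) \geq \nego(\lambda_0)(\rho_n)$ by prefix-independence. Since $\rho_n$ lies on the outcome, the prefix $h = \rho_{\leq n}$ is compatible with $\bsigma$. If player $i$ had a profitable deviation in the subgame $G_{\|h}$ reaching payoff strictly above $\mu_i(\rho)$, I would lift it to a deviation $\sigma_i'$ in $G_{\|v_0}$ that agrees with $\sigma_i$ along $h$ and then deviates; this contradicts the NE condition. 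Hence no strategy of $i$ from $\rho_n$ against $\bsigma_{-i\|h}$ beats $\mu_i(\rho)$, so $\sup_{\sigma_i} \mu_i(\<\sigma_i, \bsigma_{-i\|h}\>) \leq \mu_i(\rho)$; since $\bsigma_{-i\|h} \in \lRat(\rho_n) = $ all profiles, the infimum defining $\nego(\lambda_0)(\rho_n)$ is at most this quantity, giving $\nego(\lambda_0)(\rho_n) \leq \mu_i(\rho)$.

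\textbf{From $\nego(\lambda_0)$-consistency to NE outcome.} Given a $\nego(\lambda_0)$-consistent play $\rho$ from $v_0$, I construct an NE $\bsigma$ with outcome $\rho$ by the standard ``punishment'' argument. Each player follows $\rho$ as long as everyone does; as soon as some player $j$ is the first to deviate (identified as the unique player who owns the vertex where the play left $\rho$), all other players switch to a memoryless optimal strategy of the zero-sum game in which they coalesce against $j$ — which exists by Lemma~\ref{lm_memoryless} since $-\mu_j$ is convex on the coalition's side, equivalently $\mu_j$ is concave for the coalition — thereby holding $j$'s payoff down to the worst-case value $\nego(\lambda_0)(w)$ of the vertex $w \in V_j$ from which $j$ deviated. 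One then checks this is an NE: a unilateral deviation by player $i$ produces a play whose payoff for $i$ is at most $\nego(\lambda_0)(w) \leq \mu_i(\rho)$, the last inequality being exactly $\nego(\lambda_0)$-consistency of $\rho$ at the vertex $w \in V_i$ where $i$ deviates (using prefix-independence to compare $\mu_i$ of the deviating play with $\mu_i$ of its suffix from $w$), while not deviating yields exactly $\mu_i(\rho)$.

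\textbf{Main obstacle.} The delicate point is the second direction: ensuring the punishment profile is well-defined and genuinely caps the deviator's payoff. One must argue that after the first deviation the coalition can indeed coordinate a single memoryless profile realizing the antagonistic value simultaneously against whichever single player deviated and from whichever vertex, and that prefix-independence lets one ignore the finite prefix played before the deviation when evaluating mean-payoffs. I would isolate this as a lemma: for each vertex $w$ there is a memoryless strategy profile of $\Pi \setminus \{j\}$ (where $w \in V_j$) guaranteeing $\mu_j \leq \nego(\lambda_0)(w)$ against every strategy of $j$, which follows from Lemma~\ref{lm_memoryless} applied to the two-player zero-sum game pitting $j$ against the coalition, together with steady negotiation to know the value is attained and equals $\nego(\lambda_0)(w)$.
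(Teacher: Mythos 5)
Your two directions follow the same skeleton as the paper's proof: the forward direction (lifting a profitable subgame deviation along an on-path prefix to contradict the NE condition, then bounding $\nego(\lambda_0)(\rho_k)$ by the $\sup$ against $\bsigma_{-i\|\rho_{\leq k}}$) is essentially identical, and the backward direction is the same trigger-and-punish construction. The one place you diverge is in how you obtain the punishing profiles, and there your route is both unnecessary and too narrow for the statement. The theorem is claimed for an arbitrary game with steady negotiation, and steady negotiation alone already hands you, for each vertex $w\in V_j$, a strategy profile $\btau^w_{-j}\in\lambda_0\Rat(w)$ realizing $\sup_{\tau_j}\mu_j(\<\btau^w_{-j},\tau_j\>)=\nego(\lambda_0)(w)$; since a fresh profile may be used after each first-deviation history, nothing needs to be memoryless and no coordination issue arises --- your ``main obstacle'' dissolves. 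By instead routing the existence of the punishment through Lemma~\ref{lm_memoryless}, you import hypotheses (finite graph, concavity of the coalition's payoff $-\mu_j$, i.e.\ convexity of $\mu_j$) that hold for mean-payoff games but are not part of ``game with steady negotiation,'' so as written your second direction proves a weaker statement. You also state the convexity condition backwards: Lemma~\ref{lm_memoryless} requires the coalition's payoff $-\mu_j$ to be \emph{concave}, equivalently $\mu_j$ \emph{convex}, not the other way around. Finally, both your argument and the paper's silently use prefix-independence to identify $\mu_i(\rho_{\geq n})$ with $\mu_i(\rho)$; you at least make this explicit, which is a small improvement in rigor, but the fix for the main issue is simply to replace the appeal to Lemma~\ref{lm_memoryless} by a direct appeal to the realized infimum in the definition of steady negotiation.
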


\begin{proof} \hfill
\begin{itemize}
	\item Let $\bsigma$ be a Nash equilibrium in $G_{\|v_0}$, for some state $v_0$, and let $\rho = \< \bsigma \>$ : let us prove that the play $\rho$ is $\nego(\lambda_0)$-consistent.
	Let $k \in \N$, let $i \in \Pi$ be such that $\rho_k \in V_i$, and let us prove that $\mu_i\left(\rho_{\geq k}\right) \geq \nego(\lambda_0)(\rho_k)$.
	For any deviation $\sigma'_i$ of $\sigma_{i\|\rho_{\leq k}}$, by definition of NEs, we have $\mu_i\left(\< \bsigma_{-i\|\rho_{\leq k}}, \sigma'_i \>\right) \leq \mu_i(\rho)$.
	Therefore, we have $\mu_i(\rho) \geq \sup_{\sigma'_i} \mu_i\left(\< \bsigma_{-i\|\rho_{\leq k}}, \sigma'_i \>\right)$, hence $\mu_i(\rho) \geq \inf_{\btau_{-i}} ~\sup_{\tau_i} ~ \mu_i\left(\< \btau_{-i\|\rho_{\leq k}}, \tau_i \>\right)$, i.e.	$\mu_i(\rho) \geq \nego(\lambda_0)(\rho_k)$.

	\item Conversely, let $\rho$ be a $\nego(\lambda_0)$-consistent play from a state $v_0$. Let us define a strategy profile $\bsigma$ such that $\< \bsigma \> = \rho$, by:
	
	\begin{itemize}
		\item $\< \bsigma \> = \rho$;
		
		\item for each history of the form $\rho_0 \dots \rho_k v$ with $v \neq \rho_{k+1}$, let $i$ be the player controlling $\rho_k$.
		Since the game $G$ is with steady negotiation, the infimum:
		$$\inf_{\btau_{-i} \in \lambda_0\Rat(\rho_k)}~ \sup_{\tau_i}~ \mu_i(\< \btau \>) = \nego(\lambda_0)(v) \neq +\infty$$
		is a minimum.
		Let $\btau^k_{-i}$ be $\lambda_0$-rational strategy profile from $\rho_k$ realizing that minimum, and let $\tau^k_i$ be some strategy from $\rho_k$ such that $\tau^k_i(\rho_k) = v$. Then, we define:
		$$\< \bsigma_{\|\rho_0 \dots \rho_k v} \> = \< \btau^k_{\rho_k v} \>;$$
		
		\item for every other history $h$, the state $\bsigma(h)$ is defined arbitrarily.
	\end{itemize}

	Let us prove that $\bsigma$ is an NE: let $\sigma'_i$ be a deviation of $\sigma_i$, let $\rho' = \< \bsigma_{-i}, \sigma'_i \>$ and let $\rho_0 \dots \rho_k$ be the longest common prefix of $\rho$ and $\rho'$. Let $v = \rho'_{k+1}$.
	Then, we have:
	$$\mu_i(\rho') \leq \sup_{\tau_i^k} ~\mu_i\left(\< \btau^k \>\right) = \nego(\lambda_0)(\rho_k),$$
	and since $\rho$ is $\lambda_0$-consistent, we have $\nego(\lambda_0)(\rho_k) \leq \mu_i(\rho)$, hence $\mu_i(\rho') \leq \mu_i(\rho)$. \qedhere
\end{itemize}
\end{proof}

\begin{exa}
    Let us consider again the game of Example~\ref{ex_sans_spe}, with the requirement $\lambda_1$ given in Figure~\ref{fig_sans_spe}. The only $\lambda_1$-consistent plays in this game, starting from the state $a$, are $ac^\omega$, and $(ab)^k d^\omega$ with $k \geq 1$. One can check that those plays are exactly the NE outcomes in that game.
\end{exa}

In the following section, we will prove that as well as $\nego(\lambda_0)$ characterizes the NEs, the requirement that is the least fixed point of the negotiation function characterizes the SPEs.

	\section{Link between negotiation and SPEs} \label{sec_link_nego_spe}

The notion of negotiation will enable us to find the SPEs, but also more generally the $\epsilon$-SPEs, in a game. For that purpose, we need the notion of $\epsilon$-fixed points of a function.

\begin{defi}[$\epsilon$-fixed point]
	Let $\epsilon \geq 0$, let $D$ be a finite set and let $f: \bR^D \to \bR^D$ be a mapping. A tuple $\bx \in \R^D$ is a \emph{$\epsilon$-fixed point} of $f$ if for each $d \in D$, for $\by = f(\bx)$, we have $y_d \in [x_d - \epsilon, x_d + \epsilon]$.
\end{defi}

\begin{rem}
    A $0$-fixed point is a fixed point, and conversely.
\end{rem}

By Tarski's fixed point theorem, the negotiation function, which is a monotone function from a complete lattice to itself, has a least fixed point.
That result can be generalized to $\epsilon$-fixed points.

\begin{lem} \label{lm_least_fixed point}
	Let $\epsilon \geq 0$.
	On every game, the function $\nego$ has a least $\epsilon$-fixed point.
\end{lem}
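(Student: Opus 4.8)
The plan is to reduce the statement to the classical Knaster--Tarski theorem applied to a suitably chosen complete lattice, exactly as the paragraph preceding the lemma suggests for the case $\epsilon = 0$, and then to handle the $\epsilon$-fixed point case by a small trick that turns $\epsilon$-fixed points of $\nego$ into genuine fixed points of a related monotone operator. First I would fix $\epsilon \geq 0$ and recall the setting: a requirement is a map $\lambda \colon V \to \bR$, and the set $\bR^V$ of all requirements, ordered pointwise, is a complete lattice (it is a product of copies of the complete lattice $\bR = \R \cup \{\pm\infty\}$, with suprema and infima computed coordinatewise). The operator $\nego$ is monotone on this lattice by the first item of the remark following Definition~\ref{def_nego}. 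For $\epsilon = 0$ this already gives the claim by Knaster--Tarski, so the content of the lemma is really the case $\epsilon > 0$.

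For the case $\epsilon > 0$, the key observation is that $\lambda$ is an $\epsilon$-fixed point of $\nego$ precisely when $\nego(\lambda)(v) \leq \lambda(v) + \epsilon$ for all $v$ and $\nego(\lambda)(v) \geq \lambda(v) - \epsilon$ for all $v$; but $\nego$ is non-decreasing (second item of the remark), so $\nego(\lambda) \geq \lambda \geq \lambda - \epsilon$ always holds, and the second inequality is automatic. Hence $\lambda$ is an $\epsilon$-fixed point if and only if $\nego(\lambda) \leq \lambda + \epsilon$, i.e. if and only if $\lambda$ is a \emph{pre-fixed point} of the monotone operator $f_\epsilon \colon \bR^V \to \bR^V$ defined by $f_\epsilon(\lambda)(v) = \nego(\lambda)(v) - \epsilon$ (with the convention $(\pm\infty) - \epsilon = \pm\infty$). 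The operator $f_\epsilon$ is monotone because $\nego$ is, and subtracting the constant $\epsilon$ is order-preserving. By Knaster--Tarski, $f_\epsilon$ has a least pre-fixed point $\lambda^\epsilon$, namely $\lambda^\epsilon = \bigwedge \{\lambda : f_\epsilon(\lambda) \leq \lambda\}$; one checks in the usual way that this infimum is itself a pre-fixed point of $f_\epsilon$ (monotonicity of $f_\epsilon$ gives $f_\epsilon(\lambda^\epsilon) \leq f_\epsilon(\lambda) \leq \lambda$ for every pre-fixed point $\lambda$, hence $f_\epsilon(\lambda^\epsilon) \leq \lambda^\epsilon$). Since pre-fixed points of $f_\epsilon$ are exactly the $\epsilon$-fixed points of $\nego$, $\lambda^\epsilon$ is the least $\epsilon$-fixed point of $\nego$, which is what we want.

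I do not expect any serious obstacle here; the only points requiring a little care are bookkeeping ones. One should confirm that $\bR^V$ really is a complete lattice, which follows from $\bR$ being a complete lattice (every subset of $\R \cup \{\pm\infty\}$ has a supremum and infimum in $\R \cup \{\pm\infty\}$) and from completeness being preserved under arbitrary products with the pointwise order. One should also be slightly careful with the arithmetic of $\pm\infty$ when defining $f_\epsilon$: the shift by $\epsilon$ must be interpreted so that $f_\epsilon$ stays monotone and so that the equivalence ``$\epsilon$-fixed point of $\nego$'' $\iff$ ``pre-fixed point of $f_\epsilon$'' remains valid at the infinite values; taking $(+\infty) - \epsilon = +\infty$ and $(-\infty) - \epsilon = -\infty$ does the job, and at those vertices the defining inequality of an $\epsilon$-fixed point, read as $\nego(\lambda)(v) \in [\lambda(v)-\epsilon, \lambda(v)+\epsilon]$, is understood to hold vacuously/appropriately in $\bR$. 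With these conventions in place the argument is routine. Alternatively, and perhaps more cleanly, one can skip $f_\epsilon$ altogether and argue directly: the set $S_\epsilon = \{\lambda : \nego(\lambda) \leq \lambda + \epsilon\}$ is nonempty (it contains the top requirement $\lambda \equiv +\infty$, since $\nego(\lambda) = \lambda$ there) and is closed under arbitrary infima, again by monotonicity of $\nego$; hence it has a least element, which by the observation above is the least $\epsilon$-fixed point of $\nego$.
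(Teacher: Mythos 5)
Your proof is correct and follows essentially the same route as the paper: both arguments take the pointwise infimum of the (nonempty) set of $\epsilon$-fixed points and use monotonicity of $\nego$ to show this infimum is again an $\epsilon$-fixed point, in the style of Knaster--Tarski; your ``direct'' variant at the end is literally the paper's proof, and the reduction via $f_\epsilon$ is only a repackaging of it. Your explicit observation that the lower inequality $\nego(\lambda) \geq \lambda - \epsilon$ is automatic because $\nego$ is non-decreasing is a point the paper leaves implicit, and is worth stating.
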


\begin{proof}
    The following proof is a generalization of a classical proof of Tarski's fixed point theorem.
    Let $\Lambda$ be the set of the $\epsilon$-fixed points of the negotiation function. The set $\Lambda$ is not empty, since it contains at least the requirement $v \mapsto +\infty$. Let $\lambda^*$ be the requirement defined by:
    $$\lambda^*: v \mapsto \inf_{\lambda \in \Lambda} \lambda(v).$$
    
    For every $\epsilon$-fixed point $\lambda$ of the negotiation function, we have then $\lambda^*(v) \leq \lambda(v)$ for each $v$, and then $\nego(\lambda^*)(v) \leq \nego(\lambda)(v)$ since $\nego$ is monotone; and therefore, we have $\nego(\lambda^*)(v) \leq \lambda(v) + \epsilon$.
    As a consequence, we have:
    $$\nego(\lambda^*)(v) \leq \inf_{\lambda \in \Lambda} \lambda(v) + \epsilon = \lambda^*(v) + \epsilon.$$
    The requirement $\lambda^*$ is an $\epsilon$-fixed point of the negotiation function, and is therefore the least of them.
\end{proof}

In all what follows, for a given game $G$ and a given $\epsilon > 0$, we will write $\lambda^*$ for the least $\epsilon$-fixed point of the negotiation function.
Intuitively, the requirement $\lambda^*$ is such that, from every vertex $v$, the player $i$ controlling $v$ cannot enforce a payoff greater than $\lambda^*(v) + \epsilon$ against a $\lambda^*$-rational behaviour.
Therefore, the $\lambda^*$-consistent plays are such that if one player tries to deviate, it is possible for the other players to prevent them improving their payoff by more than $\epsilon$, while still playing rationally --- which defines $\epsilon$-SPE outcomes.
Formally:

\begin{thm} \label{thm_spe}
	Let $G_{\|v_0}$ be a prefix-independent game played on a finite graph, and let $\epsilon \geq 0$.
	Let $\theta$ be a play starting in $v_0$.
	If there exists an $\epsilon$-SPE $\bsigma$ such that $\< \bsigma \> = \theta$, then $\theta$ is $\lambda^*$-consistent.
	If $G$ is also a game with steady negotiation, then conversely, if $\theta$ is $\lambda^*$-consistent, then it is an $\epsilon$-SPE outcome.
\end{thm}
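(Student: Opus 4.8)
The plan is to prove the two implications separately, mirroring the structure of the proof of Theorem~\ref{thm_ne} but carrying the requirement $\lambda^*$ and the slack $\epsilon$ through the argument, and crucially exploiting prefix-independence so that the reasoning in a subgame $G_{\|hv}$ depends only on the last vertex $v$.

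\emph{From $\epsilon$-SPEs to $\lambda^*$-consistency.} Suppose $\bsigma$ is an $\epsilon$-SPE with $\langle\bsigma\rangle = \theta$. I would show that the requirement $\lambda$ defined by $\lambda(v) = \inf_{h} \mu_i(\langle\bsigma_{\|hv}\rangle)$ (infimum over histories $h$ with $hv$ compatible with the play-so-far, $v\in V_i$) is an $\epsilon$-fixed point of $\nego$; then minimality of $\lambda^*$ gives $\lambda^* \leq \lambda + \epsilon$, and a direct check that $\theta$ is $\lambda$-consistent — this uses prefix-independence, since in the subgame after $\theta_{\leq k}$ the continuation $\theta_{\geq k}$ has the same payoff — yields $\theta$ is $\lambda^*$-consistent (modulo the $\epsilon$ which is absorbed correctly by the direction of the inequalities). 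The key observation is that for each $v \in V_i$, the strategy profile $\bsigma_{-i\|hv}$ is $\lambda$-rational (witnessed by $\sigma_{i\|hv}$, since $\bsigma$ being a strategy profile means all subgame outcomes after histories compatible with $\bsigma_{-i\|hv}$ are outcomes of $\bsigma$ in some subgame, hence $\lambda$-consistent by construction of $\lambda$ and prefix-independence), so $\nego(\lambda)(v) \leq \sup_{\sigma'_i}\mu_i(\langle\bsigma_{-i\|hv},\sigma'_i\rangle) \leq \mu_i(\langle\bsigma_{\|hv}\rangle) + \epsilon$ by the $\epsilon$-SPE property; taking the infimum over $h$ gives $\nego(\lambda)(v) \leq \lambda(v) + \epsilon$, and monotonicity ($\lambda \leq \nego(\lambda)$) gives the lower bound.

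\emph{From $\lambda^*$-consistency to $\epsilon$-SPE outcomes.} Here I assume $G$ has steady negotiation and $\theta$ is $\lambda^*$-consistent. I would construct $\bsigma$ by a punishment-on-deviation scheme, exactly as in Theorem~\ref{thm_ne} but iterated through all histories: as long as players follow the planned continuation, play it; at the first deviation by player $i$ at some $\rho_k$, switch to a strategy profile $\btau^k_{-i}$ for $-i$ that realizes the minimum $\nego(\lambda^*)(\rho_k) = \min_{\btau_{-i}\in\lambda^*\Rat(\rho_k)}\sup_{\tau_i}\mu_i(\langle\btau\rangle)$ — which exists precisely because $G$ has steady negotiation — and thereafter, if a further deviation occurs, recurse with the same rule. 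The fact that $\btau^k_{-i}$ is $\lambda^*$-rational means all the continuations it generates are $\lambda^*$-consistent, so the recursion is well-founded: every subgame outcome of $\bsigma$ is $\lambda^*$-consistent. Then for any deviation $\sigma'_i$ after any history $hv$, the outcome's payoff is at most $\sup_{\tau^k_i}\mu_i(\langle\btau^k\rangle) = \nego(\lambda^*)(\rho_k) \leq \lambda^*(\rho_k) + \epsilon$ (using $\nego(\lambda^*)\leq\lambda^*+\epsilon$), while the payoff of the continuation actually followed by $\bsigma$ after $hv$ is $\geq \lambda^*(\text{relevant vertex})$ by $\lambda^*$-consistency of that continuation; combining and using prefix-independence to align the reference vertex gives the $\epsilon$-SPE inequality.

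\emph{Main obstacle.} The delicate point is the bookkeeping of \emph{nested} deviations: after a punishment phase has started, a second deviator must again be punished according to $\lambda^*$, and one must verify that the strategy profile remains well-defined and that every subgame outcome — including those deep inside several layers of punishment — is $\lambda^*$-consistent, so that the $\lambda^*$-rationality witnesses compose correctly. This is where the relaxed definition of steady negotiation (only requiring the infimum to be realized, not a stronger uniformity) must be shown to suffice; I expect the argument to proceed by associating to each history the "current target vertex" whose requirement governs it, and checking by induction on the depth of nested deviations that the outcome from that history is $\lambda^*$-consistent. The second subtlety is making sure the $\epsilon$ lands on the correct side of each inequality in both directions, which is purely a matter of care since $\nego(\lambda^*)(v) \in [\lambda^*(v), \lambda^*(v)+\epsilon]$.
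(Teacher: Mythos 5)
Your first direction is essentially the paper's argument: define $\lambda(v)=\inf_{hv}\mu_i(\langle\bsigma_{\|hv}\rangle)$, observe that every $\bsigma_{-i\|hv}$ is $\lambda$-rational, and use the $\epsilon$-SPE inequality to get $\nego(\lambda)\leq\lambda+\epsilon$. One small correction: minimality of the least $\epsilon$-fixed point gives you $\lambda^*\leq\lambda$ outright (not merely $\lambda^*\leq\lambda+\epsilon$), and you need that stronger form, since $\lambda$-consistency of $\theta$ only transfers to $\lambda^*$-consistency if $\lambda^*\leq\lambda$ pointwise. This is a trivial fix, and the direction is fine.

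The second direction has a genuine gap, and it is exactly the point the paper's construction is built around. Your rule "at every deviation, switch to a fresh profile $\btau^{w}_{-i}$ realizing the minimum at the new deviation point" handles any \emph{finite} number of nested deviations, which is why your proposed "induction on the depth of nested deviations" feels adequate — but a single player may deviate \emph{infinitely often}. In that case the resulting play is only a concatenation of finite prefixes of plays compatible with the successive punishment profiles, and for a prefix-independent (e.g.\ mean-payoff) objective the payoff of such a limit play is not controlled by the payoffs of the plays whose prefixes it borrows: each $\btau^{w_j}_{-i}$ only bounds $\mu_i$ of \emph{complete} plays compatible with it, and player $i$ can in principle string together high-reward prefixes forever, each time escaping before the punishment "bites", and obtain a limit payoff exceeding every $\nego(\lambda^*)(w_j)$. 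The paper's remedy is to change the punishment profile only at \emph{resets}, i.e.\ only when the deviation strictly lowers the value $\inf_{\btau_{-i}\in\lRat(w')}\sup_{\tau_i}\mu_i(\langle\btau\rangle)$ below the bound currently guaranteed; since $\nego(\lambda^*)$ takes finitely many values on a finite graph, this forces finitely many profile changes, so the entire tail of any deviating play is compatible with a single $\lambda^*$-rational profile, whose guarantee then bounds the actual payoff of the infinite play. Without this (or an equivalent device ensuring eventual stabilization of the punishment), the $\epsilon$-SPE inequality cannot be established for infinitely deviating plays, so your plan as stated does not go through.
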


\begin{proof} \hfill
    \begin{itemize} 
        \item \emph{If $\bsigma$ is an $\epsilon$-SPE, then the play $\theta = \< \bsigma \>$ is $\lambda^*$-consistent.}

        Let us define a requirement $\lambda$ by, for each $i \in \Pi$ and $v \in V_i$:
        $$\lambda(v) = \inf_{hv \in \Hist G_{\|v_0}} \mu_i(\< \bsigma_{\|hv} \>).$$
        Then, for every history $hv$ starting in $v_0$, the play $\< \bsigma_{\|hv} \>$ is $\lambda$-consistent.
        In particular, the play $\theta$ is.
        Let us now prove that $\lambda$ is an $\epsilon$-fixed point of $\nego$.
        We will then have $\lambda \geq \lambda^*$, which implies that the play $\theta$ is $\lambda^*$-consistent.
        
        Let $i \in \Pi$, let $v \in V_i$, and let us assume toward contradiction (since the negotiation function is non-decreasing) that $\nego(\lambda)(v) > \lambda(v) + \epsilon$, that is to say:
        $$\inf_{\btau_{-i} \in \lRat(v)} ~ \sup_{\tau_i} ~ \mu_i(\< \btau \>) > \lambda(v) + \epsilon = \inf_{hv \in \Hist G_{\|v_0}} \mu_i(\< \bsigma_{\|hv} \>) + \epsilon.$$
        
        Then, since all the plays generated by the strategy profile $\bsigma$ are $\lambda$-consistent, and therefore since any strategy profile of the form $\bsigma_{-i\|hv}$ is $\lambda$-rational, we have:
        $$\inf_{hv} ~ \sup_{\tau_i} ~ \mu_i(\< \bsigma_{-i\|hv}, \tau_i \>) > \inf_{hv} ~ \mu_i(\< \bsigma_{\|hv} \>) + \epsilon.$$
        
        Therefore, there exists a history $hv$ such that:
        $$\sup_{\tau_i} ~ \mu_i(\< \bsigma_{-i\|hv}, \tau_i \>) > \mu_i(\< \bsigma_{\|hv} \>) + \epsilon,$$
        which is impossible if the strategy profile $\bsigma$ is an $\epsilon$-SPE.
        Therefore, there is no such $v$, and the requirement $\lambda$ is an $\epsilon$-fixed point of the negotiation function.

        \item \emph{If $G$ is a game with steady negotiation and $\theta$ is $\lambda^*$-consistent, then $\theta$ is an $\epsilon$-SPE outcome.}
    
\begin{itemize}
	\item \emph{A particular case: if there exists $v$ accessible from $v_0$ such that $\lambda^*(v) = +\infty$.}
	
	In that case, for each $u$ such that $uv \in E$, if the player controlling $u$ chooses to go to $v$, no $\lambda^*$-consistent play can be proposed to them from there, hence there is no $\lambda^*$-rational strategy profile against that player from $u$, and $\nego(\lambda^*)(u) = +\infty$.
	Since $\epsilon$ is finite and since $\lambda^*$ is an $\epsilon$-fixed point of the negotiation function, it follows that $\lambda^*(u) = +\infty$. Since $v$ is accessible from $v_0$, we can repeat this argument and show that $\lambda^*(v_0) = +\infty$; in that case, there is no $\lambda^*$-consistent play $\theta$ from $u$, and then the proof is done.
	
	Therefore, for the rest of the proof, we assume that for all $v$, we have $\lambda^*(v) \neq +\infty$.
	As a consequence, since $\lambda^*$ is an $\epsilon$-fixed point of the function $\nego$, for each $v$ accessible from $v_0$, we have $\nego(\lambda^*)(v) \neq +\infty$; which implies that for each such $v$, there exists a $\lambda^*$-consistent strategy profile against the player controlling $v$, starting from $v$.

    The rest of the proof constructs the strategy profile $\bsigma$ and proves that it is an SPE.
    That construction is illustrated by Figure~\ref{fig_construction_bsigma}.
    
    \begin{figure}
        \centering
        \begin{center}
		\begin{tikzpicture}[scale=0.7]
		\draw (0, 0) node {$\bullet$};
		\draw (0, 0) node[below left] {$v_0$};
		\draw (0, 0) -- (4, 0);
		\draw [dotted] (4, 0) -- (5, 0);
		\draw (4, 0) node[above right] {$\theta$};
		
		\draw[red] (2, 0) node {$\bullet$};
		\draw[red] (2, 0) node[above] {$u \in i$};
		\draw[red, ->] (2, 0) -- (4, -2);
		
		\draw (4, -2) -- (7, -2);
		\draw[dotted] (7, -2) -- (8, -2);
		
		\draw[red] (5, -2) node {$\bullet$};
		\draw[red, ->] (5, -2) -- (6, -1);
		\draw (6, -1) -- (8, -1);
		\draw[dotted] (8, -1) -- (9, -1);
		\draw[red] (6, -2) node {$\bullet$};
		\draw[red, ->] (6, -2) -- (7, -3);
		\draw (7, -3) -- (9, -3);
		\draw[dotted] (9, -3) -- (10, -3);
		
		\draw[red] (14, -2) node {$\left. \begin{matrix} \\ \\ \end{matrix} \right\}$ \small{$\btau^u$}};
		
		\draw[blue] (8, -3) node {$\bullet$};
		\draw[blue] (8, -3) node[below left] {$w \in V_j$};
		\draw[blue, ->] (8, -3) -- (9, -4);
		\draw[dotted] (9, -4) -- (10, -4);
		\draw[blue] (14, -4) node {$\left. \right\}$ \small{$\btau^w$}};
		
		\draw[red] (7, -1) node {$\bullet$};
		\draw[red] (7, -1) node[above left] {$v \in V_i$};
		\draw[red, ->] (7, -1) -- (8, 0);
		\draw[dotted] (8, 0) -- (9, 0);
		
		\draw[red] (14, 0) node {$\left. \right\}$ \small{$\btau^v$ (if reset)}};

		\end{tikzpicture}
	\end{center}
        \caption{The construction of $\bsigma$}
        \label{fig_construction_bsigma}
    \end{figure}
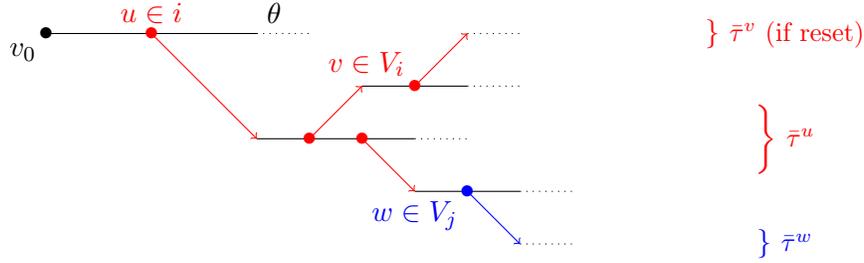

	\item \emph{Spare parts: the strategy profiles $\btau^{v*}$.}
	
	Recall that since $G$ is a game with steady negotiation, for every requirement $\lambda^*$, for every player $i$ and for every state $v \in V_i$, since by the previous point we assume $\lambda^*\Rat(v) \neq \emptyset$, we know that there exists a strategy profile $\btau^v_{-i}$ from $v$ that is $\lambda^*$-rational assuming a strategy $\tau^v_i$ and that satisfies the inequality:
	$$\sup_{\tau_i} ~ \mu_i(\< \btau^v_{-i}, \tau_i \>) = \inf_{\btau_{-i} \in \lRat(v)} ~\sup_{\tau_i} ~\mu_i(\< \btau \>) = \nego(\lambda^*)(v),$$
	i.e. there exists a worst $\lambda^*$-rational strategy profile against player $i$ from the state $v$, with regards to player $i$'s payoff.
	Our goal in this part of the proof is to construct a strategy profile $\btau^{v*}_{-i}$, that is $\lambda^*$-rational assuming a strategy $\tau^{v*}_i$, and that will be used to punish player $i$ when they deviate from $\bsigma$ until another player deviates.
 
	The strategy profile $\btau^v_{-i}$ and the strategy $\tau^v_i$ are not sufficient for that purpose, because if some history $h$ compatible with $\btau^v_{-i}$ is such that $\mu_i(\< \btau^v_{\|h} \>) < \mu_i(\< \btau^v \>)$, then in the corresponding subgame, it may be possible for player $i$ to deviate and get a payoff that would be smaller than or equal to $\mu_i(\< \btau^v \>)$, but greater than $\mu_i(\< \btau^v_{\|h} \>)$.
	On the other hand, the construction of $\btau^{v*}_{-i}$ will ensure that each time player $i$ deviates, the other players punish them at least as harshly as they were planning to do before the deviation.
	
	Let us construct inductively the strategy profile $\btau^{v*}$.
	We define it only on histories that are compatible with $\btau^{v*}_{-i}$, since it can be defined arbitrarily on other histories.
	We proceed by assembling the strategy profiles of the form $\btau^w$ for various $w \in V_i$, and the histories after which we follow a new $\btau^w$ will be called the \emph{resets} of $\btau^{v*}$: they will be histories of the form $hw'$, where $h$ is empty or $\last(h) = w$.
	
	First, we set $\< \btau^{v*} \> = \< \btau^v \>$: the one-state history $v$ is then the first reset of $\btau^{v*}_{-i}$.
	
    Then, for every history $hww'$ from $v$ such that $h$ is compatible with $\btau^{v*}_{-i}$, that $w \in V_i$, and that $w' \neq \tau^{v*}_i(hw)$: let us decompose $hww' = h_1h_2$, so that the history $h_1 \first(h_2)$ is the longest reset of $\btau^{v*}_{-i}$ among the prefixes of $hw$.
    Or, in other words, so that the strategy profile $\btau^{v*}_{\|h_1\first(h_2)}$ has been defined as equal to $\btau^u$ over the prefixes of $h_2$ until $w$, where $u = v$ if $h_1$ is empty, or $u = \last(h_1)$ otherwise.
	By prefix-independence of $G$ and by definition of $\btau^u$ and $\btau^w$, we have:
	$$\inf_{\btau_{-i} \in \lRat(w')} \sup_{\tau_i} \mu_i(\< \btau \>) \leq \sup_{\tau_i} \mu_i(\< \btau^w_{-i}, \tau_i \>) = \nego(\lambda^*)(w).$$
	Let us now separate two cases.
		
		\begin{itemize}
			\item Suppose first that there is equality:
			$$\inf_{\btau_{-i} \in \lRat(w')} \sup_{\tau_i} \mu_i(\< \btau \>) = \nego(\lambda^*)(w).$$
			Then, we choose $\< \btau^{v*}_{\|hww'} \> = \< \btau^u_{\|uh_2} \>$: the coalition of players against player $i$ keeps following the same strategy profile.
			
			\item Suppose now that the inequality is strict:
			$$\inf_{\btau_{-i} \in \lRat(w')} \sup_{\tau_i} \mu_i(\< \btau \>) < \nego(\lambda^*)(w).$$
			Then, we choose $\< \btau^{v^*}_{\|hww'} \> = \< \btau^w_{\|ww'} \>$: player $i$ has done something that lowers the payoff they can ensure, and therefore the other players have to update their strategy profile in order to punish them more.
			The history $hw$ is a reset of $\btau^{v*}_{-i}$.
		\end{itemize}
	
	Since there are finitely many histories of each length, this process completely defines $\btau^{v*}$.
    Moreover, all the plays constructed are $\lambda^*$-consistent, hence the strategy profile $\btau^{v*}_{-i}$ is $\lambda^*$-rational assuming $\tau^{v*}_i$, as desired.

	\item \emph{Construction of $\bsigma$.}

	Let us now construct inductively the strategy profile $\bsigma$: we will prove in the next part of the proof that it is an $\epsilon$-SPE.
    We proceed inductively, by defining all the plays $\< \bsigma_{\|hv} \>$, for $hv \in \Hist(G_{v_0})$ with $v \neq \bsigma(h)$.
    We maintain the induction hypothesis that such a play is always $\lambda^*$-consistent.

	\begin{itemize}
		\item First, we choose $\< \bsigma \> = \theta$, which satisfies the induction hypothesis.

		\item Let now $huv$ be a history such that the strategy profile $\bsigma$ has been defined on all the prefixes of $hu$, which we now assume to be nonempty, but not on $huv$ itself, and such that $v \neq \bsigma(hu)$.
		Let $i$ be the player controlling the state $u$.
		
		Then, we define $\< \bsigma_{\|huv} \> = \< \btau^{u*}_{\|uv} \>$, and inductively, for every history $h'w$ starting from $v$ and compatible with $\bsigma_{-i\|huv}$, we define $\< \bsigma_{\|huh'w} \> = \< \btau^{u*}_{\|uh'w} \>$.
		The strategy profile $\bsigma_{\|huv}$ is then equal to $\btau^{v*}_{\|uv}$ on any history compatible with $\btau^{v*}_{-i}$.
	\end{itemize}

	Since there are finitely many histories of each length, this process completely defines $\bsigma$.

	\item \emph{Such $\bsigma$ is an $\epsilon$-SPE.}
	
	Consider a history $h_0 w \in \Hist G_{\|v_0}$, a player $i \in \Pi$, and a deviation $\sigma'_i$ of $\sigma_i$.
    Let $\rho = h_0 \< \bsigma_{\|h_0w} \>$, and let $\rho' = h_0 \< \bsigma_{-i\|h_0w}, \sigma'_{i\|h_0w} \>$.
	We wish to prove that $\mu_i(\rho') \leq \mu_i(\rho) + \epsilon$.
	
	First, if the play $\rho'$ is compatible with $\sigma_i$, then $\rho' = \rho$ and the proof is immediate.
	Now, if it is not, we let $\rho'_{\leq n}$ denote the shortest prefix of $\rho'$ such that $\rho'_{n-1} \in V_i$ and $\rho'_n \neq \sigma_i(\rho'_{< n})$, and such that $\rho'_{\geq n}$ is compatible with $\bsigma_{-i\|\rho'_{\leq n}}$. 
	Thus, the transition $\rho'_{n-1}\rho'_n$ marks the time when player $i$ begins to deviate unilaterally from $\sigma_i$.
 However, note that $\rho'_{\leq n}$ can be both longer or shorter than $h_0 w$: player $i$ may have already deviated in $h_0 w$, or may wait afterwards to effectively deviate.
	
	Be that as it may, the history $\rho'_{< n}$ is a common prefix of the plays $\rho$ and $\rho'$, and the substrategy profile $\bsigma_{\|\rho'_{\leq n}}$ has been defined during the construction of $\bsigma$ as equal to $\btau^{v*}_{\| \rho'_{n-1}\rho'_n}$, where $v = \rho'_{n-1}$, on any history compatible with $\bsigma_{-i\|\rho'_{\leq n}}$. 
	
	By construction of $\btau^{v*}$, the sequence $\left(\nego(\rho'_k)\right)_{k \geq n-1, \rho'_k \in V_i}$ is non-increasing.
	It is therefore stationary (or finite), because it can take only a finite number of values.
	Consequently, there is a finite number of resets along the play $\rho'_{\geq n-1}$.
	Let $\rho'_{n-1} \dots \rho'_m$ be the last (longest) one.
	Afterwards, the play $\rho'_{\geq m}$ is compatible with the strategy profile $\btau^{\rho'_{m-1}}_{-i}$.
    By definition of that strategy profile, we have the inequality $\mu_i(\rho') \leq \nego(\lambda^*)(\rho'_{m-1})$.
    We need now to prove $\nego(\lambda^*)(\rho'_{m-1}) \leq \mu_i(\rho) + \epsilon$.
	
	Let $\rho_{\leq p} = \rho'_{\leq p}$ denote the longest common prefix of $\rho$ and $\rho'$ such that $\rho_p \in V_i$.
	Since player $i$ does not control any vertex between $\rho_p$ and $\rho_{n-1}$, and therefore cannot deviate, we have $\rho_{\geq p} = \< \bsigma_{\|\rho_{\leq p}} \>$, which is $\lambda^*$-consistent.
	As a consequence, we have $\mu_i(\rho) \geq \lambda^*(\rho_p)$.
 
 Finally, since the sequence of the quantities $\nego(\rho'_k)$ with $\rho'_k \in V_i$ is non-increasing for $k \geq n-1$, we also have $\nego(\lambda^*)(\rho'_{m-1}) \leq \nego(\lambda^*)(\rho_p)$.
 Consequently, we have:
 $$\mu_i(\rho') \leq \nego(\lambda^*)(\rho'_{m-1}) \leq \nego(\lambda^*)(\rho_p) \leq \lambda^*(\rho_p) + \epsilon \leq \mu_i(\rho) + \epsilon.$$
	
	The strategy profile $\bsigma$ is an $\epsilon$-SPE. \qedhere
\end{itemize}
\end{itemize}
\end{proof}

	\section{A first way to handle negotiation: the abstract negotiation game} \label{sec_abstract_game}

        \subsection{Informal definition}

We have now proved that SPEs are characterized by the requirements that are fixed points of the negotiation function; but we need to know how to compute, in practice, the quantity $\nego(\lambda)$ for a given requirement $\lambda$.
We first define an \emph{abstract negotiation game}, that is conceptually simple but not directly usable for an algorithmic purpose, because it is defined on an uncountably infinite state space.

A similar definition was given in \cite{DBLP:journals/mor/FleschP17}, as a tool in a general method to compute SPE outcomes in games whose payoff functions have finite range, which is not the case of mean-payoff games.
Here, linking that game with our concepts of requirements, negotiation function and steady negotiation enables us to present an effective algorithm in the case of mean-payoff games, by constructing a finite version of the abstract negotiation game, the \emph{concrete negotiation game}.

The abstract negotiation game from a state $v_0 \in V_i$, with regards to a requirement $\lambda$, is denoted by $\Abs_{\lambda i}(G)_{\|v_0}$ and opposes two players, \emph{Prover} and \emph{Challenger}, with the following rules:
	
	\begin{itemize}
		\item first, Prover proposes a $\lambda$-consistent play $\rho$ from $v_0$ (or loses, if she has no play to propose).
		
		\item Then, either Challenger accepts the play and the game terminates; or, he chooses an edge $\rho_k \rho_{k+1}$, with $\rho_k \in V_i$, from which he can make player $i$ deviate, using another edge $\rho_k v$ with $v \neq \rho_{k+1}$: then, the game starts again from $v$ instead of $v_0$.
		
		\item In the resulting play (either eventually accepted by Challenger, or constructed by an infinity of deviations), Prover wants player $i$'s payoff to be low, and Challenger wants it to be high.
	\end{itemize}

	That game gives us the basis of a method to compute $\nego(\lambda)$ from $\lambda$: the maximal payoff that Challenger --- or $\C$ for short --- can ensure in $\Abs_{\lambda i}(G)_{\|[v_0]}$, with $v_0 \in V_i$, is also the maximal payoff that player $i$ can ensure in $G_{\|v_0}$, against a $\lambda$-rational environment; hence the equality $\val_\C\left(\Abs_{\lambda i}(G)_{\|[v_0]}\right) = \nego(\lambda)(v_0).$
	A proof of that statement, with a complete formalization of the abstract negotiation game, is presented in Appendix~\ref{app_abstract}.

\begin{exa} \label{ex_abstract_game}
Let us consider again the game of Example~\ref{ex_sans_spe}: the requirement $\lambda_2 = \nego(\lambda_1)$, computed in Section~\ref{ss_def_nego}, is also presented on the third line below the states in Figure~\ref{fig_sans_spe}.
Let us use the abstract negotiation game to compute the requirement $\lambda_3 = \nego(\lambda_2)$.

From $a$, Prover can propose the play $abd^\omega$, and the only deviation Challenger can do is going to $c$; he has of course no incentive to do it. Therefore, $\lambda_3(a) = 2$.
From $b$, whatever Prover proposes at first, Challenger can deviate and go to $a$. Then, from $a$, Prover cannot propose the play $ac^\omega$, which is not $\lambda_2$-consistent: she has to propose a play beginning by $ab$, and to let Challenger deviate once more. He can then deviate infinitely often that way, and generate the play $(ba)^\omega$: therefore, $\lambda_3(b) = 3$.
The other states keep the same values.
Note that there exists no $\lambda_3$-consistent play from $a$ or $b$, hence $\nego(\lambda_3)(a) = \nego(\lambda_3)(b) = +\infty$.
This proves that there is no SPE in that game.
\end{exa}

        \subsection{An imperfect method: the negotiation sequence}

A classical way to compute the least fixed point of a function is, as in the example above, to compute its iterations on the least element of the set we are considering until reaching a fixed point --- which is, then, the least one.
We call this sequence the \emph{negotiation sequence}, and write it $(\lambda_n)_{n \in \N} = (\nego^n(\lambda_0))_n$.
In many simple examples, in practice, computing the negotiation sequence, using the abstract negotiation game, is the way we will find the least fixed point of the negotiation function and solve SPE problems.

\begin{exa} \label{ex_propagation}
    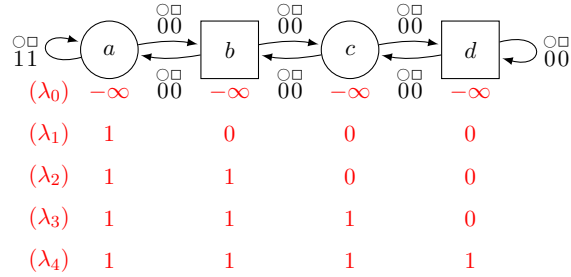
\begin{figure}
    \centering
		\begin{tikzpicture}[->,>=latex,shorten >=1pt, initial text={}, scale=0.8, every node/.style={scale=0.8}]
		\node[state] (a) at (0, 0) {$a$};
		\node[state, rectangle] (b) at (2, 0) {$b$};
		\node[state] (c) at (4, 0) {$c$};
		\node[state, rectangle] (d) at (6, 0) {$d$};
		\path (a) edge [loop left] node {$\stackrel{\playcircle}{1} \stackrel{\Box}{1}$} (a);
		\path[bend left=10] (a) edge node[above] {$\stackrel{\playcircle}{0} \stackrel{\Box}{0}$} (b);
		\path[bend left=10] (b) edge node[below] {$\stackrel{\playcircle}{0} \stackrel{\Box}{0}$} (a);
		\path[bend left=10] (b) edge node[above] {$\stackrel{\playcircle}{0} \stackrel{\Box}{0}$} (c);
		\path[bend left=10] (c) edge node[below] {$\stackrel{\playcircle}{0} \stackrel{\Box}{0}$} (b);
		\path[bend left=10] (c) edge node[above] {$\stackrel{\playcircle}{0} \stackrel{\Box}{0}$} (d);
		\path[bend left=10] (d) edge node[below] {$\stackrel{\playcircle}{0} \stackrel{\Box}{0}$} (c);
		\path (d) edge [loop right] node {$\stackrel{\playcircle}{0} \stackrel{\Box}{0}$} (d);

		\node[red] (l0) at (-1, -0.7) {$(\lambda_0)$};
		\node[red] (l0a) at (0, -0.7) {$-\infty$};
		\node[red] (l0b) at (2, -0.7) {$-\infty$};
		\node[red] (l0c) at (4, -0.7) {$-\infty$};
		\node[red] (l0d) at (6, -0.7) {$-\infty$};
		
		\node[red] (l1) at (-1, -1.4) {$(\lambda_1)$};
		\node[red] (l1a) at (0, -1.4) {$1$};
		\node[red] (l1b) at (2, -1.4) {$0$};
		\node[red] (l1c) at (4, -1.4) {$0$};
		\node[red] (l1d) at (6, -1.4) {$0$};
		
		\node[red] (l2) at (-1, -2.1) {$(\lambda_2)$};
		\node[red] (l2a) at (0, -2.1) {$1$};
		\node[red] (l2b) at (2, -2.1) {$1$};
		\node[red] (l2c) at (4, -2.1) {$0$};
		\node[red] (l2d) at (6, -2.1) {$0$};
		
		\node[red] (l3) at (-1, -2.8) {$(\lambda_3)$};
		\node[red] (l3a) at (0, -2.8) {$1$};
		\node[red] (l3b) at (2, -2.8) {$1$};
		\node[red] (l3c) at (4, -2.8) {$1$};
		\node[red] (l3d) at (6, -2.8) {$0$};
		
		\node[red] (l4) at (-1, -3.5) {$(\lambda_4)$};
		\node[red] (l4a) at (0, -3.5) {$1$};
		\node[red] (l4b) at (2, -3.5) {$1$};
		\node[red] (l4c) at (4, -3.5) {$1$};
		\node[red] (l4d) at (6, -3.5) {$1$};
		\end{tikzpicture}
    \caption{Iterations of the negotiation function}
    \label{fig_ex_nego}
\end{figure}

    Let $G$ be the game of Figure~\ref{fig_ex_nego}, where each edge is labelled by the rewards $r_\playcircle$ and $r_\Box$.
    Below the states, we present the requirements $\lambda_0: v \mapsto -\infty$, $\lambda_1 = \nego(\lambda_0)$, $\lambda_2 = \nego(\lambda_1)$, $\lambda_3 = \nego(\lambda_2)$, and $\lambda_4 = \nego(\lambda_3)$.
    Let us explicate those computations, using the abstract negotiation game.
	From $\lambda_0$ to $\lambda_1$: since every play is $\lambda_0$-consistent, Prover can always propose whatever she wants.
	From the state $a$, whatever she (trying to minimize player $\Circle$'s payoff) proposes, Challenger can always make player $\Circle$ deviate in order to loop on the state $a$.
	Then, in the game $G$, player $\Circle$ gets the payoff $1$, hence $\lambda_1(a) = 1$.
	From the state $b$, Prover (trying to minimize player $\Box$'s payoff) can propose the play $(bc)^\omega$.
	If Challenger makes player $\Box$ deviate to go to the state $a$, then Prover can propose the play $a(bc)^\omega$.
	Even if Challenger makes player $\Box$ deviate infinitely often, he cannot give him more than the payoff $0$, hence $\lambda_1(b) = 0$.
	Similar situations happen from the states $c$ and $d$, hence $\lambda_1(c) = \lambda_1(d) = 0$.
	From $\lambda_1$ to $\lambda_2$: now, from the state $b$, whatever Prover proposes at first, Challenger can make player $\Box$ deviate and go to the state $a$.
	From there, since we have $\lambda_1(a) = 1$, Prover has to propose a play in which player $\Circle$ gets the payoff $1$.
	The only such plays do also give the payoff $1$ to player $\Box$, hence $\lambda_2(b) = 1$.
	Similar situations explain $\lambda_3(c) = 1$, and $\lambda_4(c) = 1$.
	Finally, plays ending with the loop $a^\omega$ are all $\lambda_4$-consistent, hence Prover can always propose them, hence the requirement $\lambda_4$ is a fixed point of the negotiation function --- and therefore the least.
\end{exa}

The interested reader will find other such examples in Appendix~\ref{app_ex}.
However, this cannot be turned into an effective algorithm: the negotiation sequence is not always stationary.

\begin{thm} \label{thm_not_stationary}
    There exists a mean-payoff game on which the negotiation sequence is not stationary.
\end{thm}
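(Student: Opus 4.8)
The plan is to prove this existential statement by exhibiting one concrete mean-payoff game. Since the negotiation function is non-decreasing and monotone, the negotiation sequence $(\lambda_n)_n = (\nego^n(\lambda_0))_n$ is non-decreasing, and it is stationary if and only if $\lambda_N = \lambda_{N+1}$ for some $N$ (once a fixed point is reached it is never left). Hence it suffices to build a game $G$ and a vertex $v$ such that $\lambda_n(v) < \lambda_{n+1}(v)$ for every $n \in \N$: then no $\lambda_n$ is a fixed point of $\nego$, and the sequence is not stationary. Observe that, being non-decreasing and bounded above by the largest reward of $G$, the sequence $(\lambda_n(v))_n$ necessarily converges; the whole point is that it converges to a value it never attains.

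I would build a small two-player game $G$ around a \emph{ratchet gadget}: a vertex $v$ controlled by player~$\Circle$ from which she may either quit to a self-loop giving her a fixed low payoff (say $0$), or enter a cycle shared with player~$\Box$. The rational weights of the cycle and of the $\Box$-side exits are chosen so that, once the rationality constraint at $v$ forces $\Circle$'s payoff strictly above $0$, the only $\lambda$-consistent plays through the gadget are those feeding an even stronger constraint back into $v$; more precisely, the weights are tuned so that $\nego(\lambda)(v)$ equals the affine expression $\tfrac12(\lambda(v)+1)$ on the relevant range $\lambda(v)\le 1$, an affine map with unique fixed point $1$. Taking $\lambda_1(v)$ to be the (finite) worst-case value $0$ at $v$, this recurrence yields $\lambda_n(v) = 1 - 2^{1-n}$, a strictly increasing sequence with supremum $1$ that never reaches $1$.

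The core of the argument is then an induction on $n$ showing that $\lambda_n$ has exactly the predicted shape, in particular $\lambda_n(v) = 1 - 2^{1-n}$. The base case $n=1$ is a routine antagonistic-value computation in the mean-payoff game. For the inductive step I would, assuming $\lambda_n$ known, first describe the set of $\lambda_n$-consistent plays of $G$ --- using the structure of the gadget together with Lemma~\ref{lm_dseal} to determine which payoff vectors are realizable inside each strongly connected component --- and then evaluate $\nego(\lambda_n)(v)$ via the abstract negotiation game $\Abs_{\lambda_n \Circle}(G)_{\|[v]}$: Prover proposes a $\lambda_n$-consistent play from $v$ so as to keep $\mu_\Circle$ low, Challenger may divert $\Circle$ at $\Circle$-controlled vertices to push it up, and one computes that the value of this two-player zero-sum game is $1-2^{-n} = \tfrac12((1-2^{1-n})+1)$. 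Checking $\nego(\lambda_n)$ at the remaining vertices then confirms the global shape of $\lambda_{n+1}$ and closes the induction.

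The hard part is the design-and-verification of the gadget itself: one has to pick the graph and the rational weights so that the contraction recurrence for $\lambda_{n+1}(v)$ really emerges from the abstract negotiation game at \emph{every} stage, while keeping the description of $\lambda_n$-consistent plays manageable uniformly in $n$. Two degeneracies must be excluded along the way. First, the requirement at $v$ must not jump to $+\infty$ --- i.e.\ $\lambda_n$-consistent plays from $v$ must keep existing --- which would make the sequence stationary as in Example~\ref{ex_sans_spe}; this is ensured by keeping a permanently $\lambda_n$-consistent ``safe'' play available. Second, the requirement must not jump straight to the fixed point $1$: this would happen if some payoff vector simultaneously consistent with all the constraints along the gadget's cycle were realizable, so the weights must be chosen so that the consistent-play region shrinks continuously rather than collapsing onto the fixed point. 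Here the $\liminf$ definition of the mean-payoff is what helps --- exactly as in Example~\ref{ex_inf_spe}, it makes payoffs outside the convex hull of the simple cycles achievable, leaving a thin layer of $\lambda_n$-consistent plays available at every finite stage.
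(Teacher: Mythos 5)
Your strategy is exactly the paper's: exhibit a concrete mean-payoff game in which $\nego$ acts on the requirement at one vertex as an affine contraction, so that the negotiation sequence increases strictly at every step and converges to a fixed point it never attains. The paper does this with a six-vertex, three-player game where the recurrence $\lambda_{n+1}(a)=1+\tfrac12\lambda_n(a)$ emerges from the negotiation game, giving $\lambda_n(a)=2-2^{1-n}$; your proposed recurrence $\lambda_{n+1}(v)=\tfrac12(\lambda_n(v)+1)$ has the same shape, and the two degeneracies you insist on excluding (the value must neither jump to $+\infty$ nor collapse onto the fixed point in one step) are indeed the right ones.

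The gap is that for an existential statement proved by construction, the construction \emph{is} the proof, and you never give it. The ``ratchet gadget'' --- its vertices, edges, owners and rational weights --- is precisely what must be produced and verified, and you explicitly defer it as ``the hard part''. This is not a routine omission, because the design is delicate: in the paper's example the punishing strongly connected component $\{e,f\}$ is owned by a third, indifferent player $\Diamond$ whose requirement is $0$, precisely so that the convex combination of the cycles $ef$ and $f$ that holds player $\Circle$ to $1+\tfrac12\lambda_n(b)$ is constrained only by the deviator's requirement at $b$ and by nothing else. Your two-player variant does not obviously exist: if the punished player owns any vertex of the punishing component, her own strictly positive requirement there blocks the low-payoff combinations you need (e.g.\ giving her the self-loop on $f$ forces $\lambda_1(f)=2$ and kills the mechanism), and if she owns none of them the gadget is three-player in disguise. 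Finally, your appeal to the $\liminf$/downward-sealing phenomenon is a red herring here: the payoffs $(2-x,\,2-2x)$ used at every stage of the paper's argument lie inside the convex hull of the simple cycles of the component, so the extra points supplied by Lemma~\ref{lm_dseal} play no role in producing non-stationarity. Without an explicit game and an inductive verification that $\nego(\lambda_n)$ equals the claimed affine expression at every stage, the theorem is not established.
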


\begin{proof}
    Let $G$ be the game of Figure~\ref{fig_not_stationary}.
    Since all the $\Diamond$ rewards are equal to $0$, for all $n > 0$, we have $\lambda_n(c) = \lambda_n(d) = \lambda_n(e) = \lambda_n(f) = 0$.
    Moreover, by symmetry of the game, we always have $\lambda_n(a) = \lambda_n(b)$. Therefore, to compute the negotiation sequence, it suffices to compute $\lambda_{n+1}(a)$ as a function of $\lambda_n(b)$, knowing that $\lambda_1(a) = \lambda_1(b) = 1$, and therefore that for all $n > 0$,  $\lambda_n(a) = \lambda_n(b) \geq 1$.
    
    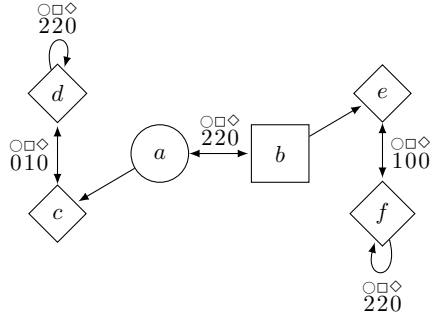
\begin{figure} 
    \begin{center}
    	\begin{tikzpicture}[<->,>=latex,shorten >=1pt, , scale=0.8, every node/.style={scale=0.8}]
    	\node[state, diamond] (c) at (-1.7, -1) {$c$};
    	\node[state, diamond] (d) at (-1.7, 1) {$d$};
    	\node[state] (a) at (0, 0) {$a$};
    	\node[state, rectangle] (b) at (2, 0) {$b$};
    	\node[state, diamond] (e) at (3.7, 1) {$e$};
    	\node[state, diamond] (f) at (3.7, -1) {$f$};
    	
    	\path[<->] (a) edge node[above] {$\stackrel{\playcircle}{2} \stackrel{\Box}{2} \stackrel{\Diamond}{0}$} (b);
    	\path[->] (a) edge (c);
    	\path[<->] (c) edge node[left] {$\stackrel{\playcircle}{0} \stackrel{\Box}{1} \stackrel{\Diamond}{0}$} (d);
    	\path (d) edge [loop above] node {$\stackrel{\playcircle}{2} \stackrel{\Box}{2} \stackrel{\Diamond}{0}$} (d);
    	\path[->] (b) edge (e);
    	\path[<->] (e) edge node[right] {$\stackrel{\playcircle}{1} \stackrel{\Box}{0} \stackrel{\Diamond}{0}$} (f);
    	\path (f) edge [loop below] node {$\stackrel{\playcircle}{2} \stackrel{\Box}{2} \stackrel{\Diamond}{0}$} (f);
    	\end{tikzpicture}
    \end{center}
    \caption{A game where the negotiation sequence is not stationary} \label{fig_not_stationary}
    \end{figure}

    From $a$, the worst play that player $\Box$ could propose to player $\Circle$ would be a combination of the cycles $cd$ and $d$ giving her exactly $1$.
    But then, player $\Circle$ will deviate to go to $b$, from which if player $\Box$ proposes plays in the strongly connected component containing $c$ and $d$, then player $\Circle$ will always deviate and generate the play $(ab)^\omega$, and then get the payoff $2$.
    
    Then, in order to give her a payoff lower than $2$, player $\Box$ has to go to the state $e$.
    Since player $\Circle$ does not control any state in that strongly connected component, the play he will propose will be accepted: he will, then, propose the worst possible combination of the cycles $ef$ and $f$ for player $\Circle$, such that he gets at least his requirement $\lambda_n(b)$.
    The payoff $\lambda_{n+1}(a)$ is then the minimal solution of the system:
    $$\left\{\begin{matrix}
    \lambda_{n+1}(a) = x + 2(1-x) \\
    2(1-x) \geq \lambda_n(b) \\
    0 \leq x \leq 1
    \end{matrix}\right.$$
    that is to say $\lambda_{n+1}(a) = 1 + \frac{\lambda_n(b)}{2} = 1 + \frac{\lambda_n(a)}{2}$, and by induction, for all $n > 0$:
    $$\lambda_n(a) = \lambda_n(b) = 2 - \frac{1}{2^{n-1}}$$
    which converges to $2$ but does never reach it.
\end{proof}

	\section{A tool to compute negotiation: the concrete negotiation game} \label{sec_concrete_game}
	
	    \subsection{Definition}

In the abstract negotiation game, Prover has to propose complete plays, on which we can make the hypothesis that they are $\lambda$-consistent.
In practice, there will often be an infinity of such plays, and therefore it cannot be used directly for an algorithmic purpose.
Instead, those plays can be given edge by edge, in a finite state game.
Its definition is more technical, but it can be shown that it is equivalent to the abstract one.

\begin{defi}[Concrete negotiation game]
	Let $G$ be a prefix-independent game played on a finite graph, let $i \in \Pi$ and $v_0 \in V_i$, and let $\lambda$ be a requirement on $G$.
	The \emph{concrete negotiation game} of $G_{\|v_0}$ is the two-player zero-sum game $\Conc_{\lambda i}(G)_{\|s_0}~=~\left( \{\P, \C\}, S, (S_{\P}, S_{\C}), \Delta, \nu\right)_{\|s_0}$, defined as follows:

	\begin{itemize}
	    \item player $\P$ is called \emph{Prover}, and player $\C$ is called \emph{Challenger}.
	
		\item The set of states controlled by Prover is $S_{\P} = V \times 2^V$, where the state $s = (v, M)$ contains the information of the current state $v$ on which Prover has to define the strategy profile, and the \emph{memory} $M$ of the states that have been traversed so far since the last deviation, and that define the requirements Prover has to satisfy.
		The initial state is $s_0 = (v_0, \{v_0\})$.
		
		\item The set of states controlled by Challenger is $S_{\C} = E \times 2^V$, where in the state $s = (uv, M)$, the edge $uv$ is the edge proposed by Prover.
		
		\item The set $\Delta$ contains three types of transitions: \emph{proposals}, \emph{acceptations} and \emph{deviations}.
		
		\begin{itemize}
			\item The proposals are transitions in which Prover proposes an edge of the game $G$:
			$$\Prop = \left\{ (v, M) (vw, M) ~\left|~
			vw \in E, M \in 2^V \right.\right\};$$
			
			\item the acceptations are transitions in which Challenger accepts to follow the edge proposed by Prover (it is in particular his only possibility when that edge begins on a state that is not controlled by player $i$) --- note that the memory is updated:
			$$\Acc = \left\{ (vw, M)\left(w, M \cup \{w\}\right) ~\left|~
			j \in \Pi, w \in V_j
			\right.\right\};$$
			
			\item the deviations are transitions in which Challenger refuses to follow the edge proposed by Prover, as he can if that edge begins in a state controlled by player $i$ --- the memory is erased, and only the new state the deviating edge leads to is memorized:
			$$\Dev = \left\{ (uv, M) (w, \{w\})	~\left|~ 
			u \in V_i, w \neq v, uw \in E
			\right.\right\}.$$
		\end{itemize}

		\item Let $H = (h_0, M_0) (h_0h'_0, M_0) \dots (h_nh'_n, M_n)$ be a history in $\Conc_{\lambda i}(G)$: the \emph{projection} of the history $H$ is the history $\dH = h_0 \dots h_n$ in the game $G$.
	    That definition is naturally extended to plays.

		\item The payoff function $\nu_\C = -\nu_\P$ measures player $i$'s payoff, with a winning condition if the constructed strategy profile is not $\lambda$-rational, that is to say if after finitely many player $i$'s deviations, it can generate a play which is not $\lambda$-consistent:
		
		\begin{itemize}
		    \item $\nu_\C(\pi) = +\infty$ if after some index $n \in \N$, the play $\pi_{\geq 2n}$ contains no deviation, and if the play $\dpi_{\geq n}$ is not $\lambda$-consistent\footnote{When we combine the notations $\dpi$ and $\pi_{\geq n}$, the notation $\dpi$ is applied first; that is, the play $\dpi_{\geq n}$ is the projection of the play $\pi_{\geq 2n}$, not $\pi_{\geq n}$.};
		    
		    \item $\nu_\C(\pi) = \mu_i(\dpi)$ otherwise.
		\end{itemize}
	\end{itemize}
\end{defi}

Like in the abstract negotiation game, the goal of Challenger is to find a $\lambda$-rational strategy profile that forces the worst possible payoff for player $i$, and the goal of Prover is to find a possibly deviating strategy for player $i$ that gives them the highest possible payoff.

\begin{rem}
    The concrete negotiation game has the following properties.
    \begin{itemize}
        \item If $G$ is Borel, then $\Conc_{\lambda i}(G)$ is Borel.
    
        \item When $\pi_{\geq 2n}$ contains no deviation, the memory of its states is increasing, and therefore eventually equal to the memory $M = \Occ(\dpi_{\geq n})$.
        If it is the longest such suffix of $\pi$, it means that the play $\dpi_{\geq n}$ is $\lambda$-consistent if and only if for each player $j$ and each vertex $v \in V_j$, we have $\mu_j(\dpi) \geq \lambda(v)$.
        
        \item When $G$ is a mean-payoff game and when $\lambda$ has finite values, the concrete negotiation game can be seen as a multidimensional two-player zero-sum mean-payoff game, with one dimension for each player, meant to control that each player gets the payoff they require, plus a special dimension $\star$, meant to measure player $i$'s actual payoff.
        The rewards of the proposals are all equal to $0$, and the rewards of acceptations and deviations are $\hr_\star((uv, M)(v', N)) = 2r_i(uv')$, and $\hr_j((uv, M)(v', N)) = 2r_j(uv') - 2\max\{\lambda(w) ~|~ w \in M \cap V_j\}$.
        The payoff $\nu_\C(\pi)$ equals then $+\infty$ for every $\pi$ that contains finitely many deviations and such that for some $j \in \Pi$, the mean-payoff $\hmu_j(\pi)$ is negative, and $\nu_\C(\pi) = \hmu_\star(\pi)$ otherwise.
    \end{itemize}
\end{rem}

        \subsection{Link with the negotiation function}

The concrete negotiation game is equivalent to the abstract one: the only differences are that the plays proposed by Prover are proposed edge by edge, and that their $\lambda$-consistency is not written in the rules of the game but in its payoff function.

\begin{thm} \label{thm_concrete_game}
	Let $G$ be a Borel prefix-independent game played on a finite graph.
	Let $\lambda$ be a requirement, let $i$ be a player and let $v_0 \in V_i$.
	Then, we have $\val_\C\left(\Conc_{\lambda i}(G)_{\|s_0}\right) = \nego(\lambda)(v_0)$.
	Moreover, if for each player $i$ and every state $v_0 \in V_i$, Prover has an optimal strategy in $\Conc_{\lambda i}(G)_{\|(v_0, \{v_0\})}$, then $G$ is a game with steady negotiation.
\end{thm}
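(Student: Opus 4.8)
The plan is to establish the two assertions of Theorem~\ref{thm_concrete_game} separately, since the second one reduces essentially to the first. For the equality $\val_\C(\Conc_{\lambda i}(G)_{\|s_0}) = \nego(\lambda)(v_0)$, I would argue by showing two inequalities, each obtained by translating strategies back and forth between $\Conc_{\lambda i}(G)$ and the game $G_{\|v_0}$. The key observation is that a history in the concrete negotiation game, via its projection $\dH$, records a history in $G$ together with the information of where player~$i$ has deviated; conversely, a play of $G$ built from a $\lambda$-rational strategy profile $\bsigma_{-i}$ against a deviation $\sigma_i'$ of player~$i$ can be ``reproduced'' edge by edge in $\Conc_{\lambda i}(G)$. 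I would first show $\val_\C(\Conc_{\lambda i}(G)_{\|s_0}) \leq \nego(\lambda)(v_0)$: given any $\lambda$-rational $\bsigma_{-i} \in \lRat(v_0)$ (assuming some $\sigma_i$), Challenger can use it to build a strategy in $\Conc_{\lambda i}(G)$ — every acceptation corresponds to Prover's proposed edge being consistent with continuing $\bsigma_{-i}$, and whenever Prover proposes an edge that contradicts what $\bsigma_{-i}$ would permit as a $\lambda$-consistent continuation, Challenger performs the corresponding deviation. Because $\bsigma_{-i}$ is $\lambda$-rational, the resulting play of $\Conc_{\lambda i}(G)$ never triggers the $+\infty$ condition, so Prover's payoff $\nu_\C(\pi) = \mu_i(\dpi)$ is bounded above by $\sup_{\tau_i}\mu_i(\<\bsigma_{-i},\tau_i\>)$; taking the infimum over $\bsigma_{-i}$ gives the inequality.

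For the reverse inequality $\val_\C(\Conc_{\lambda i}(G)_{\|s_0}) \geq \nego(\lambda)(v_0)$, I would take any Challenger strategy $\tau_\C$ in $\Conc_{\lambda i}(G)$ and extract from it a strategy profile $\bsigma_{-i}$ in $G$: the acceptations made by $\tau_\C$ in response to Prover's proposals dictate, given any play of $G$, how the environment reacts after each of player~$i$'s deviations — essentially $\bsigma_{-i}$ plays along with the infinite play that Prover would be forced to offer from the current ``memory state,'' and resets this commitment each time player~$i$ deviates. One must check that $\bsigma_{-i}$ obtained this way is genuinely $\lambda$-rational: this is where one invokes, if $\val_\C < +\infty$, the fact that Prover has \emph{some} response avoiding the $+\infty$ payoff, so there is always a $\lambda$-consistent continuation to commit to, witnessed by a suitable $\sigma_i$. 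Then for any $\tau_i$, the outcome $\<\bsigma_{-i}, \tau_i\>$ is reproduced by a play in $\Conc_{\lambda i}(G)$ compatible with $\tau_\C$, so $\mu_i(\<\bsigma_{-i},\tau_i\>) \leq \val_\C$; taking $\sup$ over $\tau_i$ and $\inf$ over the obtained $\bsigma_{-i}$ gives $\nego(\lambda)(v_0) \leq \val_\C$. One also needs to dispatch the degenerate cases: if Prover has no $\lambda$-consistent play to propose, she loses immediately (value $+\infty$), which matches $\lRat(v_0) = \emptyset$ and hence $\nego(\lambda)(v_0) = \inf\emptyset = +\infty$.

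For the second assertion, suppose Prover has an optimal strategy $\tau_\P$ in $\Conc_{\lambda i}(G)_{\|(v_0,\{v_0\})}$ for every $i$ and every $v_0 \in V_i$ with $\lRat(v_0) \neq \emptyset$ (so that the value is not $+\infty$). I would translate $\tau_\P$ into a strategy profile $\bsigma_{-i}$ in $G_{\|v_0}$ realizing the infimum in the definition of $\nego(\lambda)(v_0)$: $\tau_\P$ being optimal means $\sup_{\tau_\C}\nu_\C(\<\tau_\P,\tau_\C\>) = \val_\C = \nego(\lambda)(v_0) < +\infty$, so in particular every play compatible with $\tau_\P$ avoids the $+\infty$ payoff, which means the strategy profile it induces in $G$ is $\lambda$-rational (assuming an appropriate $\sigma_i$ recovered from the proposals $\tau_\P$ makes along the non-deviating continuation), and against it player~$i$'s best payoff is exactly $\nego(\lambda)(v_0)$. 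Hence the infimum defining $\nego(\lambda)(v_0)$ is realized, i.e.\ $G$ has steady negotiation.

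The main obstacle I expect is the bookkeeping in the translation between the two games, particularly making the correspondence ``Challenger deviation in $\Conc$ $\leftrightarrow$ player~$i$'s deviation in $G$'' precise across an infinite sequence of deviations, and verifying that the $\lambda$-consistency condition encoded in $\nu_\C$ (which quantifies over suffixes $\dpi_{\geq n}$ after the last deviation) matches exactly the definition of $\lambda$-rationality of the induced $\bsigma_{-i}$ (which quantifies over all histories $hv$ compatible with $\bsigma_{-i}$). The prefix-independence of $G$ is essential here, since it lets us identify $\mu_i(\dpi)$ with $\mu_i$ of any suffix and thereby ignore the exact lengths of the initial segments before resets; I would flag explicitly where it is used.
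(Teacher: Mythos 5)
Your high-level plan --- translating strategies back and forth between $\Conc_{\lambda i}(G)$ and $G_{\|v_0}$, one direction per inequality, with the degenerate $+\infty$ case dispatched separately --- is exactly the paper's approach. But in both inequality arguments you have the roles of Prover and Challenger reversed, and this is not a harmless relabelling: it breaks both the legality of the moves you describe and the quantifier logic. In $\Conc_{\lambda i}(G)$, Prover is the one who builds the coalition's strategy profile (her proposals $(v,M)(vw,M)$ are the edges of $\bsigma_{-i}$, and the $+\infty$ payoff punishes \emph{her} if the resulting profile is not $\lambda$-rational), while Challenger plays the part of player $i$: his only non-forced moves are the deviations $(uv,M)(w,\{w\})$ at vertices $u\in V_i$, i.e.\ player $i$'s own deviations. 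So a $\lambda$-rational profile $\bsigma_{-i}$ must be turned into a \emph{Prover} strategy (propose the edge that $\bsigma_{-i}$ prescribes), not a Challenger strategy; your description of Challenger ``performing the corresponding deviation'' whenever Prover proposes an edge contradicting $\bsigma_{-i}$ does not correspond to any legal move, since Challenger cannot veto proposals at vertices outside $V_i$, and his deviations change player $i$'s choice, not the environment's. Symmetrically, the profile $\bsigma_{-i}$ witnessing $\nego(\lambda)(v_0)\le\val_\C$ must be read off from a Prover strategy $\tau_\P$ (via the projection of histories compatible with $\tau_\P$), not from a Challenger strategy $\tau_\C$: for an arbitrary $\tau_\C$ there is no reason that plays compatible with it satisfy $\nu_\C\le\val_\C$, whereas a $\tau_\P$ with $\sup_{\tau_\C}\nu_\C(\<\btau\>)$ close to $\val_\C$ does give $\sup_{\sigma'_i}\mu_i(\<\bsigma_{-i},\sigma'_i\>)\le\sup_{\tau_\C}\nu_\C(\<\btau\>)$. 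Put bluntly: since $\val_\C=\inf_{\tau_\P}\sup_{\tau_\C}\nu_\C$, exhibiting a single Challenger strategy can only \emph{lower}-bound $\val_\C$, so your argument for $\val_\C\le\nego(\lambda)(v_0)$ cannot close as stated.

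The tell is that your treatment of the second assertion already uses the correct correspondence --- you translate Prover's optimal strategy into the $\lambda$-rational profile realizing the infimum --- which is inconsistent with the assignment you use in the two inequalities. Once Prover and Challenger are swapped throughout the first part, your outline becomes essentially the paper's proof: Prover strategies with $\sup_{\tau_\C}\nu_\C\neq+\infty$ correspond, via the projection $\dH$, to $\lambda$-rational profiles $\bsigma_{-i}$ guaranteeing the same bound on player $i$'s payoff, and conversely each $\bsigma_{-i}\in\lRat(v_0)$ yields a Prover strategy of value at most $\sup_{\sigma'_i}\mu_i(\<\bsigma_{-i},\sigma'_i\>)$. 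The bookkeeping you flag at the end (matching the suffix condition in $\nu_\C$ with $\lambda$-rationality over all compatible histories, using prefix-independence) is indeed where the remaining work lies and is handled in the paper much as you anticipate.
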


\begin{proof} \hfill
\begin{itemize}
	\item \emph{First direction: $\nego(\lambda)(v_0) \leq \val_\C\left(\Conc_{\lambda i}(G)_{\|s_0}\right)$.}
	
	Let $\tau_\P$ be a strategy such that $\sup_{\tau_\C} \nu_\C(\< \btau \>) \neq +\infty$, and let $\bsigma$ be the strategy profile defined by $\bsigma(\dH) = w$ for every history $H$ compatible with $\tau_\P$ (by induction, the projection is injective on the histories compatible with $\tau_\P$) with $\tau_\P(H) = (vw, \cdot)$, and arbitrarily defined on any other histories.
	We prove that the strategy profile $\bsigma_{-i}$ is $\lambda$-rational assuming the strategy $\sigma_i$, and that $\sup_{\sigma'_i} \mu_i(\< \bsigma_{-i}, \sigma'_i \>) \leq \sup_{\tau_\C} \nu_\C(\< \btau \>)$.
	
	\begin{itemize}
		\item The strategy profile $\bsigma_{-i}$ is $\lambda$-rational, assuming the strategy $\sigma_i$. Indeed, let us assume it is not.
		Then, there exists a history $h = h_0 \dots h_n$ in $G_{\|v_0}$ compatible with $\bsigma_{-i}$ such that the play $\< \bsigma_{\|h} \>$ is not $\lambda$-consistent.
		Then, let:
		$$Hs = \left(h_0, M_0\right) \left(h_0\bsigma(h_0), M_0\right) \dots \left(h_n, M_n\right)$$
		be the only history in $\Conc_{\lambda i}(G)_{\|s_0}$ compatible with $\tau_\P$ such that $\dH = h$.
		Let $\tau_\C$ be a strategy constructing the history $h$, defined by:
		$$\tau_\C\left(H_0 \dots H_{2k-1}\right) = H_{2k}$$
		for every $k$, and:
		$$\tau_\C\left(H' (vw, M)\right) = (w, M \cup \{w\})$$
		for any other history $H' (vw, M)$.
		Then, the play $\pi = \< \btau \>$ contains finitely many deviations (Challenger stops the deviations after having constructed the history $h$), and the play $\dpi_{\geq n}$ is not $\lambda$-consistent. Therefore, we have $\nu_\C(\pi) = +\infty$, which is false by hypothesis.

		\item Now, let us prove the inequality $\sup_{\sigma'_i} \mu_i(\< \bsigma_{-i}, \sigma'_i \>) \leq \sup_{\tau_\C} \nu_\C(\< \btau \>)$.
		Let $\sigma'_i$ be a strategy for player $i$, and let $\eta = \< \bsigma_{-i}, \sigma'_i \>$.
		Let $\tau_\C$ be a strategy such that for every $k$:
		$$\tau_\C\left((\eta_0, \cdot)(\eta_0 \cdot, \cdot) \dots (\eta_k \cdot, \cdot)\right) = (\eta_{k+1}, \cdot),$$
		i.e. a strategy forcing $\eta$ against $\tau_\P$.
		Then, since $\nu_\C(\< \btau \>) \neq +\infty$ by hypothesis on $\tau_\P$, we have $\mu_i(\eta) = \nu_\C(\< \btau \>)$, hence $\mu_i(\< \bsigma_{-i}, \sigma'_i \>) \leq \sup_{\tau_\C} \nu_\C(\< \btau \>)$, hence the desired inequality.
	\end{itemize}
	
	Moreover, if $\tau_\P$ is optimal, then the $\lambda$-rational strategy profile $\bsigma_{-i}$ realizes the infimum:
	$$\inf_{\bsigma_{-i} \in \lRat(v_0)} \sup_{\sigma'_i} \mu_i(\< \bsigma_{-i}, \sigma'_i \>),$$
	hence if there exists such an optimal strategy for every vertex $v_0$, then the game $G$ is with steady negotiation.

	\item \emph{Second direction: $\val_\C\left(\Conc_{\lambda i}(G)_{\|s_0}\right) \leq \nego(\lambda)(v_0)$.}
	
	Let $\bsigma_{-i}$ be a $\lambda$-rational strategy profile from $v_0$, assuming the strategy $\sigma_i$; let us define a strategy $\tau_\P$, by $\tau_\P(H(v, \cdot)) = \left(v\bsigma(\dH v), \cdot\right)$ for every history $H$ and for every $v \in V$.
	Let us prove the inequality $\sup_{\tau_\C} ~\nu_\C(\< \btau \>) \leq \sup_{\sigma'_i} ~\mu_i(\< \bsigma_{-i}, \sigma'_i \>).$
	
	Let $\tau_\C$ be a strategy for Challenger, and let $\pi = \< \btau \>$.
	If $\nu_\C(\pi) = +\infty$, then there exists $n$ such that the play $\pi_{\geq 2n}$ contains no deviation, i.e. $\dpi_{\geq n} = \< \bsigma_{\|\dpi_{\leq n}} \>$, and that play is not $\lambda$-consistent, which is impossible. 
	Therefore, we have $\nu_\C(\pi) \neq + \infty$, and as a consequence $\nu_\C(\pi) = \mu_i(\dpi) = \mu_i(\< \bsigma_{-i}, \sigma'_i \>)$ for some strategy $\sigma'_i$, hence $\nu_\C(\pi) \leq \sup_{\sigma'_i} \mu_i(\< \bsigma_{-i}, \sigma'_i \>)$, hence the desired inequality. \qedhere
\end{itemize}
\end{proof}

    \subsection{Example}

Let us consider again the game from Example~\ref{ex_sans_spe}.
Figure~\ref{fig_concrete} represents the game $\Conc_{\lambda_1 \playcircle}(G)$ (with $\lambda_1(a) = 1$ and $\lambda_1(b) = 2$), where the dashed states are controlled by Challenger, and the other ones by Prover.

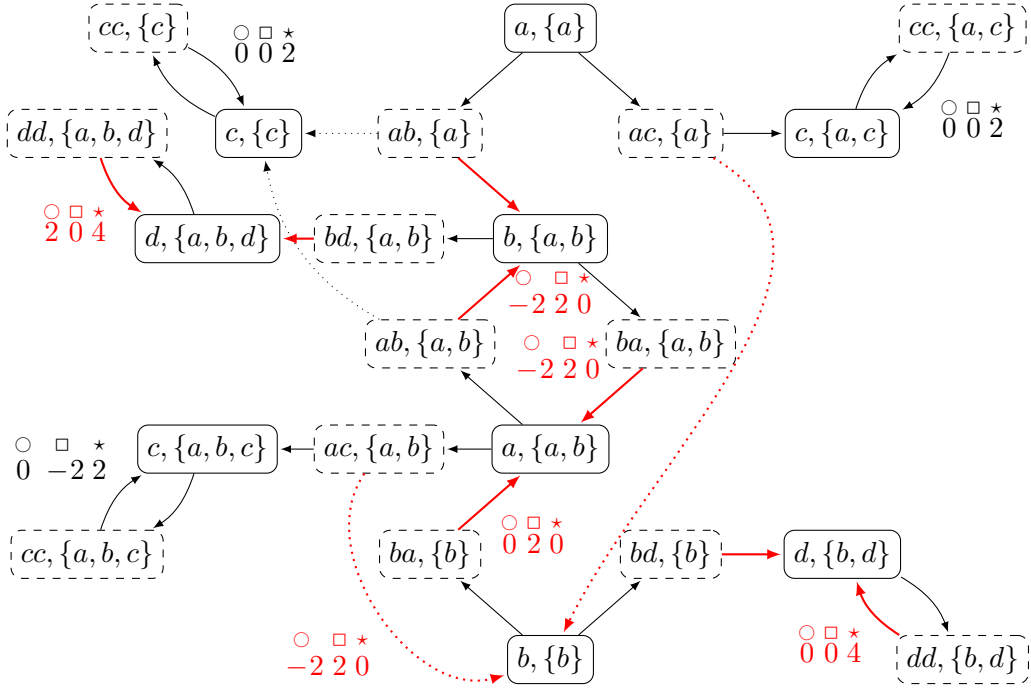
\begin{figure} 
\begin{center}
	\begin{tikzpicture}[->, >=latex,shorten >=1pt, initial text={}, scale=0.4, every node/.style={scale=1}]
	\newcommand{\deltaX}{4}
	\newcommand{\deltaY}{3.5}
	\newcommand{\DeltaX}{5.7}
	\newcommand{\DeltaY}{4.5}
	
	\node[draw, rounded corners] (b-ab) at (0, \deltaY) {$b, \{a, b\}$};
	\node[draw, rounded corners, dashed] (ba-ab) at (\deltaX, 0) {$ba, \{a, b\}$};
	\node[draw, rounded corners] (a-ab) at (0, -\deltaY) {$a, \{a, b\}$};
	\node[draw, rounded corners, dashed] (ab-ab) at (-\deltaX, 0) {$ab, \{a, b\}$};
	
	\node[draw, rounded corners] (a-a) at (0, 3*\deltaY) {$a, \{a\}$};
	\node[draw, rounded corners, dashed] (ab-a) at (-\deltaX, 2*\deltaY) {$ab, \{a\}$};
	\node[draw, rounded corners] (c-c) at (-\deltaX - \DeltaX, 2*\deltaY) {$c, \{c\}$};
	\node[draw, rounded corners, dashed] (cc-c) at (-2*\deltaX - \DeltaX, 3*\deltaY) {$cc, \{c\}$};
	\node[draw, rounded corners, dashed] (ac-a) at (\deltaX, 2*\deltaY) {$ac, \{a\}$};
	\node[draw, rounded corners] (c-ac) at (\deltaX + \DeltaX, 2*\deltaY) {$c, \{a, c\}$};
	\node[draw, rounded corners, dashed] (cc-ac) at (2*\deltaX + \DeltaX, 3*\deltaY) {$cc, \{a, c\}$};
	
	\node[draw, rounded corners] (b-b) at (0, -3*\deltaY) {$b, \{b\}$};
	\node[draw, rounded corners, dashed] (ba-b) at (-\deltaX, -2*\deltaY) {$ba, \{b\}$};
	\node[draw, rounded corners, dashed] (bd-b) at (\deltaX, -2*\deltaY) {$bd, \{b\}$};
	\node[draw, rounded corners] (d-bd) at (\deltaX + \DeltaX, -2*\deltaY) {$d, \{b, d\}$};
	\node[draw, rounded corners, dashed] (dd-bd) at (2*\deltaX + \DeltaX, -3*\deltaY) {$dd, \{b, d\}$};
	
	\node[draw, rounded corners, dashed] (ac-ab) at (-\DeltaX, -\deltaY) {$ac, \{a, b\}$};
	\node[draw, rounded corners] (c-abc) at (-2*\DeltaX, -\deltaY) {$c, \{a, b, c\}$};
	\node[draw, rounded corners, dashed] (cc-abc) at (-2*\DeltaX - \deltaX, -2*\deltaY) {$cc, \{a, b, c\}$};
	
	\node[draw, rounded corners, dashed] (bd-ab) at (-\DeltaX, \deltaY) {$bd, \{a, b\}$};
	\node[draw, rounded corners] (d-abd) at (-2*\DeltaX, \deltaY) {$d, \{a, b, d\}$};
	\node[draw, rounded corners, dashed] (dd-abd) at (-2*\DeltaX - \deltaX, 2*\deltaY) {$dd, \{a, b, d\}$};

	\path (b-ab) edge (ba-ab);
	\path[thick, red] (ba-ab) edge node[above left] {$\scriptsize{\stackrel{\strut\playcircle}{\strut-\!2} ~ \stackrel{\strut\Box}{\strut2} ~ \stackrel{\strut\star}{\strut0}}$} (a-ab);
	\path (a-ab) edge (ab-ab);
	\path (ab-ab) edge[thick, red] node[right] {$\scriptsize{\stackrel{\strut\playcircle}{\strut-\!2}~\stackrel{\strut\Box}{\strut2} ~ \stackrel{\strut\star}{\strut0}}$} (b-ab);
	
	\path (a-a) edge (ab-a);
	\path[dotted] (ab-a) edge (c-c);
	\path[bend left=20] (c-c) edge (cc-c);
	\path[bend left=20] (cc-c) edge node[above right] {$\scriptsize{\stackrel{\playcircle}{0}~\stackrel{\Box}{0} ~ \stackrel{\star}{2}}$} (c-c);
	\path (a-a) edge (ac-a);
	\path (ac-a) edge (c-ac);
	\path[bend left=20] (c-ac) edge (cc-ac);
	\path[bend left=20] (cc-ac) edge node[below right] {$\scriptsize{\stackrel{\playcircle}{0}~\stackrel{\Box}{0} ~ \stackrel{\star}{2}}$} (c-ac);
	
	\path (b-b) edge (ba-b);
	\path (b-b) edge (bd-b);
	\path[thick, red] (bd-b) edge (d-bd);
	\path[bend left=20] (d-bd) edge (dd-bd);
	\path[bend left=20, thick, red] (dd-bd) edge node[below left] {$\scriptsize{
			\stackrel{\playcircle}{0}~\stackrel{\Box}{0} ~ \stackrel{\star}{4}}$} (d-bd);
	
	\path (a-ab) edge (ac-ab);
	\path (ac-ab) edge (c-abc);
	\path[bend left=20] (c-abc) edge (cc-abc);
	\path[bend left=20] (cc-abc) edge node[above left] {$\scriptsize{
			\stackrel{\strut\playcircle}{\strut0}~\stackrel{\strut\Box}{\strut-\!2} ~ \stackrel{\strut\star}{\strut2}
			}$} (c-abc);
	
	\path (b-ab) edge (bd-ab);
	\path[thick, red] (bd-ab) edge (d-abd);
	\path[bend right=20] (d-abd) edge (dd-abd);
	\path[bend right=20, thick, red] (dd-abd) edge node[below left] {$\scriptsize{
			\stackrel{\playcircle}{2}~\stackrel{\Box}{0} ~ \stackrel{\star}{4}}$} (d-abd);
	
	\path[out=-30, in=64, dotted, thick, red] (ac-a) edge (b-b);
	\path[dotted, bend left=25] (ab-ab) edge (c-c);
	\path[thick, red] (ab-a) edge (b-ab);
	\path[thick, red] (ba-b) edge node[below right] {$\scriptsize{
			\stackrel{\playcircle}{0}~\stackrel{\Box}{2} ~ \stackrel{\star}{0}
		}$} (a-ab);
	\path[dotted, bend right=70, thick, red] (ac-ab) edge node[below left] {$\scriptsize{
			\stackrel{\strut\playcircle}{\strut-\!2}~\stackrel{\strut\Box}{\strut2}~\stackrel{\strut\star}{\strut0}
		}$} (b-b);
	\end{tikzpicture}
\end{center}
\caption{A concrete negotiation game} \label{fig_concrete}
\end{figure}

The dotted arrows indicate the deviations, and the transitions have been labelled with the immediate rewards defined as in the remark above.
The transitions that are not labelled are either zero for the three coordinates, or meaningless since they cannot be used more than once.
The red arrows indicate a (memoryless) optimal strategy for Challenger.
Against that strategy, the lowest payoff Prover can ensure is $2$.
Therefore, we have $\nego(\lambda_1)(v_0) = 2$, in line with the abstract game in Example~\ref{ex_abstract_game}.

		\subsection{Resolution}

We now know that $\nego(\lambda)(v)$, for a given requirement $\lambda$, a given player $i$ and a given state $v \in V_i$, is the value of the concrete negotiation game $\Conc_{\lambda i}(G)_{\|(v, \{v\})}$.
But we still do not know how to compute that value.
We present here an important result for that purpose.

For any game $G_{\|v_0}$ and any memoryless strategy $\sigma_i$, we write $G_{\|v_0}[\sigma_i]$ the graph \emph{induced} by $\sigma_i$, defined as the underlying graph of $G$ where all the transitions that are not compatible with $\sigma_i$, and all the vertices that are then no longer accessible from $v_0$, have been omitted.

\begin{lem} \label{lm_concrete_memoryless}
	Let $G_{\|v_0}$ be a mean-payoff game, let $i$ be a player, let $\lambda$ be a requirement and let $\Conc_{\lambda i}(G)_{\|s_0}$ be the corresponding concrete negotiation game.
	There exists a memoryless strategy $\tau_\C$ that is optimal for Challenger.
\end{lem}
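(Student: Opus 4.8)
The plan is to exploit the multi-dimensional mean-payoff presentation of $\Conc_{\lambda i}(G)$ recalled in the remark above. First I would dispose of the infinite values of $\lambda$. A vertex $w\in V_j$ with $\lambda(w)=+\infty$ can never belong to a $\lambda$-consistent play, so every proposal that reaches such a $w$ is fatal for Prover; formally, one records in the state that $\lambda$-consistency is already broken, which collapses the corresponding part of the graph into a single Challenger-winning sink (memoryless-easy) that does not interact with the rest, and a vertex with $\lambda(w)=-\infty$ imposes no constraint and its dimension may be dropped locally. After this reduction we may assume $\lambda\colon V\to\Q$, and $\Conc_{\lambda i}(G)$ is a two-player zero-sum game on the finite graph $(S,\Delta)$ whose payoff $\nu_\C$, on the dimensions $\Pi\cup\{\star\}$, equals $+\infty$ for Challenger on every play with finitely many deviations for which some coordinate mean-payoff $\hmu_j$ ($j\in\Pi$) is negative, and $\hmu_\star$ otherwise.

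I would then split on the value, which by Theorem~\ref{thm_concrete_game} equals $\nego(\lambda)(v_0)$. The case $\val_\C=+\infty$ is clean. Call a Prover state $(v,M)$ \emph{stuck} if there is no play from $v$ that is $\lambda$-consistent with respect to the requirements carried by $M$ (whether $(v,M)$ is stuck depends only on $(v,M)$, and it is decidable, strongly connected component by strongly connected component, via Lemma~\ref{lm_dseal}). A short argument using prefix-independence shows that if $(v,M)$ is stuck, then after any acceptation the resulting state is again stuck; hence, from a stuck state, Challenger playing ``always accept'' stays in stuck states forever and produces a play with no further deviation whose tail is not $\lambda$-consistent, i.e.\ $\nu_\C=+\infty$. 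Conversely, if Challenger cannot force the play into the stuck region, then Prover can keep avoiding it and eventually play $\lambda$-consistently, so $\nu_\C<+\infty$ always; thus $\val_\C=+\infty$ exactly when Challenger can force reaching the stuck region. Reaching a fixed target set on a finite graph is a reachability game, which admits a memoryless optimal (attractor) strategy for Challenger, and composing it with ``accept forever once in the stuck region'' is still a function of the current state, hence memoryless.

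For the case $\val_\C<+\infty$ I would use a threshold argument. For each rational $t\le\val_\C$, consider the qualitative game in which Challenger wins iff $\nu_\C(\pi)\ge t$; its winning set, from Challenger's side, is the union of the reachability objective above (reach the stuck region, then accept forever) with the one-dimensional mean-payoff threshold $\{\hmu_\star\ge t\}$, so Challenger sits on the \emph{disjunctive} side of the multi-dimensional requirements. I would derive a memoryless Challenger strategy guaranteeing $\nu_\C\ge t$ from $s_0$ by combining the memoryless determinacy of one-dimensional mean-payoff games with the positionality of the disjunctive side of multi-dimensional mean-payoff games, as developed and used in \cite{DBLP:conf/cav/BrenguierR15,DBLP:conf/concur/ChatterjeeDEHR10} and ultimately resting on \cite{DBLP:conf/icalp/Kopczynski06} for the qualitative core (cf.\ Lemma~\ref{lm_memoryless}); the ``finitely many deviations'' co-Büchi ingredient is absorbed by noting that Challenger never needs to deviate infinitely often, since as soon as he has steered the play into a favourable configuration he may commit to accepting forever, a decision that is itself memoryless on $(S,\Delta)$. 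Finally, since $S$ is finite there are only finitely many memoryless Challenger strategies, hence only finitely many values they guarantee; choosing one that guarantees the largest of these values yields a memoryless strategy whose guaranteed value is at least $\sup_{t\le\val_\C}t=\val_\C$, i.e.\ an optimal one.

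The main obstacle is precisely the entanglement of the co-Büchi condition ``finitely many deviations'' with the mean-payoff coordinates: one must argue that Challenger's optimal behaviour for this combined condition is positional — that ``commit to stop deviating'' can be decided without memory and without losing value — and that the fact that $\nu_\C$ is neither convex nor concave (which blocks a direct appeal to Lemma~\ref{lm_memoryless}) is harmless on Challenger's side, because that side only ever faces a \emph{disjunction} of mean-payoff constraints rather than a conjunction.
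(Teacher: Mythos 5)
Your strategy for the finite-value case rests on a claim that is false, and this is exactly where the real difficulty of the lemma lives. You write that the co-B\"uchi ingredient ``finitely many deviations'' is absorbed because ``Challenger never needs to deviate infinitely often'': this is contradicted by the paper's own examples. In Example~\ref{ex_abstract_game}, Challenger can only secure the value $3$ from $b$ by deviating at \emph{every} opportunity and generating $(ab)^\omega$; if he ever commits to accepting forever, Prover completes a play ending in $d^\omega$ and he only gets $2$. The same phenomenon drives Theorem~\ref{thm_not_stationary}. So ``commit to accepting once in a favourable configuration'' does not absorb the co-B\"uchi condition, and you are left having to prove positionality for the genuinely entangled objective ``(eventually no deviation AND some constraint $\hmu_j<0$ is violated) OR $\mu_i(\dpi)\geq t$'', which is not a plain disjunction of mean-payoff thresholds; the citation of the disjunctive side of multi-dimensional mean-payoff games does not cover it. A second, smaller issue is that you never address the $\liminf$/$\limsup$ asymmetry: Challenger is the maximizer of a $\liminf$ mean-payoff, and the qualitative objective $\{\liminf\MP_\star\geq t\}$ is convex, not concave, so Lemma~\ref{lm_memoryless} does not apply to it directly either.

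The paper's route is different and resolves both points at once. It introduces the relaxed payoff $\nu'_\C$ obtained from $\nu_\C$ by replacing $\liminf$ with $\limsup$, and proves \emph{directly} that $\nu'_\C$ is concave: for a shuffling $\xi$ of $\pi$ and $\chi$ with $\nu'_\C(\pi),\nu'_\C(\chi)\neq+\infty$, one checks that $\xi$ also avoids the value $+\infty$ (if $\xi$ has finitely many deviations then so do $\pi$ and $\chi$, their memories stabilize to the same set $M$, and convexity of the ordinary mean-payoff propagates $\lambda$-consistency to $\dxi$), after which concavity of $\limsup$ mean-payoff finishes the estimate. Lemma~\ref{lm_memoryless} then yields a memoryless $\tau_\C$ optimal for $\nu'_\C$, and a cycle argument in the finite induced graph $\Conc_{\lambda i}(G)_{\|s_0}[\tau_\C]$ (where each simple cycle $c$ gives a play $Hc^\omega$ compatible with $\tau_\C$, on which $\liminf$ and $\limsup$ coincide) transfers optimality back to $\nu_\C$. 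If you want to salvage your threshold decomposition, you would need to supply an analogue of this concavity/shuffling analysis for the combined objective; as written, the proposal names the obstacle but does not overcome it.
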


\begin{proof}
    The structure of that proof is inspired from the proof of Lemma~14 in \cite{DBLP:journals/iandc/VelnerC0HRR15}.
    
    Let $\nu'_\C$ be the payoff function defined by:
    \begin{itemize}
        \item $\nu'_\C(\pi) = +\infty$ if there exists $n$ such that $\pi_{\geq 2n}$ contains no deviation, and such that the play $\dpi_{\geq n}$ is not $\lambda$-consistent.
        
        \item $\nu'_\C(\pi) = \limsup_n \MP_i(\dpi_{\leq n})$ otherwise.
    \end{itemize}
    
    The payoff function $\nu'_\C$ is then defined as $\nu_\C$, but with a limit superior instead of inferior.
    The payoff function $\nu'_\C$ is concave.
    Indeed, let $\pi$ and $\chi$ be two plays in $\Conc_{\lambda i}(G)_{\|v_0}$, and let $\xi$ be a shuffling of them.
    Let us check that $\nu'_\C(\xi) \leq \max\{\nu'_\C(\pi), \nu'_\C(\chi)\}$.
    
    If either $\nu'_\C(\pi) = +\infty$ or $\nu'_\C(\chi) = +\infty$, it is immediate.
    Otherwise, we also have $\nu'_\C(\xi) \neq +\infty$: if either $\pi$ or $\chi$ contains infinitely many deviations, then so does $\xi$.
    If both contain finitely many deviations, then so does $\xi$: the states of $\xi$ have therefore eventually the same memory $M$, which is also the memory of, eventually, the states of both $\pi$ and $\chi$.
    Now, since $\nu'_\C(\pi), \nu'_\C(\chi) \neq +\infty$, we have $\mu_j(\dpi), \mu_j(\dchi) \geq \lambda(v)$ for each player $j$ and every $v \in M \cap V_j$.
    Since mean-payoff functions are convex, it is also the case for the play $\dxi$, which is a shuffling of $\dpi$ and $\dchi$.
    Hence $\nu_\C(\xi) \neq +\infty$.
    
    Therefore, we have $\nu_\C(\xi) = \limsup_n \MP_i(\dxi_{\leq n})$, as well as $\nu_\C(\pi) = \limsup_n \MP_i(\dpi_{\leq n})$ and $\nu_\C(\dchi) = \limsup_n \MP_i(\dchi_{\leq n})$.
    Since, as shown in \cite{DBLP:journals/iandc/VelnerC0HRR15}, mean-payoff functions defined with a limit superior are concave, it implies $\nu_\C(\xi) \leq \max\{ \nu_\C(\pi), \nu_\C(\chi)\}$: the payoff function $\nu_\C$ is concave.
    
    Therefore, by Lemma~\ref{lm_memoryless} Challenger has a memoryless strategy that is optimal with regards to the payoff function $\nu'_\C$: let us write it $\tau_\C$.
    Now, we want to prove that the memoryless strategy $\tau_\C$ is also optimal with regards to $\nu_\C$.
    Note that for every play $\pi$, we have $\nu_\C(\pi) \leq \nu'_\C(\pi)$, and therefore $\val_\C\left( \Conc_{\lambda i}(G)_{\|s_0} \right) \leq \alpha$, where $\alpha$ is the value of the game $\Conc_{\lambda i}(G)_{\|s_0}$ with the payoff function $\nu'_\C$ instead of $\nu_\C$.
    Therefore, we have proven that $\tau_\C$ is optimal with regards to $\nu_\C$ if we prove that $\inf_{\tau_\P} \nu_\C(\< \btau \>) \geq \alpha$.
    
    Let $\pi$ be a play compatible with $\tau_\C$, i.e. an infinite path from $s_0$ in the graph $\Conc_{\lambda i}(G)_{\|s_0}[\tau_\C]$.
    If $\nu_\C(\pi) = +\infty$, then clearly $\nu_\C(\pi) \geq \alpha$.
    Otherwise, we have $\nu_\C(\pi) = \mu_i(\dpi)$, and by Lemma~\ref{lm_dseal}, we have:
    $$\mu_i(\dpi) \geq \min_{c \in C} \MP_i(\dc),$$
    where $C$ is the set of the simple cycles of the graph $\Conc_{\lambda i}(G)_{\|s_0}[\tau_\C]$.
    Now, for each such cycle, there exists a history $H$ such that the play $H c^\omega$ is compatible with the strategy $\tau_\C$, and therefore satisfies $\nu'_\C(Hc^\omega) \geq \alpha$, and consequently $\MP_i(\dc) \geq \alpha$.
    Therefore, we have $\mu_i(\dpi) \geq \alpha$, and the strategy $\tau_\C$ is optimal with regards to the payoff function $\nu_\C$.
\end{proof}

Using this lemma, computing $\nego(\lambda)$ for any given $\lambda$ amounts to looking for an optimal path for Prover in each graph $\Conc_{\lambda i}(G)_{\|s_0}[\tau_\C]$.
When $(V, E)$ is a graph, we write $\SConn(V, E)$ the set of its strongly connected components accessible from the vertex $v$.

Let $K$ be a strongly connected subgraph of a concrete negotiation game.
If $K$ contains no deviation, then the states of $K$ share all the same memory: let us us write it $\Mem(K)$.
If $K$ contains at least one deviation, we define $\Mem(K) = \emptyset$.
Then, we write:
$$\opt(K) = \inf \left\{ x_i ~\left|~ \begin{matrix}
        \bx \in \dseal \left( \underset{c \in \SC(K)}{\Conv} \MP_i(\dc) \right), \\
        \forall i \in \Pi, \forall v \in \Mem(K), x_i \geq \lambda(v)
    \end{matrix} \right. \right\}.$$

The set in which the variable $\bx$ evolves is the set of payoffs of plays that Prover can construct against Challenger, when she chooses to go in the strongly connected component $K$, and observes the requirements she has to observe --- those stored in the common memory of the states of $K$ if $K$ contains no deviations, and none otherwise.
Hence the following formal result.

\begin{lem} \label{lm_compute_nego}
    Let $G$ be a mean-payoff game, let $\lambda$ be a requirement, let $i$ be a player, and let $v \in V_i$.
    Then, we have:
    $$\nego(\lambda)(v) = \sup_{\tau_\C \in \ML(\Conc_{\lambda i}(G))}~ \inf_{K \in \SConn\left(\Conc_{\lambda i}(G)_{\|(v, \{v\})}[\tau_\C]\right)}~ \opt(K).$$
    Moreover, mean-payoff games are games with steady negotiation.
\end{lem}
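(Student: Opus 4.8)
The plan is to chain together the results already established. By Theorem~\ref{thm_concrete_game}, we have $\nego(\lambda)(v) = \val_\C(\Conc_{\lambda i}(G)_{\|(v,\{v\})})$, so it suffices to evaluate the value of the concrete negotiation game. By Lemma~\ref{lm_concrete_memoryless}, Challenger has a memoryless optimal strategy, hence
$$\val_\C\left(\Conc_{\lambda i}(G)_{\|(v,\{v\})}\right) = \sup_{\tau_\C \in \ML(\Conc_{\lambda i}(G))} ~ \inf_{\tau_\P} ~ \nu_\C(\< \btau \>).$$
So the heart of the argument is to show that, for a fixed memoryless Challenger strategy $\tau_\C$, the best Prover can do is $\inf_{K} \opt(K)$, where $K$ ranges over the strongly connected components of the induced finite graph $\Conc_{\lambda i}(G)_{\|(v,\{v\})}[\tau_\C]$.

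First I would fix $\tau_\C$ memoryless and analyze the finite graph $H = \Conc_{\lambda i}(G)_{\|(v,\{v\})}[\tau_\C]$. Any play $\pi$ compatible with $\tau_\C$ is an infinite path in $H$; its set of infinitely visited vertices lies within a single bottom SCC $K$ of $H$ reachable from $(v,\{v\})$, and since the payoff is prefix-independent (the projected play's mean-payoff and the $\lambda$-consistency condition depend only on the suffix) $\nu_\C(\pi)$ is determined by the behaviour in $K$. For the inequality $\inf_{\tau_\P}\nu_\C(\<\btau\>) \le \inf_K \opt(K)$: given a reachable SCC $K$, Prover can steer into $K$ and then, using Lemma~\ref{lm_dseal} applied to the strongly connected graph $K$, realize any payoff in $\dseal(\Conv_{c\in\SC(K)} \MP(\dc))$ for the projected play. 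If $K$ contains no deviation, the memory of the states in $K$ is constant equal to $\Mem(K)$, and the suffix is $\lambda$-consistent exactly when the realized payoff vector $\bx$ satisfies $x_j \ge \lambda(w)$ for all $w \in \Mem(K)\cap V_j$; choosing $\bx$ minimizing $x_i$ subject to these constraints witnesses $\opt(K)$ while keeping $\nu_\C$ finite. If $K$ contains a deviation, then $\pi$ has infinitely many deviations, so $\nu_\C(\pi) = \mu_i(\dpi)$ is automatically finite and $\Mem(K)=\emptyset$ makes the constraint vacuous, so again Prover attains $\opt(K)$. Conversely, for $\inf_{\tau_\P}\nu_\C(\<\btau\>) \ge \inf_K \opt(K)$: any Prover strategy produces a play landing in some SCC $K$; if $\nu_\C = +\infty$ the bound is trivial, otherwise $\nu_\C(\pi) = \mu_i(\dpi) \ge \min_{c\in\SC(K)}\MP_i(\dc)$ by Lemma~\ref{lm_dseal}, and the $\lambda$-consistency of the suffix (which holds since $\nu_\C$ is finite) forces the realized payoff into the feasible set defining $\opt(K)$, so $\nu_\C(\pi) \ge \opt(K) \ge \inf_K \opt(K)$.

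The main obstacle I anticipate is the bookkeeping around the memory component and deviations: one must be careful that $\Mem(K)$ is well-defined (constant memory on an SCC with no deviation), that the projection $\dc$ of a simple cycle $c$ of the concrete game is the right object to feed into Lemma~\ref{lm_dseal} — a simple cycle of $K$ may project to a non-simple closed walk in $G$, so the correct statement uses $\SC(K)$ with the concrete-game reward $\hr_\star$ measuring $\mu_i$ — and that the doubling between concrete-game steps and projected-game steps (every acceptation/deviation corresponds to one projected edge, proposals to none) does not distort the mean-payoff. Once that is handled, the final sentence — that mean-payoff games have steady negotiation — follows from Theorem~\ref{thm_concrete_game}: since Challenger has a memoryless (hence the game is determined with clean values) and more to the point Prover's optimum is realized by picking the SCC $K$ achieving $\inf_K\opt(K)$ and a play in $K$ whose payoff realizes $\opt(K)$ (the infimum in the definition of $\opt(K)$ is attained because it is the minimum of a linear functional over a closed polyhedral set, namely $\dseal$ of a polytope intersected with finitely many half-spaces), Prover has an optimal strategy in every $\Conc_{\lambda i}(G)_{\|(v_0,\{v_0\})}$, which by Theorem~\ref{thm_concrete_game} makes $G$ a game with steady negotiation.
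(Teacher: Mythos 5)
Your proposal is correct and follows essentially the same route as the paper: reduce $\nego(\lambda)(v)$ to the value of the concrete game via Theorem~\ref{thm_concrete_game}, fix an optimal memoryless Challenger strategy via Lemma~\ref{lm_concrete_memoryless}, and compute Prover's best response SCC by SCC using Lemma~\ref{lm_dseal}, splitting on whether the SCC contains a deviation. The only imprecision is the claim that a play in an SCC containing a deviation automatically has infinitely many deviations --- it need not, but (exactly as in the paper's proof) Prover can insert round trips through a deviation with vanishing frequency without altering the payoff, which is what renders the $\lambda$-consistency constraint vacuous in that case.
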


\begin{proof}
    By Lemma \ref{lm_concrete_memoryless}, in the game $\Conc_{\lambda i}(G)_{\|s}$, where $s = (v, \{v\})$, there exists a memoryless strategy $\tau_\C$ which is optimal for Challenger.
    Therefore, the best payoff that Prover can ensure against every strategy of Challenger is the best payoff she can ensure against $\tau_\C$.
    It follows from Theorem \ref{thm_concrete_game} that the highest value player $i$ can enforce against a hostile $\lambda$-rational environment is the minimal payoff of Challenger in a path in the graph $\Conc_{\lambda i}(G)_{\|s}[\tau_\C]$ starting from $s$.
    For any such path $\pi$, there exists a strongly connected component $K$ of $\Conc_{\lambda i}(G)_{\|s}[\tau_\C]$ such that after a finite number of steps, the path $\pi$ is a path in $K$.
    Let us now prove that the least payoff of Challenger in such a play is given by $\opt(K)$.
    Let us distinguish two cases.
    
    \begin{itemize}
    	\item \emph{If there is at least one deviation in $K$.}
    	
    	Then, for every play $\pi$ in $K$, it is possible to transform $\pi$ into a play $\pi'$ with $\mu(\dpi') = \mu(\dpi)$, which contains infinitely many deviations: it suffices to add round trips to a deviation, endlessly, but less and less often.
    	Therefore, the outcomes $\nu_\C(\pi)$ of plays in $K$ are exactly the mean-payoffs $\mu_i(\dpi)$ of plays in $K$ (plus possibly $+\infty$); and in particular, the lowest payoff Challenger can get in $K$ is the quantity:
    	$$\min_{c \in \SC(K)} \MP_i(\dc),$$
    	which is equal to $\opt(K)$ since $\Mem(K) = \emptyset$.

    	\item \emph{If there is no deviation in $K$.}
    	
    	By Lemma~\ref{lm_dseal}, the set of possible values of $\mu(\dpi)$ for all plays $\pi$ in $K$ is exactly the set:
    	$$X = \dseal \left( \underset{c \in \SC(K)}{\Conv} \hmu(c^\omega) \right).$$
    	
    	Since all the plays in $K$ contain finitely many deviations (actually none), for every path $\pi$ in $K$, we have $\nu_\C(\pi) = +\infty$ if and only if there exists $j \in \Pi$ and $u \in V_j \cap \Mem(K)$ such that $\mu_j(\dpi) < \lambda(u)$.
    	Then, the lowest outcome Prover can get in $K$ is:
    	$$\inf \left\{ x_i ~|~ \bx \in X, \forall j \in \Pi, \forall u \in V_j \cap \Mem(K), x_j \geq \lambda(u) \right\},$$
    	that is to say $\opt(K)$.
    \end{itemize}
    
    Theorem \ref{thm_concrete_game} enables to conclude to the desired formula.
    Moreover, let us notice that in all cases, Prover can choose one optimal play against each memoryless strategy of Challenger.
    By determinacy of Borel games, it comes that Prover has an optimal strategy, hence by Theorem~\ref{thm_concrete_game}, mean-payoff games are games with steady negotiation.
\end{proof}

We are now able to compute $\nego(\lambda)$ for a given $\lambda$.
However, because of Theorem~\ref{thm_not_stationary}, that is not sufficient to compute the least $\epsilon$-fixed point $\lambda^*$, and then to decide the SPE threshold problem.
Nevertheless, we will prove that $\lambda^*$ can also be extracted from the concrete negotiation game.

    \section{Analysis of the negotiation function} \label{sec_analysis}

\begin{defi}[Piecewise affine function]
    Let $D$ be a finite set of dimensions.
    A function $f: \bR^D \to \bR^D$ is \emph{piecewise affine} if for each $d \in D$, there exists a finite partition $\Phi_d$ of $\bR^D$, where every $P \in \Phi_d$ is a polyhedron, such that for each such $P$ there exists $\ba_P \in \R^D \setminus \{\bzero\}$ and $b_P \in \bR$ such that for every $\bx \in P$, the vector $\by = f(\bx)$ satisfies:
    $$y_d = \ba_P \cdot \bx + b_P,$$
    where $\cdot$ is the canonical scalar product.
\end{defi}

\begin{rem}
    The function $f$ is fully represented by the family $(\Phi_d, (\ba_P, b_P)_{P\in \Phi_d})_{d \in D}$.
    That representation is finite if each polyhedron $P \in \Phi_d$, for each dimension $d$, is defined by rational equations, and if each $\ba_P$ and each $b_P$ has rational or infinite values.
\end{rem}

\begin{thm} \label{thm_piecewise_affine}
    Let us assimilate every requirement $\lambda$ to the vector $\blambda = (\lambda(v))_{v \in V}$.
    Then, the negotiation function is piecewise affine, and a finite representation of it can be constructed in a time doubly exponential in the size of $G$.
\end{thm}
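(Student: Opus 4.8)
The plan is to read off $\nego(\lambda)(v)$ from the formula of Lemma~\ref{lm_compute_nego} and then observe that the only dependence on $\lambda$ hidden in that formula is very tame. The key point is that the \emph{graph} of the concrete negotiation game $\Conc_{\lambda i}(G)$ — its states $S = S_{\P} \cup S_{\C}$ and its transition set $\Delta$ — does not depend on $\lambda$ at all; only the payoff function $\nu$ does. Hence the finite set $\ML(\Conc_{\lambda i}(G))$ of Challenger's memoryless strategies is fixed once $G$, $i$ and $v$ are fixed, and for each such $\tau_\C$ the induced graph $\Conc_{\lambda i}(G)_{\|(v,\{v\})}[\tau_\C]$ and the finite family of its reachable strongly connected components are also fixed. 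Writing $\nego(\lambda)(v) = \sup_{\tau_\C \in \ML} \inf_{K} \opt(K)$, the outer supremum and inner infimum thus range over \emph{fixed} finite sets, so it suffices to show that for each fixed $K$ the map $\blambda \mapsto \opt(K)$ is piecewise affine with a computable finite representation; stability of piecewise affine functions under finite $\min$ and $\max$ (refine the polyhedral partitions to a common refinement, pick the pointwise extremum on each cell) then gives piecewise affineness of $\nego(\cdot)(v)$, and doing this for every $v \in V$ gives it for $\nego$.

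The second step is the analysis of $\opt(K)$ as a function of $\blambda$. If $K$ contains a deviation, then $\Mem(K) = \emptyset$ and $\opt(K) = \min_{c \in \SC(K)} \MP_i(\dc)$ does not depend on $\blambda$. If $K$ contains no deviation, then $\opt(K)$ is by definition the optimal value of the parametric linear program minimizing the coordinate $x_i$ over the intersection of the fixed polyhedron $P_K = \dseal\bigl(\Conv_{c \in \SC(K)} \MP(\dc)\bigr)$ — a polyhedron with rational data, computable from $K$ via Lemma~\ref{lm_dseal} and the Pareto-curve techniques of \cite{DBLP:conf/cav/BrenguierR15} — with the halfspaces $\{x_j \geq m_j\}_{j \in \Pi}$, where $m_j = \max\{\lambda(u) \mid u \in V_j \cap \Mem(K)\}$ (and $m_j = -\infty$ when that set is empty). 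By linear-programming duality, on the (polyhedral) region of parameter space where this feasible set is nonempty the value is a pointwise maximum of finitely many affine functions of $(m_j)_{j \in \Pi}$, it is never unbounded below since $P_K$, the downward sealing of a bounded polytope, is bounded below, and it equals $+\infty$ (convention $\inf \emptyset = +\infty$) on the complementary polyhedral region; so it is piecewise affine in $(m_j)_{j \in \Pi}$. Since each $m_j$ is itself a pointwise maximum of finitely many coordinates of $\blambda$, hence piecewise affine in $\blambda$, composition shows $\opt(K)$ is piecewise affine in $\blambda$; the extended-real entries $\pm\infty$ of $\blambda$ are absorbed by the same polyhedral bookkeeping, the definition of piecewise affine function already allowing infinite coefficients and polyhedra in $\bR^V$.

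For the complexity bound one tracks sizes through this construction. The game $\Conc_{\lambda i}(G)$ has $|S| = O\bigl((|V|+|E|)\cdot 2^{|V|}\bigr)$ states, hence size exponential in that of $G$; the number of memoryless Challenger strategies is at most $|V|^{|S|}$, doubly exponential in the size of $G$. For each one, the induced subgraph and its strongly connected components are computed in time polynomial in $|S|$, hence exponential in the size of $G$; for each deviation-free component one enumerates its simple cycles (at most doubly exponentially many), forms the convex hull and its downward sealing, and extracts the parametric-LP data through \cite{DBLP:conf/cav/BrenguierR15}, all producing polyhedra given by rational (in)equalities of bit-length exponential in the size of $G$. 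Finally, combining these doubly exponentially many piecewise affine functions through the $\min$ and $\max$ operations refines the polyhedral partitions, multiplying their cardinalities, but a doubly-exponential bound is stable under such products; the overall running time is therefore doubly exponential in the size of $G$, and the procedure outputs a finite representation $(\Phi_d, (\ba_P, b_P)_{P \in \Phi_d})_{d \in V}$ of $\nego$ with rational data.

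The step needing the most care is the analysis of $\opt(K)$: one must verify that the parametric-LP value is genuinely piecewise affine \emph{including} the boundary behaviour — infeasibility giving $+\infty$, the absence of unbounded-below directions, and the interaction of the constraints with $\pm\infty$ entries of $\blambda$ — and one must check that every polyhedral object manipulated along the way ($P_K$, the dual cones, the partitions produced after each $\min$ and $\max$) admits a finite rational description whose size stays within the claimed doubly-exponential bound, so that the final representation is both finite and effectively computable.
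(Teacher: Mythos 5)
Your proof is correct and follows the same overall architecture as the paper's: both start from Lemma~\ref{lm_compute_nego}, observe that the underlying graph of $\Conc_{\lambda i}(G)$ --- hence the set of memoryless Challenger strategies and the strongly connected components of the induced graphs --- is independent of $\lambda$, reduce everything to showing that $\blambda \mapsto \opt(K)$ is piecewise affine, and conclude by closure of piecewise affine functions under finite $\inf$ and $\sup$. Where you genuinely diverge is the treatment of $\opt(K)$. The paper first shows the downward sealing can be dropped, i.e.\ $\inf\{x_i \mid \bx \in \dseal Q \cap R_\lambda\} = \inf\{x_i \mid \bx \in Q \cap R_\lambda\}$, then enumerates the candidate optimal vertices explicitly as intersections of faces $\Conv_{c \in C}\MP(\dc)$ of $Q$ with faces $\bigcap_{w \in W} H_{\lambda w}$ of $R_\lambda$, extracting the affine dependence on $\blambda$ by inverting a $\lambda$-independent matrix $A$; this yields closed-form pieces $f_{CW}(\blambda) = (\MP_i(\dc))_{c}\cdot A^{-1}(B\blambda+\bbeta)$ together with explicit polyhedra of validity. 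You instead keep the sealed set $\dseal Q$ (which is indeed a polyhedron, being a coordinate projection of a polyhedron, with an effective rational representation via Lemma~\ref{lm_dseal} and \cite{DBLP:conf/concur/ChatterjeeDEHR10}) and invoke parametric LP duality: the value is the maximum of the dual objective over the vertices of a $\lambda$-independent dual polyhedron, hence a finite max of affine functions of the right-hand sides $(m_j)_j$ on the polyhedral feasibility region, and $+\infty$ outside. This is the dual of the paper's primal vertex enumeration: shorter and more standard, at the price of being less explicit about the coefficients. Your remarks that boundedness below of $\dseal Q$ excludes an unbounded LP, and that the composition $\blambda \mapsto (m_j)_j \mapsto \opt(K)$ preserves piecewise affineness, are both needed and correctly placed.

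One point in your complexity accounting is loose: ``combining these doubly exponentially many piecewise affine functions \dots multiplying their cardinalities'' does not by itself give a doubly exponential bound, since a product of doubly exponentially many doubly exponential factors is triply exponential. The bound does hold, but for a different reason: all cells of all partitions involved are carved out by a collection of hyperplanes in the fixed-dimensional space $\bR^V$ whose total number is doubly exponential, and an arrangement of $H$ hyperplanes in dimension $|V|$ has only $O(H^{|V|})$ cells, which remains doubly exponential. The paper is no more explicit on this step, but since you single out the size bookkeeping as the delicate part of the argument, this is the estimate you should make precise.
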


\begin{proof}
    Let $i \in \Pi$, let $v \in V_i$, and let $\lambda$ be a requirement.
    Let $\tau_\C$ be a memoryless strategy of Challenger in the game $\Conc_{\lambda i}(G)$, and let $K$ be a strongly connected component of the graph $\Conc_{\lambda i}(G)_{\|s}$, where $s = (v, \{v\})$.
    The result will follow from the fact that the quantity $\opt(K)$ is, itself, a piecewise affine function of $\blambda$.
    Note that the underlying graph of the game $\Conc_{\lambda i}(G)$ does not depend on $\lambda$.
    
    For a given $\lambda$, let us consider the polytope $Q = \Conv_{c \in \SC(K)} \MP(\dc) \subseteq \R^\Pi$, and the polyhedron $R_\lambda = \{\bx \in \R^\Pi ~|~ \forall i \in \Pi, \forall u \in \Mem(K), x_i \geq \lambda(u)\}$.
    Then, we have:
    \begin{align*}
        \opt(K) &= \inf \{ x_i ~|~ \bx \in \dseal Q \cap R_\lambda \} \\
        &= \inf \left\{ \min_{\bz \in Z} z_i ~\left|~ \begin{matrix}
            Z \text{ is a finite subset of } Q, \\
            \text{and } (\min_{\bz \in Z} z_j)_j \in R_\lambda
        \end{matrix} \right. \right\} \\
        &= \inf \left\{ \left. \min_{\bz \in Z} z_i ~\right|~ Z \text{ is a finite subset of } Q \cap R_\lambda \right\} \\
        &= \inf \left\{ x_i ~\left|~ \bx \in Q \cap R_\lambda \right. \right\}.
    \end{align*}
    From now, we can therefore drop the downward sealing.
    Moreover, let us note that the projection $\bx \mapsto x_i$, as an affine mapping over the polytope $Q \cap R_\lambda$, finds its minimum on a vertex of that polytope.
    Each vertex $\{\bx\}$ of $Q \cap R_\lambda$ is the intersection between a face $F$ of the polytope $Q$, and a face $F'$ of the polyhedron $R_\lambda$.
    Such a face $F$ is of the form $\Conv_{c \in C} \MP(\dc)$, where $C$ is a subset of $\SC(K)$; and such a face $F'$ is of the form $\bigcap H_{\lambda w}$, where $W$ is a subset of $\Mem(K)$, and where $H_{\lambda w}$ is the hyperplane $\{ \bx ~|~ x_j = \lambda(w)\}$ for each $j$ and $w \in V_j$.
    Thus, if we define the set:
    $$X = \left\{ \bx \in R_\lambda ~\left|~ \begin{matrix}
        \exists C \subseteq \SC(K), \exists W \subseteq \Mem(K), \\
        \Conv_{c \in C} \MP(\dc) \cap \bigcap_{w \in W} H_{\lambda w} = \{ \bx \}
    \end{matrix} \right. \right\},$$
    it is included in the polytope $Q \cap R_\lambda$, and contains all its vertices; hence the equality $\opt(K) = \inf\{ x_i ~|~ \bx \in X\}$.
    We can therefore write:
    $$\opt(K) = \inf_{C \subseteq \SC(K), W \subseteq \Mem(K)} f_{CW}(\blambda),$$
    where $f_{CW}$ is, for every $C$ and $W$, the function defined by:
    \begin{itemize}
        \item $f_{CW}(\blambda) = x_i$ if the intersection $I = \left(\Aff_{c \in C} \MP(\dc)\right) \cap \left(\bigcap_{w \in W} H_{\lambda w}\right)$, where $\Aff Z$ denotes the smallest affine space containing $Z$, is a singleton $\{\bx\}$ with $\bx \in \Conv_{c \in C} \MP(\dc) \cap R_\lambda$,
        
        \item and $f_{CW}(\lambda) = +\infty$ otherwise.
    \end{itemize}
    
    Let us now study each of those functions $f_{CW}$.
    Given $C, W$ and $\lambda$, we have $f_{CW}(\lambda) \neq +\infty$ if and only if the three following conditions are satisfied.
    
    \begin{itemize}
        \item First, the intersection $I$ is a singleton.
        The elements of $I$ are the points of the form $\bx = \left( \sum_{c \in C} \alpha_c \MP_j(\dc) \right)_j$ with $\balpha \in \R^C$, $\sum_c \alpha_c = 1$, and $x_j = \lambda(w)$ for each $j$ and $w \in W \cap V_j$.
        The set $I$ is therefore a singleton if and only if the matrix:
        $$A = \left(\left\{\begin{matrix}
            1 & \text{if } w = \Sum \\
            \MP_j(\dc) & \text{else, with } j \text{ s.t. } w \in V_j
        \end{matrix}\right. \right)_{w \in W \cup \{\Sum\}, c \in C} \in \R^{(W \cup \{\Sum\}) \times C}$$
        is such that there exists exactly one vector $\balpha \in \R^C$ satisfying:
        $$A \balpha = \left(\left\{\begin{matrix}
            1 & \text{if } j = \Sum \\
            \lambda(w) & \text{otherwise}
        \end{matrix}\right.\right)_{w \in W \cup \{\Sum\}}.$$
        
        That condition is satisfied if and only if $A$ is invertible, which can be decided in a time polynomial in the size of $A$, and does actually not depend on $\lambda$: either it is not satisfied, and the function $f_{CW}$ is constantly equal to $+\infty$, or it is, and only the following conditions must be considered.

        \item Second, the unique element of $I$ belongs to $\Conv_{c \in C} \MP(\dc)$.
        That is the case if and only if the vector:
        $$\balpha = A^{-1} \left(\left\{\begin{matrix}
            1 & \text{if } j = \Sum \\
            \lambda(w) & \text{otherwise}
        \end{matrix}\right.\right)_{w \in W \cup \{\Sum\}}$$
        has only non-negative coordinates.
        Here we should decompose:
        $$\left(\left\{\begin{matrix}
            1 & \text{if } j = \Sum \\
            \lambda(w) & \text{otherwise}
        \end{matrix}\right.\right)_{w \in W \cup \{\Sum\}} = B \blambda + \bbeta$$
        where $B = \left(\left\{\begin{matrix}
            1 & \mathrm{if~} w = v \\
            0 & \mathrm{otherwise}
        \end{matrix}\right.\right)_{w \in W \cup \{\Sum\}, v \in V}$
        and $\bbeta = \left(\left\{\begin{matrix}
            1 & \mathrm{if~} w = \Sum \\
            0 & \mathrm{otherwise}
        \end{matrix}\right.\right)_{w \in W \cup \{\Sum\}}$.
        
        Thus, the vector $\balpha = A^{-1} (B \blambda + \bbeta)$ has non-negative coordinates if and only if $\blambda$ belongs to the set:
        $$P_0 = \left(\blambda' \mapsto A^{-1} (B \blambda' + \bbeta)\right)^{-1}\left(\left[0, +\infty\right)^C\right)$$
        which, as a pre-image of a polyhedron by an affine function, is itself a polyhedron, which can be constructed in a time polynomial in the size of $A$, $B$ and $\beta$.

        \item Third, the vector:
        $$\bx = \left( \MP_j(\dc) \right)_{j \in \Pi, c \in C} \balpha$$
        is such that for each $j$ and each $w \in \Mem(K)$ (not only in $W$), we have $x_j \geq \lambda(w)$.
        The set $P_1$ of requirements $\blambda$ satisfying that condition can itself be written as the pre-image of a polyhedron by an affine function, and is therefore itself a polyhedron, which we can construct in a time polynomial in the size of $A$, $B$, $\beta$ and $\left( \MP_j(\dc) \right)_{j \in \Pi, c \in C}$.
    \end{itemize}
    
    Therefore, the function $f_{CW}$ is equal to $+\infty$ outside of the polyhedron $P_0 \cap P_1$, and satisfies:
    $$f_{CW}(\blambda) = \left( \MP_i(\dc) \right)_{c \in C} \cdot A^{-1} (B \blambda + \bbeta)$$
    inside it.
    It is therefore an affine function of which a representation can be constructed in a time polynomial in the size of $K$.
    Therefore, a representation of $\opt(K) = \inf_{C, W} f_{CW}(\blambda)$ as an affine function of $\blambda$ can be constructed in a time exponential in the size of $K$, and the negotiation function, expressed by:
    $$\nego: \blambda \mapsto \sup_{\tau_\C} \inf_K \inf_{C, W} f_{CW}(\blambda)$$
    can be constructed in a time doubly exponential in the size of $G$.
\end{proof}

\begin{exa}
    Let us consider the game of Example~\ref{ex_inf_spe}.
    If a requirement $\lambda$ is represented by the tuple $(\lambda(a), \lambda(b))$, the function $\nego: \R^2 \to \R^2$ can be depicted by Figure~\ref{fig_linear}, where in any one of the regions delimited by the dashed lines, we wrote a formula for the couple $(\nego(\lambda)(a), \nego(\lambda)(b))$.
    The orange area indicates the fixed points of the function, and the yellow area the other $\frac{1}{2}$-fixed points.
\end{exa}

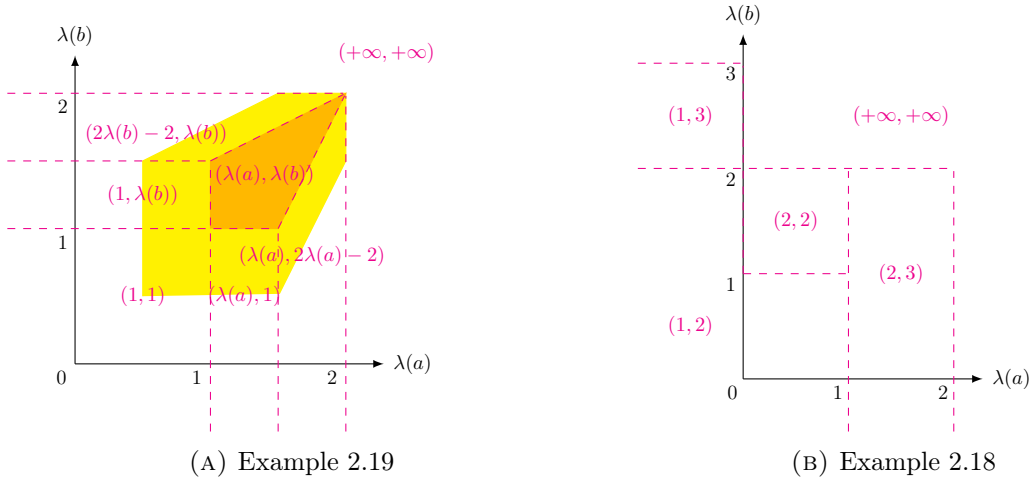
\begin{figure}
	\begin{center}
	\begin{subfigure}[b]{0.5\textwidth}
        	\begin{tikzpicture}[scale=1.8,>=latex,shorten >=1pt, every node/.style={scale=0.7}]
        	\filldraw[yellow] (1/2, 1/2) -- (1/2, 3/2) -- (3/2, 2) -- (2, 2) -- (2, 3/2) -- (3/2, 1/2);
        	\filldraw[orange, opacity=0.5] (1, 1) -- (1, 1.5) -- (2, 2) -- (1.5, 1) -- (1, 1);
        	
        	\draw[->] (0,0) -- (2.3,0);
        	\draw (2.3,0) node[right] {$\lambda(a)$};
        	\draw[->] (0,0) -- (0,2.3);
        	\draw (0,2.3) node[above] {$\lambda(b)$};
        	\foreach \x in {1,2} \draw(0,\x-0.1)node[left]{\x};
        	\foreach \x in {0,1,2} \draw(\x-0.1,0)node[below]{\x};
        	
        	\path[magenta, dashed] (1, -0.5) edge (1, 1.5);
        	\path[magenta, dashed] (2, -0.5) edge (2, 2);
        	\path[magenta, dashed] (-0.5, 1) edge (1.5, 1);
        	\path[magenta, dashed] (-0.5, 2) edge (2, 2);
        	\path[magenta, dashed] (1, 1.5) edge (2, 2);
        	\path[magenta, dashed] (1.5, 1) edge (2, 2);
        	\path[magenta, dashed] (-0.5, 1.5) edge (1, 1.5);
        	\path[magenta, dashed] (1.5, -0.5) edge (1.5, 1);
        	
        	\draw[magenta] (0.5, 0.5) node {$(1, 1)$};
        	\draw[magenta] (0.5, 1.25) node {$(1, \lambda(b))$};
        	\draw[magenta] (1.25, 0.5) node {$(\lambda(a), 1)$};
        	\draw[magenta] (0.6, 1.7) node {$(2\lambda(b)-2, \lambda(b))$};
        	\draw[magenta] (1.75, 0.8) node {$(\lambda(a), 2\lambda(a)-2)$};
        	\draw[magenta] (1.4, 1.4) node {$(\lambda(a), \lambda(b))$};
        	\draw[magenta] (2.3, 2.3) node {$(+\infty,  +\infty)$};
        	\end{tikzpicture}
        \caption{Example~\ref{ex_inf_spe}} \label{fig_linear}
        \end{subfigure}
        \hfill
        \begin{subfigure}[b]{0.45\textwidth}
		\begin{tikzpicture}[scale=1.4,>=latex,shorten >=1pt, every node/.style={scale=0.7}]
		
		\draw[->] (0,0) -- (2.3,0);
		\draw (2.3,0) node[right] {$\lambda(a)$};
		\draw[->] (0,0) -- (0,3.3);
		\draw (0,3.3) node[above] {$\lambda(b)$};
		\foreach \x in {1,2,3} \draw(0,\x-0.1)node[left]{\x};
		\foreach \x in {0,1,2} \draw(\x-0.1,0)node[below]{\x};
		
		\path[magenta, dashed] (0, 1) edge (0, 3);
		\path[magenta, dashed] (1, -0.5) edge (1, 2);
		\path[magenta, dashed] (2, -0.5) edge (2, 2);
		\path[magenta, dashed] (0, 1) edge (1, 1);
		\path[magenta, dashed] (-1, 2) edge (2, 2);
		\path[magenta, dashed] (-1, 3) edge (0, 3);
		
		\draw[magenta] (-0.5, 0.5) node {$(1, 2)$};
		\draw[magenta] (0.5, 1.5) node {$(2, 2)$};
		\draw[magenta] (1.5, 1) node {$(2, 3)$};
		\draw[magenta] (-0.5, 2.5) node {$(1, 3)$};
		\draw[magenta] (1.5, 2.5) node {$(+\infty,  +\infty)$};
		\end{tikzpicture}
		\caption{Example~\ref{ex_sans_spe}} \label{fig_linear2}
	    \end{subfigure}
	\end{center}
	\caption{The negotiation function on the games of Examples~\ref{ex_inf_spe} and~\ref{ex_sans_spe}}
    \label{fig_linear_total}
\end{figure}

\begin{exa}
	Now, let us consider the game of Example~\ref{ex_sans_spe}.
	Let us fix $\lambda(c) = 1$ and $\lambda(d) = 2$, and represent the requirements $\lambda$ by the tuples $(\lambda(a), \lambda(b))$, as in the previous example. Then, the negotiation function is depicted by Figure~\ref{fig_linear2}.
	One can check that there is no fixed point here, and even no $\frac{1}{2}$-fixed point --- except $(+\infty, +\infty)$.
\end{exa}

Consequently, the least $\epsilon$-fixed point of the negotiation function can itself be computed using the classical linear algebra tool box.

    \section{Conclusion: algorithm and complexity} \label{sec_conclusion}

Thanks to all the previous results, we are now able to state the decidability of the $\epsilon$-SPE threshold problem, and to bound its complexity.
Let us start with a lower bound.

        \subsection{Lower bound}

\begin{thm} \label{thm_np_hard}
    The $\epsilon$-SPE threshold problem is $\NP$-hard, even when $\epsilon$ is fixed equal to $0$.
\end{thm}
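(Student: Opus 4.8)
The plan is to reduce from $3$-SAT. Given a $3$-CNF formula $\phi = C_1 \wedge \dots \wedge C_m$ over variables $x_1, \dots, x_n$, I would build in polynomial time a mean-payoff game $G_{\|v_0}$ and two rational threshold vectors $\bx \leq \by$ such that $\phi$ is satisfiable if and only if $G_{\|v_0}$ has an SPE whose outcome has payoff in $[\bx, \by]$; since this construction uses $\epsilon = 0$, it yields the stated hardness (and a fortiori for arbitrary $\epsilon$). The play of $G$ is infinite and passes repeatedly through two phases. In the \emph{assignment phase} there is, for each $i$, a gadget in which one distinguished player chooses between an edge ``$x_i := \mathsf{true}$'' and an edge ``$x_i := \mathsf{false}$'', and the rewards are arranged so that this player is \emph{indifferent} among her branches (her payoff never depends on them), so that every valuation is compatible with equilibrium behaviour. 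In the \emph{verification phase}, for each clause $C_j$ there is a gadget $E_j$ built in the spirit of Example~\ref{ex_sans_spe}: it contains a short cycle shared by two players whose incentives are in conflict, calibrated so that, when the current valuation does \emph{not} satisfy $C_j$, the subgame rooted at the entry of $E_j$ has no SPE --- exactly as the game of Figure~\ref{fig_sans_spe} has none --- whereas when $C_j$ is satisfied the conflict is neutralized and a designated ``success'' cycle can be supported by an SPE. Which literals of $C_j$ are true is read off by letting the rewards on the relevant edges of $E_j$ depend on the branches taken in the assignment gadgets of the corresponding variables. Finally $\bx$ and $\by$ are chosen so that the only plays with payoff in $[\bx, \by]$ are those running through all the assignment and verification gadgets and then settling in the success cycle.

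For correctness I would use Theorem~\ref{thm_spe}: mean-payoff games have steady negotiation by Lemma~\ref{lm_compute_nego}, so the SPE outcomes are exactly the $\lambda^*$-consistent plays. For the ``if'' direction, from a satisfying valuation $\nu$ take the play $\rho_\nu$ that follows the $\nu$-branches and settles in the success cycle; since all clause gadgets are then neutralized, no player controlling a vertex on $\rho_\nu$ can do better than $\mu(\rho_\nu)$ against a $\lambda^*$-rational coalition, so $\rho_\nu$ is $\lambda^*$-consistent, hence an SPE outcome, with $\mu(\rho_\nu) \in [\bx, \by]$. For the ``only if'' direction, given an SPE $\bsigma$ with $\mu(\<\bsigma\>) \in [\bx, \by]$, the choice of thresholds forces $\<\bsigma\>$ to traverse both phases and end in the success cycle, which reads off a valuation $\nu$; and if some clause $C_j$ were violated by $\nu$, then the subgame entered when $\<\bsigma\>$ reaches $E_j$ would reproduce the obstruction of Example~\ref{ex_sans_spe} and admit no SPE, while a subgame of an SPE is again an SPE --- a contradiction. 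Hence $\nu \models \phi$.

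The main obstacle is designing and analysing the clause gadget $E_j$ so that the dichotomy ``harmless if $C_j$ is satisfied / fatal (no SPE) if $C_j$ is violated'' holds and is robust under unilateral deviations: the player fixing the valuation must not be able to profitably ``lie'' while traversing $E_j$, nor to escape the obstruction by producing an inconsistent valuation across successive loops, and ``$C_j$ is violated'' must be detected correctly even though the mean-payoff sees only the asymptotic, cyclic behaviour of the play. Choosing the rational rewards so that every relevant inequality is strict, and so that no unintended convex combination of cycles (recall Lemma~\ref{lm_dseal}) meets the box $[\bx, \by]$, is the delicate part; with that in place the two directions go through as above.
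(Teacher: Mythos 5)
Your proposal is a reduction \emph{plan} rather than a proof: the object on which everything hinges --- the clause gadget $E_j$ with the dichotomy ``no SPE in the subgame if $C_j$ is violated, harmless otherwise'' --- is never constructed, and you yourself flag its design as ``the delicate part.'' Worse, the mechanism you propose for coupling the two phases is not realizable as stated: in a mean-payoff game the reward function $r: E \to \Q$ is attached to the edges of a fixed finite graph, so the rewards inside $E_j$ cannot ``depend on the branches taken in the assignment gadgets.'' To make the verification phase see the assignment you would have to encode the valuation into the state space, i.e.\ keep one copy of the verification gadgets per partial assignment, which is an exponential blowup and destroys the polynomial-time reduction. The second unresolved issue is that a one-shot ``assignment phase'' leaves no trace in the mean-payoff, which is prefix-independent and only sees asymptotic frequencies; and if the phases are repeated cyclically, nothing prevents the assigning player from switching branches across iterations. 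You acknowledge both problems but do not solve either, so the argument does not go through as written.

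The paper's reduction sidesteps all of this by collapsing the two phases into one cyclic structure: Solver walks forever through the clauses $C_1, \dots, C_n, C_1, \dots$, choosing one literal-state per clause, and the valuation is \emph{defined from the play itself} by asymptotic frequency ($\nu(x)=1$ iff the states $(C,x)$ are visited often enough that $\mu_x < 1$). Consistency is then enforced not by a gadget without SPEs but by a single deviation option: from every state $(C,\neg x)$ player $x$ may escape to a sink where he is guaranteed payoff $1$, so in any SPE a negative literal $\neg x$ can only be used with non-negligible frequency if $\mu_x = 1$ already, i.e.\ if $\nu(x)=0$. This makes the whole construction linear in the size of $\phi$ and the correctness argument elementary (no appeal to Theorem~\ref{thm_spe} is needed). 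If you want to salvage your approach, you would need to replace history-dependent rewards by this kind of frequency-based encoding, at which point you essentially arrive at the paper's construction.
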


\begin{proof}
    We proceed by reduction from the $\NP$-complete problem SAT. This proof is liberally inspired from the proof of the $\NP$-hardness of the NE threshold problem in co-Büchi games by Michael Ummels, in \cite{DBLP:conf/fossacs/Ummels08}.
	
	Let $\phi = \bigwedge_{i = 1}^n \bigvee_{j = 1}^m L_{ij}$ be a formula from propositional logic, written in conjunctive normal form, over the finite variable set $X$. We construct a mean-payoff game $G^\phi_{\|v_0}$ that admits an SPE where the player $\S$ gets the payoff $1$, if and only if $\phi$ is satisfiable.
	
	First, we define the set of players $\Pi = \{\S\} \cup X$: every variable of $\phi$ is a player and there is an additional special player $\S$, called \emph{Solver}, who wants to prove that $\phi$ is satisfiable.
	
	Then, let us define the state space: for each clause $C_i$, with $i \in \Z/n\Z$, of $\phi$, we define a state $C_i$ that is controlled by Solver, and for each litteral $L_{ij}$ of $C_i$ we define a state $(C_i, L_{ij})$, that is controlled by the player $x$ such that $L_{ij} = x$ or $\neg x$.
	We add a transition from $C_i$ to $(C_i, L_{ij})$, and another one from $(C_i, L_{ij})$ to $C_{i+1}$.
	Moreover, we add a sink state $\bot$, with a transition from it to itself, and transitions from all the states of the form $(C, \neg x)$ to it.
	
	We define the reward function $r$ on this game as follows:
	\begin{itemize}
		\item $r_\S(\bot\bot) = 0$, and $r_\S(uv) = 1$ for any other transition $vw$;
		
		\item for each player $x$, we have $r_x(uv) = 0$ for every transition leading to a state of the form $v = (C, x)$, and $r_x(uv) = 1$ for any other transition.
	\end{itemize}
	
	Note that Solver can only get the payoffs $0$ (in a play that reaches $\bot$) or $1$ (in any other play).
	Another player $x$ gets the payoff $1$ in a play that never visits (or finitely often, or infinitely often but with negligible frequence) a vertex of the form $(C, x)$.
	Otherwise, he may get any payoff between $0.5$ and $1$, depending on the frequence with which such a state is visited.
	Finally, we initialize that game in $v_0 = C_1$.
	
	\begin{exa}
    The game $G^\phi$, when $\phi$ is the tautology $(x_1 \vee \neg x_1) \wedge \dots \wedge (x_6 \vee \neg x_6)$, is represented by Figure~\ref{fig_Gphi}.
    The rewards that are not written are equal to $1$.
    
    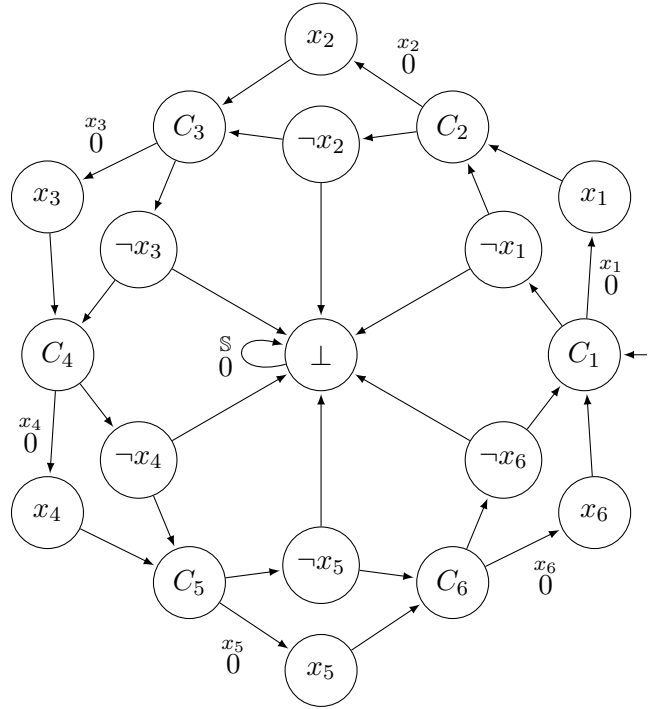
\begin{figure}
        \centering
        \begin{tikzpicture}[->,>=latex,shorten >=1pt, scale=0.7, initial text={}]
    		\node[state, initial right] (C1) at (0:5) {$C_1$};
    		\node[state] (C11) at (30:6) {$x_1$};
    		\node[state] (C12) at (30:4) {$\neg x_1$};
    		\node[state] (C2) at (60:5) {$C_2$};
    		\node[state] (C21) at (90:6) {$x_2$};
    		\node[state] (C22) at (90:4) {$\neg x_2$};
    		\node[state] (C3) at (120:5) {$C_3$};
    		\node[state] (C31) at (150:6) {$x_3$};
    		\node[state] (C32) at (150:4) {$\neg x_3$};
    		\node[state] (C4) at (180:5) {$C_4$};
    		\node[state] (C41) at (210:6) {$x_4$};
    		\node[state] (C42) at (210:4) {$\neg x_4$};
    		\node[state] (C5) at (240:5) {$C_5$};
    		\node[state] (C51) at (270:6) {$x_5$};
    		\node[state] (C52) at (270:4) {$\neg x_5$};
    		\node[state] (C6) at (300:5) {$C_6$};
    		\node[state] (C61) at (330:6) {$x_6$};
    		\node[state] (C62) at (330:4) {$\neg x_6$};
    		\node[state] (b) at (0,0) {$\bot$};
    		
    		\path[->] (C1) edge node[right] {$\stackrel{x_1}{0}$} (C11);
    		\path[->] (C1) edge (C12);
            \path[->] (C11) edge (C2);
            \path[->] (C12) edge (C2);
            \path[->] (C12) edge (b);
            \path[->] (C2) edge node[above right] {$\stackrel{x_2}{0}$} (C21);
    		\path[->] (C2) edge (C22);
            \path[->] (C21) edge (C3);
            \path[->] (C22) edge (C3);
            \path[->] (C22) edge (b);
            \path[->] (C3) edge node[above left] {$\stackrel{x_3}{0}$} (C31);
    		\path[->] (C3) edge (C32);
            \path[->] (C31) edge (C4);
            \path[->] (C32) edge (C4);
            \path[->] (C32) edge (b);
            \path[->] (C4) edge node[left] {$\stackrel{x_4}{0}$} (C41);
    		\path[->] (C4) edge (C42);
            \path[->] (C41) edge (C5);
            \path[->] (C42) edge (C5);
            \path[->] (C42) edge (b);
            \path[->] (C5) edge node[below left] {$\stackrel{x_5}{0}$} (C51);
    		\path[->] (C5) edge (C52);
            \path[->] (C51) edge (C6);
            \path[->] (C52) edge (C6);
            \path[->] (C52) edge (b);
            \path[->] (C6) edge node[below right] {$\stackrel{x_6}{0}$} (C61);
    		\path[->] (C6) edge (C62);
            \path[->] (C61) edge (C1);
            \path[->] (C62) edge (C1);
            \path[->] (C62) edge (b);
            \path (b) edge[loop left] node[left] {$\stackrel{\S}{0}$} (b);
        \end{tikzpicture}
        \caption{The game $G^\phi$}
        \label{fig_Gphi}
    \end{figure}
\end{exa}

	Now, let us prove that there is an SPE in $G^\phi_{\|v_0}$ in which Solver gets the payoff $1$, if and only if the formula $\phi$ is satisfiable, that is, if there exists a valuation $\nu: X \to \{0, 1\}$ that satisfies it.
	
	\begin{itemize}
		\item If such an SPE exists: let us write it $\bsigma$, and let $\rho = \< \bsigma \>$.
		Since $\mu_\S(\rho) = 1$, the sink state $\bot$ is never visited.
		Let us define a valuation $\nu$ on $X$ as follows: for each variable $x$, we have $\nu(x) = 1$ if and only if $\mu_x(\rho) < 1$.
		
		Now, let $C$ be a clause of $\phi$: since $C$, as a state, is necessarily visited infinitely often and with a fixed frequence in the play $\rho$ (because no player ever go to the sink state $\bot$), one of its successors, say $(C, L)$, is visited with a non-negligible frequence (more formally, the time between two occurrences of $(C, L)$ is bounded).
		If $L$ is a positive litteral, say $x$, then by definition of $\nu$, we have $\nu(x) = 1$ and the clause $C$ is satisfied.
		
		If $L$ has the form $\neg x$, then each time the state $(C, \neg x)$ is traversed, player $x$ has the possibility to deviate and to go to the sink state $\bot$, where he is sure to get the payoff $1$.
		Since $\bsigma$ is an SPE, it means that he already gets the payoff $1$ in the play $\rho$.
		By definition of $\nu$, we then have $\nu(x) = 0$, hence the litteral $\neg x$ is satisfied, hence so is the clause $C$.
		
		The valuation $\nu$ satisfies all the clauses of $\phi$, and therefore satisfies the formula $\phi$ itself.

		\item If $\phi$ is satisfied by some valuation $\nu$: let us define a strategy profile $\bsigma$ by:
		\begin{itemize}
			\item $\sigma_\S(hC) = (C, L)$ for each history $h C$ where $C$ is a clause of $\phi$, where $L$ is a litteral of $C$ that is satisfied in the valuation $\nu$;
			
			\item and $\sigma_x(h(C, \neg x)) = \bot$ if and only if $\nu(x) = 1$ for each history $h(C, \neg x)$ where $C$ is a clause of $\phi$ and $x$ is a variable.
		\end{itemize}
	
		Any other state has only one successor, hence we now have completely defined a strategy profile.
		Now, let us prove it is an SPE, in which Solver gets the payoff $1$.
		
		Let $hC$ be a history, where $C$ is a clause of $\phi$.
		We want to prove that $\bsigma_{\|hC}$ is a Nash equilibrium, in which Solver gets the payoff $1$.
		Let $\rho = \< \bsigma_{\|hC} \>$.
		If $\mu_\S(\rho) < 1$, i.e. if $\rho$ is of the form $h D (D, \neg x) \bot^\omega$, then by definition of $\bsigma$ we have $\nu(x) = 0$.
		But then, we cannot have $\sigma_\S(D) = (D, \neg x)$: contradiction.
		The play $\rho$ never reaches the state $\bot$, and Solver gets the payoff $1$, and as a consequence she does not have any profitable deviation.
		
		Now, if another player $x$ has a profitable deviation, it means that he does not get the payoff $1$ in $\rho$, and therefore that some state of the form $(D, x)$ is visited infinitely often.
		But then, if Solver choose to go to the state $(D, x)$, it means that the litteral $x$ is satisfied in $\nu$, i.e. that $\nu(x) = 1$.
		In that case, if some clause $D'$ contains the litteral $\neg x$, it is not a litteral satisfied by $\nu$, and therefore the strategy $\sigma_\S$, as we defined it, never chooses the transition to the state $(D', \neg x)$, where player $x$ could have the possibility to deviate from his strategy.
		Contradiction.

		Finally, after a history of the form $h(C, L)$, either:
		\begin{itemize}
			\item $L = \neg x$ with $\nu(x) = 1$, and in that case, we have $\< \bsigma_{\|h(C, L)} \> = (C, L) \bot^\omega$, player $x$ gets the payoff $1$, and no player has a profitable deviation;
			
			\item or $L$ is a positive litteral, and then there exists only one transition from the state $(C, L)$ to another clause $D$, and we go back to the previous case;
			
			\item or $L = \neg x$ with $\nu(x) = 0$, and in that case, we have $\sigma_x(C, L) = D$ where $D$ is the following clause, and by the first case the strategy profile $\bsigma_{\|h(C, \neg x)D}$ is a Nash equilibrium.
			Moreover, since the litteral $\neg x$ is not satisfied in $\nu$, the play $\< \bsigma_{\|h(C, \neg x)D} \>$ does never traverse again any state of the form $(D', \neg x)$, hence player $x$ wins, and therefore has no profitable deviation: the strategy profile $\bsigma_{\|h(C, \neg x)}$ is a Nash equilibrium.
		\end{itemize}
	\end{itemize}
	
	The SPE threshold problem, and therefore the $\epsilon$-SPE threshold problem, are $\NP$-hard in mean-payoff games.
\end{proof}

        \subsection{Decidability and upper bound}

We can now present an algorithm that decides the $\epsilon$-SPE threshold problem.

\begin{thm}\label{thm_decidable}
    The $\epsilon$-SPE threshold problem in mean-payoff games is decidable and\linebreak[4]$2\ExpTime$-easy.
\end{thm}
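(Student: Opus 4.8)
The plan is to turn the characterisations obtained so far into an effective procedure, in three stages: reduce the $\epsilon$-SPE threshold problem to a question about the least $\epsilon$-fixed point $\lambda^*$ of the negotiation function; compute $\lambda^*$ from the piecewise-affine representation of $\nego$; and decide, by linear programming on the game graph, whether a $\lambda^*$-consistent play meeting the thresholds exists. For the reduction: a mean-payoff game is prefix-independent, played on a finite graph, and, by Lemma~\ref{lm_compute_nego}, a game with steady negotiation, so Theorem~\ref{thm_spe} applies in both directions; hence, with $\lambda^*$ the least $\epsilon$-fixed point of $\nego$ (well defined by Lemma~\ref{lm_least_fixed point}), there is an $\epsilon$-SPE $\bsigma$ in $G_{\|v_0}$ with $\bx \leq \mu(\<\bsigma\>) \leq \by$ if and only if there is a $\lambda^*$-consistent play $\rho$ from $v_0$ with $\bx \leq \mu(\rho) \leq \by$. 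It therefore suffices to (i) compute $\lambda^*$, and (ii) decide the existence of such a play.

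For step (i), Theorem~\ref{thm_piecewise_affine} yields, in time doubly exponential in the size of $G$, a finite representation $(\Phi_v, (\ba_P, b_P)_{P \in \Phi_v})_{v \in V}$ of $\nego$ as a piecewise-affine map on $\bR^V$. On a piece $P \in \Phi_v$, the $\epsilon$-fixed-point condition in coordinate $v$ reads $|\ba_P \cdot \blambda + b_P - \lambda_v| \leq \epsilon$, cutting out a polyhedron inside $P$; intersecting these constraints over all $v \in V$ and all pieces shows that the set $\Lambda$ of $\epsilon$-fixed points of $\nego$ is a finite, effectively computable union of polyhedra (the infinite coordinates being handled as in Section~\ref{sec_analysis}). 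By Lemma~\ref{lm_least_fixed point}, $\lambda^*$ exists, belongs to $\Lambda$, and satisfies $\lambda^*(v) = \inf_{\blambda \in \Lambda} \lambda_v$ for each $v$; each such infimum is the optimum of a linear program over one of the finitely many polyhedra, so $\lambda^*$ (with rational or infinite coordinates) is computable in time polynomial in the size of the representation, hence doubly exponential in the size of $G$.

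For step (ii), note first that, by prefix-independence, a play $\rho$ from $v_0$ is $\lambda^*$-consistent if and only if $\mu_i(\rho) \geq \lambda^*(v)$ for every player $i$ and every $v \in \Occ(\rho) \cap V_i$, while $\mu(\rho)$ depends only on the strongly connected subgraph $K$ induced by $\Inf(\rho)$, where by Lemma~\ref{lm_dseal} it can take any value in $\dseal\left(\Conv_{c \in \SC(K)} \MP(\dc)\right)$. So I would enumerate the pairs $(K, \pi)$, with $K$ a strongly connected subgraph of $(V, E)$ reachable from $v_0$ and $\pi$ a simple path from $v_0$ into $K$; write $T$ for the set of vertices on $\pi$ or in $K$, $L_i = \max\{\lambda^*(v) \mid v \in T \cap V_i\}$ (with $\max\emptyset = -\infty$) and $x'_i = \max(x_i, L_i)$. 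A play as in (ii) exists if and only if, for at least one such pair, $\dseal\left(\Conv_{c \in \SC(K)} \MP(\dc)\right)$ meets the box $\prod_i [x'_i, y_i]$; unfolding the definition of downward sealing, this holds if and only if for each $i$ the linear program
\[
\min\bigl\{\, z_i \;\bigm|\; \bz \in \Conv_{c \in \SC(K)} \MP(\dc),\ \bz \geq \bx' \,\bigr\}
\]
is feasible with value at most $y_i$ (constructing a witness play, respectively extracting $K$ and a simple path from a given $\epsilon$-SPE outcome, gives both implications; the degenerate cases where some reachable $\lambda^*(v) = +\infty$ are subsumed, as then $\bx'$ has an infinite coordinate and every test fails). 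There are at most $2^{|V|} \cdot |V|!$ pairs, and each test is a handful of linear programs solvable in exponential time, so step (ii) runs in $2\ExpTime$; combined with step (i), the whole decision procedure runs in $2\ExpTime$.

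The delicate step is (i): since the negotiation sequence need not stabilise (Theorem~\ref{thm_not_stationary}), $\lambda^*$ cannot be reached by naive Kleene iteration of $\nego$, so extracting it genuinely requires the geometric, piecewise-affine description of $\nego$ together with careful handling of the faces at infinity of the associated polyhedra. This is also the only place where a doubly-exponential blow-up occurs, and hence what dominates the final complexity bound.
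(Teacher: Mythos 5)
Your proposal is correct and follows essentially the same route as the paper: reduce to the existence of a $\lambda^*$-consistent play in the threshold box via Theorem~\ref{thm_spe} and steady negotiation, extract $\lambda^*$ from the piecewise-affine representation of $\nego$ by linear programming, and then decide the existence of such a play by an exponential enumeration of candidate vertex sets combined with polytope-nonemptiness tests based on Lemma~\ref{lm_dseal}. The only cosmetic difference is that the paper enumerates subsets $W \subseteq V$ and their accessible components while you enumerate pairs (strongly connected subgraph, simple access path), and you spell out the linear-algebra computation of $\lambda^*$ in more detail than the paper does.
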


\begin{proof}
    Given $G_{\|v_0}, \bx, \by$ and $\epsilon$, by Theorem~\ref{thm_piecewise_affine}, an effective representation of the negotiation function can be computed in a time doubly exponential in the size of $G$; therefore, its least $\epsilon$-fixed point $\lambda^*$ can be computed in a time double exponential in the size of $G$ and of $\epsilon$, using the classical tools given by linear algebra.
    
    Then, by Theorem~\ref{thm_spe} and since mean-payoff games are games with steady negotiation by Lemma~\ref{lm_compute_nego}, the tuple $(G_{\|v_0}, \bx, \by, \epsilon)$ forms a positive instance of the $\epsilon$-SPE threshold problem if and only if there exists a $\lambda^*$-consistent play $\rho$ in $G_{\|v_0}$ with $\bx \leq \mu(\rho) \leq \by$.
    
    The existence of such a play can be decided in doubly exponential time, as follows: for each subset $W \subseteq V$, construct the game $G_W$ defined as the game $G$ in which all the vertices that do not belong to $W$ have been omitted.
    Then, for each connected component $K$ of the underlying graph of $G_W$ that is accessible from $v_0$, construct the polytope:
    $$P = \dseal \left(\underset{c \in \SC(K)}{\Conv} \MP(c)\right) \cap \prod_i [x_i, y_i] \cap \prod_i \left[ \max_{v \in W \cap V_i} \lambda^*(v), +\infty\right),$$
    with the representations and algorithm given in \cite{DBLP:conf/concur/ChatterjeeDEHR10}, which requires a time exponential in the size of $K$.
    If one of those exponentially many polytopes is nonempty, then using Lemma~\ref{lm_dseal}, there exists a play $\rho$ in $G_{\|v_0}$ with $\bx \leq \mu(\rho) \leq \by$ and $\Occ(\rho) \subseteq W$ and $\mu_i(\rho) \geq \lambda^*(v)$ for each $i \in \Pi$ and $v \in V_i \cap W$, i.e. that is $\lambda^*$-consistent.
    Conversely, if such a play exists, then for $W = \Occ(\rho)$ and $K = \Inf(\rho)$, the polytope $P$ will be nonempty.
\end{proof}

\bibliography{bibli}

\newpage

%					+-----------------------+
%					!						!
%					!		APPENDICE		!
%					!						!
%					+-----------------------+

\appendix

		\section{Abstract negotiation game} \label{app_abstract}

\begin{defi}[Abstract negotiation game]
	Let $G_{\|v_0}$ be a game, let $i \in \Pi$, and let $\lambda$ be a requirement on $G$. The \emph{abstract negotiation game} of $G_{\|v_0}$ for player $i$ with requirement $\lambda$ is the two-player zero-sum game:
	
	$$\Abs_{\lambda i}(G)_{\|[v_0]} = \left( \{\P, \C\}, S, (S_{\P}, S_{\C}), \Delta, \nu\right)_{\|[v_0]},$$
	
	where:
	
	\begin{itemize}
		\item $\P$ denotes the player \emph{Prover} and $\C$ the player \emph{Challenger};
		
		\item the states of $S_\C$ are written $[\rho]$, where $\rho$ is a $\lambda$-consistent play in $G$;
		
		\item the states of $S_\P$ are written $[hv]$, where $hv$ is a history in $G$, with $h \in \Hist_i(G)$, or $[v]$ with $v \in V$, plus two additional states $\top$ and $\bot$;
		
		\item the set $\Delta$ contains the transitions of the form:
		
		\begin{itemize}
			\item $[hv][v\rho]$, where $[hv] \in S_\P$ and $[v\rho] \in S_{\C}$ (Prover proposes a play);
			
			\item $[\rho] [\rho_0...\rho_n v]$, where $[\rho] \in S_{\C}, n \in \N,	\rho_n \in V_i$, and $v \neq \rho_{n+1}$ (Challenger makes player $i$ deviate);
			
			\item $[\rho] \top$, where $[\rho] \in S_{\C}$ (Challenger accepts the proposed play);
			
			\item $\top \top$ (the game is over);
			
			\item $[hv] \bot$ (Prover has no more play to propose);
			
			\item $\bot \bot$ (the game is over).
		\end{itemize}

		\item $\nu$ is the payoff function defined by, for all $\rho^{(0)}, \rho^{(1)}, \dots, h^{(1)} v_1, h^{(2)} v_2, \dots, k, H$:
	
		$$\begin{matrix}
			& \nu_\C \left( [v_0] \left[\rho^{(0)}\right] \left[h^{(1)}v_1\right] \left[\rho^{(1)}\right] \dots \left[h^{(k)} v_k\right] \left[\rho^{(k)}\right] \top^\omega \right) \\[1mm]
			= & \mu_i\left(h^{(1)} \dots h^{(k)} \rho^{(k)}\right), \\[2mm]
			
			& \nu_\C \left( [v_0] \left[\rho^{(0)}\right] \left[h^{(1)}v_1\right] \left[\rho^{(1)}\right] \dots \left[h^{(n)} v_n\right] \left[\rho^{(n)}\right] \dots \right) \\[1mm]
			= & \mu_i\left(h^{(1)} h^{(2)} \dots\right), \\[2mm]
			
			& \nu_\C\left(H \bot^\omega \right) = +\infty,
		\end{matrix}$$
		
		and by $\nu_\P = -\nu_\C$.
	\end{itemize}
\end{defi}

\begin{rem}
If the game $G$ is Borel, then so is the game $\Abs_{\lambda i}(G)$.
\end{rem}

\begin{prop}
	Let $G_{\|v_0}$ be a Borel game, let $\lambda$ be a requirement on $G$ and let $i \in \Pi$ be such that $v_0 \in V_i$.
	Then, we have: 
	$$\val_\C\left(\Abs_{\lambda i}(G)_{\|[v_0]}\right) = \nego(\lambda)(v_0).$$
\end{prop}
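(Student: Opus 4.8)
The plan is to prove the two inequalities $\val_\C(\Abs_{\lambda i}(G)_{\|[v_0]}) \le \nego(\lambda)(v_0)$ and $\val_\C(\Abs_{\lambda i}(G)_{\|[v_0]}) \ge \nego(\lambda)(v_0)$ separately. Since $\Abs_{\lambda i}(G)$ is a Borel zero-sum game, Lemma~\ref{lm_borel_determinacy} gives $\val_\C(\Abs_{\lambda i}(G)_{\|[v_0]}) = \inf_{\tau_\P} \sup_{\tau_\C} \nu_\C(\<\btau\>) = \sup_{\tau_\C} \inf_{\tau_\P} \nu_\C(\<\btau\>)$; so for the first inequality it suffices to exhibit one strategy $\tau_\P$ with $\sup_{\tau_\C}\nu_\C(\<\tau_\P,\tau_\C\>)$ small, and for the second to show $\sup_{\tau_\C}\nu_\C(\<\tau_\P,\tau_\C\>)$ is large for every $\tau_\P$. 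The engine of both directions is a dictionary between the $G$-histories compatible with a profile $\bsigma_{-i}$ and the non-terminal histories of the abstract game: such a $G$-history decomposes uniquely as a prefix $h^{(1)}$ of a first proposed play, a player-$i$ deviation to $v_1$, a prefix $h^{(2)}$ of a second proposed play (from $v_1$), a deviation to $v_2$, and so on, which is exactly the data carried by an abstract history $[v_0][\rho^{(0)}][h^{(1)}v_1][\rho^{(1)}]\dots$. Although a Prover state $[hv]$ retains only the current proposed-play prefix, a strategy in the abstract game is a function of the whole abstract history, hence recovers the whole underlying $G$-history, so this decomposition can legitimately drive a strategy on either side.

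For $\val_\C \le \nego(\lambda)(v_0)$: if $\lRat(v_0) = \emptyset$ the right-hand side is $+\infty$ and there is nothing to prove, so fix $\bsigma_{-i} \in \lRat(v_0)$ assuming $\sigma_i$. Let $\tau_\P$ reconstruct, at each of its turns, the $G$-history $g$ built so far and propose $\<\bsigma_{\|g}\>$. As $g$ is compatible with $\bsigma_{-i}$, this play is $\lambda$-consistent, so $\tau_\P$ is never forced to $\bot$; and whatever $\tau_\C$ does — accept after finitely many deviations, or deviate forever — the projected play $\eta$ equals the concatenation of the relevant blocks, is compatible with $\bsigma_{-i}$, hence equals $\<\bsigma_{-i},\sigma'_i\>$ for some $\sigma'_i$, so $\nu_\C(\<\btau\>) = \mu_i(\eta) \le \sup_{\sigma'_i}\mu_i(\<\bsigma_{-i},\sigma'_i\>)$. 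Taking suprema over $\tau_\C$ and then infima over $\bsigma_{-i} \in \lRat(v_0)$ yields the inequality.

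For $\val_\C \ge \nego(\lambda)(v_0)$: fix $\tau_\P$. If $\sup_{\tau_\C}\nu_\C(\<\tau_\P,\tau_\C\>) = +\infty$ there is nothing to prove, so assume it is finite; then $\tau_\P$ never reaches $\bot$, so along every play it proposes only $\lambda$-consistent plays (this is forced by the rules of the abstract game). Read $\tau_\P$ as a profile via the dictionary: given a $G$-history $g$ compatible with the profile being defined, reconstruct the corresponding abstract history, apply $\tau_\P$ to obtain the current proposed play $\rho$, and set $\bsigma(g)$ to be the next vertex of $\rho$; let $\sigma_i$ likewise follow the current proposal. Since every proposal is $\lambda$-consistent and suffixes of $\lambda$-consistent plays are $\lambda$-consistent, $\bsigma_{-i}$ is $\lambda$-rational assuming $\sigma_i$, so $\bsigma_{-i} \in \lRat(v_0)$; and any deviation $\sigma'_i$ produces a play $\eta = \<\bsigma_{-i},\sigma'_i\>$ which, by having Challenger mimic the deviations of $\sigma'_i$ and then accept (or deviate forever), is the projection of an abstract outcome against $\tau_\P$ with the same player-$i$ payoff, so $\mu_i(\eta) \le \sup_{\tau_\C}\nu_\C(\<\tau_\P,\tau_\C\>)$. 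Hence $\nego(\lambda)(v_0) \le \sup_{\sigma'_i}\mu_i(\<\bsigma_{-i},\sigma'_i\>) \le \sup_{\tau_\C}\nu_\C(\<\tau_\P,\tau_\C\>)$ for every $\tau_\P$, and the infimum over $\tau_\P$ gives the claim.

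The main difficulty is entirely in the bookkeeping: making the block decomposition of a $G$-history precise and well-founded, checking that each block, together with its terminating player-$i$ deviation, is compatible with $\bsigma_{-i}$ so that $\lambda$-rationality and $\lambda$-consistency can be invoked, and treating the degenerate cases ($\lRat(v_0)=\emptyset$, transitions to $\bot$, the $\nu_\C=+\infty$ clause). The one conceptual point worth stating explicitly is that the apparent loss of memory in the Prover states $[hv]$ does not matter, because a strategy is a function of the full history, not of the state.
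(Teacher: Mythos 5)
Your proposal is correct and follows essentially the same route as the paper: both directions rest on the same block decomposition (your ``dictionary'') between $G$-histories compatible with a profile $\bsigma_{-i}$ and histories of the abstract game, translating a Prover strategy into a $\lambda$-rational profile and back, with the same observations that proposals are $\lambda$-consistent by the rules of the game, that suffixes of $\lambda$-consistent plays are $\lambda$-consistent, and that $\bot$ and the $+\infty$ clause handle the degenerate cases. The only cosmetic difference is that the paper packages the two inequalities as an equivalence of threshold statements parametrized by $\alpha$ rather than invoking determinacy up front, which amounts to the same thing.
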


\begin{proof}
Let $\alpha \in \R$, and let us prove that the following statements are equivalent:

\begin{enumerate}
	\item \label{case1_pf_abs} there exists a strategy $\tau_\P$ such that for every strategy $\tau_\C$, $\nu_\C(\< \btau \>) < \alpha$;
	
	\item \label{case2_pf_abs} there exists a $\lambda$-rational strategy profile $\bsigma_{-i}$ in the game $G_{\|v_0}$ such that for every strategy $\sigma_i$, we have $\mu_i\left( \< \bsigma \> \right) < \alpha$.
\end{enumerate}

\begin{itemize}
	\item \emph{(\ref{case1_pf_abs}) implies (\ref{case2_pf_abs}).}
	
	Let $\tau_\P$ be such that for every strategy $\tau_\C$, $\nu_\C(\< \btau \>) < \alpha$.
	
	In what follows, any history $h$ compatible with an already defined strategy profile $\bsigma_{-i}$ in $G_{\|v_0}$ will be decomposed in:
	$$h = v_0 h^{(0)} v_1 h^{(1)} \dots h^{(n-1)} v_n h^{(n)},$$
	so that there exist plays $\rho^{(0)}, \dots, \rho^{(n-1)}, \eta$ and a history:
	$$[v_0] \left[\rho^{(0)}\right] \left[v_1 h^{(1)} v_2\right] \dots \left[v_{n-1}h^{(n-1)}v_n\right] \left[v_nh^{(n)}\eta\right]$$
	in the game $\Abs_{\lambda i}(G)$ compatible with $\tau_{\P}$: the existence and the unicity of that decomposition can be proved by induction. Intuitively, the history $h$ is cut in histories which are prefixes of plays that can be proposed by Prover.
	
	Then, let us define inductively the strategy profile $\bsigma_{-i}$ by, for every $h$ such that $\bsigma_{-i}$ has been defined on the prefixes of $h$, and such that the last state of $h$ is not controlled by player $i$, $\bsigma_{-i}(h) = \eta_0$ with $\eta$ defined from $h$ as higher. Let us prove that $\bsigma_{-i}$ is the desired strategy profile.
	
	\begin{itemize}
		\item \emph{The strategy profile $\bsigma_{-i}$ is $\lambda$-rational.}
		
		Let us define $\sigma_i$ so that for every history $hv$ compatible with $\bsigma_{-i}$, the play $\< \bsigma_{\|hv} \>$ is $\lambda$-consistent.
		
		For any history:
		$$h = v_0 h^{(0)} v_1 h^{(1)} \dots h^{(n-1)} v_n h^{(n)}$$
		compatible with $\bsigma_{-i}$ and ending in $V_i$, let $\sigma_i(h) = \eta_0$ with $\eta$ corresponding to the decomposition of $h$, so that by induction:
		$$\< \bsigma_{\|v_0 h^{(0)} v_1 h^{(1)} \dots h^{(n-1)} v_n} \> = v_n h^{(n)} \eta.$$
		
		Let now $hv$ be a history in $G_{\|v_0}$, and let us show that the play $\< \bsigma_{\|hv} \>$ is $\lambda$-consistent. If we decompose:
		$$hv = v_0 h^{(0)} v_1 h^{(1)} \dots h^{(n-1)} v_n h^{(n)}$$
		with the same definition of $\eta$ (note that the vertex $v$ is now included in the decomposition), then $\< \bsigma_{\|hv} \> = v\eta$, and by definition of the abstract negotiation game, $v_n h^{(n)} \eta$ is a $\lambda$-consistent play, and therefore so is $v \eta$.

		\item \emph{The strategy profile $\bsigma_{-i}$ keeps player $i$'s payoff under the value $\alpha$.}
		
		Let $\sigma_i$ be a strategy for player $i$, and let $\rho = \< \bsigma \>$. We want to prove that $\mu_i(\rho) < \alpha$.
		
		Let us define two finite or infinite sequences $\left( \rho^{(k)} \right)_{k \in K}$ and $\left( h^{(k)} v_k \right)_{k \in K}$, where $K = \{1, \dots, n\}$ or $K = \N \setminus \{0\}$, by for every $k \in K$:
		
		$$\left[ \rho^{(k)} \right] = \tau_{\P} \left( [v_0] \left[ \rho^{(0)} \right] \dots \left[ \rho^{(k-1)} \right] \left[ h^{(k)} v_k \right] \right)$$
		and so that for every $k$, the history $h^{(k)} v_k$ is the shortest prefix of $\rho$ that is not a prefix of $h^{(1)} \dots h^{(k-1)} \rho^{(k-1)}$ (or equivalently, the history $h^{(k)}$ is the longest common prefix of $\rho$ and $h^{(1)} \dots h^{(k-1)} \rho^{(k-1)}$).
		
		Then, the length of the longest common prefix of $h^{(1)} \dots h^{(k-1)} \rho^{(k)}$ and $\rho$ increases with $k$, and the set $K$ is finite if and only if there exists $n$ such that $h^{(1)} \dots h^{(n-1)} \rho^{(n)} = \rho$.
		
		In the infinite case, let:
		$$\chi = [v_0] \left[ \rho^{(0)} \right] \left[ h^{(1)} v_1 \right] \dots \left[ \rho^{(k)} \right] \left[ h^{(k)} v_k \right] \dots.$$
		The play $\chi$ is compatible with $\tau_{\P}$, hence $\nu_\C(\chi) < \alpha$, that is to say:
		$$\mu_i\left(h^{(1)} h^{(2)} \dots\right) < \alpha,$$
		ie. $\mu_i(\rho) < \alpha$.
		
		In the finite case, let:
		$$\chi = [v_0] \left[ \rho^{(0)} \right] \left[ h^{(1)} v_1 \right] \dots \left[ \rho^{(n)} \right] \top^\omega.$$
		For the same reason, $\nu_\C(\chi) < \alpha$, that is to say $\mu_i \left( h^{(1)} \dots h^{(n)} \rho^{(n)} \right) = \mu_i(\rho) < \alpha$.
	\end{itemize}
	
	\item \emph{(\ref{case2_pf_abs}) implies (\ref{case1_pf_abs}).}
	
	Let $\bsigma_{-i}$ be a strategy profile keeping player $i$'s payoff below $\alpha$, $\lambda$-rational assuming a strategy $\sigma_i$.
	Let us define a strategy $\tau_{\P}$ for Prover in the abstract negotiation game.
	
	Let $H = [v_0] \left[\rho^{(0)}\right] \left[h^{(1)}v_1\right] \left[\rho^{(1)}\right] \dots \left[h^{(n)} v_n\right]$ be a history in the abstract game, ending in $S_\P$. Then, we define:
	$$\tau_{\P}(H) = \left[ \< \bsigma_{\|h^{(1)} \dots h^{(n)}v_n} \> \right].$$
	
	If $H$ is a history ending in $\top$, then $\tau_\P(H) = \top$, and in the same way if $H$ ends in $\bot$, then $\tau_\P(H) = \bot$.
	
	Let us show that $\tau_{\P}$ is the strategy we were looking for. Let $\chi$ be a play compatible with $\tau_{\P}$, and let us note that the state $\bot$ does not appear in $\chi$. Then, the play $\chi$ can only have two forms:
	
	\begin{itemize}
		\item If $\chi = [v_0] \left[\rho^{(0)}\right] \left[h^{(1)}v_1\right]  \dots \left[\rho^{(n)}\right] \top^\omega$, then we have:
		$$\rho^{(n)} = \< \bsigma_{\|h^{(1)} \dots h^{(n)} v_n} \>,$$
		and the history $h^{(1)} \dots h^{(n)} v_n$ in the game $G_{\|v_0}$ is compatible with $\bsigma_{-i}$. By hypothesis, we have:
		$$\mu_i\left(h^{(1)} \dots h^{(n)} \rho^{(n)}\right) < \alpha,$$
		hence $\nu_\C(\chi) < \alpha$.
		
		\item If $\chi = [v_0] \left[\rho^{(0)}\right]  \dots \left[h^{(n)} v_n\right] \left[\rho^{(n)}\right] \dots$, then the play $\rho = h^{(1)} h^{(2)} \dots$ is compatible with $\bsigma_{-i}$, and by hypothesis $\mu_i(\rho) < \alpha$, hence $\nu_\C(\chi) < \alpha$. \qedhere
	\end{itemize}
\end{itemize}
\end{proof}

	\section{Some examples of negotiation sequences} \label{app_ex}

We gather in this section some examples that could be interesting for the reader who would want to get a full overall view on the behaviour of the negotiation function on the mean-payoff games.

\begin{exa} \label{ex_sans_spe_sequence}
Let us take again the game of Example~\ref{ex_sans_spe}: let us give (in red) the values of $\lambda_1 = \nego(\lambda_0)$, which correspond to the antagonistic values.

\begin{center}
	\begin{tikzpicture}[->,>=latex,shorten >=1pt, initial text={}, scale=0.8, every node/.style={scale=0.8}]
	\node[state] (a) at (0, 0) {$a$};
	\node[state] (c) at (-2, 0) {$c$};
	\node[state, rectangle] (b) at (2, 0) {$b$};
	\node[state, rectangle] (d) at (4, 0) {$d$};
	\path (a) edge (c);
	\path[<->] (a) edge node[above] {$\stackrel{\playcircle}{0} \stackrel{\Box}{3}$} (b);
	\path (b) edge (d);
	\path (d) edge [loop right] node {$\stackrel{\playcircle}{2} \stackrel{\Box}{2}$} (d);
	\path (c) edge [loop left] node {$\stackrel{\playcircle}{1} \stackrel{\Box}{1}$} (c);
	
	\node[red] (l) at (-3, -0.7) {$(\lambda_1)$};
	\node[red] (la) at (0, -0.7) {$1$};
	\node[red] (lb) at (2, -0.7) {$2$};
	\node[red] (lc) at (-2, -0.7) {$1$};
	\node[red] (ld) at (4, -0.7) {$2$};
	\end{tikzpicture}
\end{center}

At the second step, let us execute the abstract game on the state $a$, with the requirement $\lambda_1$: whatever Prover proposes at first, Challenger has the possibility to deviate and to reach the state $b$. Then, Prover has to propose a $\lambda_1$-consistent play from the state $b$, i.e. a play in which player $\Circle$ gets at least the payoff $2$: such a play necessarily ends in the state $d$, and gives player $\Box$ the payoff $2$.

The other states keep the same values.

\begin{center}
	\begin{tikzpicture}[<->,>=latex,shorten >=1pt, , scale=0.8, every node/.style={scale=0.8}]
	\node[state] (a) at (0, 0) {$a$};
	\node[state] (c) at (-2, 0) {$c$};
	\node[state, rectangle] (b) at (2, 0) {$b$};
	\node[state, rectangle] (d) at (4, 0) {$d$};
	\path[->] (a) edge (c);
	\path[<->] (a) edge node[above] {$\stackrel{\playcircle}{0} \stackrel{\Box}{3}$} (b);
	\path[->] (b) edge (d);
	\path (d) edge [loop right] node {$\stackrel{\playcircle}{2} \stackrel{\Box}{2}$} (d);
	\path (c) edge [loop left] node {$\stackrel{\playcircle}{1} \stackrel{\Box}{1}$} (c);
	
	\node[red] (l) at (0, 1) {$(\lambda_2)$};
	\node[red] (la) at (0, -0.7) {$2$};
	\node[red] (lb) at (2, -0.7) {$2$};
	\node[red] (lc) at (-2, -0.7) {$1$};
	\node[red] (ld) at (4, -0.7) {$2$};
	\end{tikzpicture}
\end{center}

But then, at the third step, from the state $b$: whatever Prover proposes at first, Challenger can deviate to reach the state $a$. Then, Prover has to propose a $\lambda_2$-consistent play from $a$, i.e. a play in which player $\Circle$ gets at least the payoff $2$: such a play necessarily end in the state $d$, i.e. after possibly some prefix, Prover proposes the play $abd^\omega$. But then, Challenger can always deviate to go back to the state $a$; and the play which is thus created is $(ab)^\omega$ which gives player $\Box$ the payoff $3$.

\begin{center}
	\begin{tikzpicture}[<->,>=latex,shorten >=1pt, , scale=0.8, every node/.style={scale=0.8}]
	\node[state] (a) at (0, 0) {$a$};
	\node[state] (c) at (-2, 0) {$c$};
	\node[state, rectangle] (b) at (2, 0) {$b$};
	\node[state, rectangle] (d) at (4, 0) {$d$};
	\path[->] (a) edge (c);
	\path[<->] (a) edge node[above] {$\stackrel{\playcircle}{0} \stackrel{\Box}{3}$} (b);
	\path[->] (b) edge (d);
	\path (d) edge [loop right] node {$\stackrel{\playcircle}{2} \stackrel{\Box}{2}$} (d);
	\path (c) edge [loop left] node {$\stackrel{\playcircle}{1} \stackrel{\Box}{1}$} (c);
	
	\node[red] (l) at (0, 1) {$(\lambda_3)$};
	\node[red] (la) at (0, -0.7) {$2$};
	\node[red] (lb) at (2, -0.7) {$3$};
	\node[red] (lc) at (-2, -0.7) {$1$};
	\node[red] (ld) at (4, -0.7) {$2$};
	\end{tikzpicture}
\end{center}

Finally, from the states $a$ and $b$, there exists no $\lambda_3$-consistent play, and therefore no $\lambda$-rational strategy profile.

\begin{center}
	\begin{tikzpicture}[<->,>=latex,shorten >=1pt, , scale=0.8, every node/.style={scale=0.8}]
	\node[state] (a) at (0, 0) {$a$};
	\node[state] (c) at (-2, 0) {$c$};
	\node[state, rectangle] (b) at (2, 0) {$b$};
	\node[state, rectangle] (d) at (4, 0) {$d$};
	\path[->] (a) edge (c);
	\path[<->] (a) edge node[above] {$\stackrel{\playcircle}{0} \stackrel{\Box}{3}$} (b);
	\path[->] (b) edge (d);
	\path (d) edge [loop right] node {$\stackrel{\playcircle}{2} \stackrel{\Box}{2}$} (d);
	\path (c) edge [loop left] node {$\stackrel{\playcircle}{1} \stackrel{\Box}{1}$} (c);
	
	\node[red] (l) at (0, 1) {$(\lambda_4)$};
	\node[red] (la) at (0, -0.7) {$+\infty$};
	\node[red] (lb) at (2, -0.7) {$+\infty$};
	\node[red] (lc) at (-2, -0.7) {$1$};
	\node[red] (ld) at (4, -0.7) {$2$};
	\end{tikzpicture}
\end{center}
and for all $n \geq 4$, $\lambda_n = \lambda_4$.
\end{exa}

\begin{exa} \label{ex_sans_spe_sc}
In all the previous examples, all the games whose underlying graphs were strongly connected contained SPEs.
Here is an example of game with a strongly connected underlying graph that does not contain SPEs.
This game is similar to the game of Example~\ref{ex_sans_spe}, hence we do not give the details of the computation of the negotiation sequence.

\begin{center}
	\begin{tikzpicture}[<->,>=latex,shorten >=1pt, scale=0.8, every node/.style={scale=0.8}]
	\node[state, rectangle] (b) at (0, 0) {$b$};
	\node[state] (c) at (1, -1.7) {$c$};
	\node[state, rectangle] (a) at (2, 0) {$a$};
	\node[state] (d) at (4, 0) {$d$};
	\node[state] (e) at (6, 0) {$e$};
	\node[state, rectangle] (f) at (5, 1.7) {$f$};
	
	\path[<->] (a) edge node[above] {$\stackrel{\playcircle}{3} \stackrel{\Box}{0}$} (d);
	\path[->] (a) edge node[above] {$\stackrel{\playcircle}{0} \stackrel{\Box}{0}$} (b);
	\path[->] (b) edge node[below left] {$\stackrel{\playcircle}{0} \stackrel{\Box}{0}$} (c);
	\path[->] (c) edge node[below right] {$\stackrel{\playcircle}{0} \stackrel{\Box}{0}$} (a);
	\path (b) edge [loop left] node {$\stackrel{\playcircle}{1} \stackrel{\Box}{1}$} (b);
	\path (c) edge [loop below] node {$\stackrel{\playcircle}{3} \stackrel{\Box}{0}$} (c);
	\path[->] (d) edge node[below] {$\stackrel{\playcircle}{0} \stackrel{\Box}{0}$} (e);
	\path[->] (e) edge node[above right] {$\stackrel{\playcircle}{0} \stackrel{\Box}{0}$} (f);
	\path[->] (f) edge node[above left] {$\stackrel{\playcircle}{0} \stackrel{\Box}{0}$} (d);
	\path (e) edge [loop right] node {$\stackrel{\playcircle}{2} \stackrel{\Box}{2}$} (e);
	\path (f) edge [loop above] node {$\stackrel{\playcircle}{0} \stackrel{\Box}{4}$} (f);
	
	\node[red] (l) at (1, 1.7) {$(\lambda_1)$};
	\node[red] (la) at (2, 0.7) {$1$};
	\node[red] (lb) at (0, 0.7) {$1$};
	\node[red] (lc) at (1.7, -1.7) {$3$};
	\node[red] (ld) at (4, -0.7) {$2$};
	\node[red] (le) at (6, -0.7) {$2$};
	\node[red] (lf) at (5.7, 1.7) {$4$};
	\end{tikzpicture}
\end{center}

\begin{center}
	\begin{tikzpicture}[<->,>=latex,shorten >=1pt, scale=0.8, every node/.style={scale=0.8}]
	\node[state, rectangle] (b) at (0, 0) {$b$};
	\node[state] (c) at (1, -1.7) {$c$};
	\node[state, rectangle] (a) at (2, 0) {$a$};
	\node[state] (d) at (4, 0) {$d$};
	\node[state] (e) at (6, 0) {$e$};
	\node[state, rectangle] (f) at (5, 1.7) {$f$};
	
	\path[<->] (a) edge node[above] {$\stackrel{\playcircle}{3} \stackrel{\Box}{0}$} (d);
	\path[->] (a) edge node[above] {$\stackrel{\playcircle}{0} \stackrel{\Box}{0}$} (b);
	\path[->] (b) edge node[below left] {$\stackrel{\playcircle}{0} \stackrel{\Box}{0}$} (c);
	\path[->] (c) edge node[below right] {$\stackrel{\playcircle}{0} \stackrel{\Box}{0}$} (a);
	\path (b) edge [loop left] node {$\stackrel{\playcircle}{1} \stackrel{\Box}{1}$} (b);
	\path (c) edge [loop below] node {$\stackrel{\playcircle}{3} \stackrel{\Box}{0}$} (c);
	\path[->] (d) edge node[below] {$\stackrel{\playcircle}{0} \stackrel{\Box}{0}$} (e);
	\path[->] (e) edge node[above right] {$\stackrel{\playcircle}{0} \stackrel{\Box}{0}$} (f);
	\path[->] (f) edge node[above left] {$\stackrel{\playcircle}{0} \stackrel{\Box}{0}$} (d);
	\path (e) edge [loop right] node {$\stackrel{\playcircle}{2} \stackrel{\Box}{2}$} (e);
	\path (f) edge [loop above] node {$\stackrel{\playcircle}{0} \stackrel{\Box}{4}$} (f);
	
	\node[red] (l) at (1, 1.7) {$(\lambda_2)$};
	\node[red] (la) at (2, 0.7) {$2$};
	\node[red] (lb) at (0, 0.7) {$1$};
	\node[red] (lc) at (1.7, -1.7) {$3$};
	\node[red] (ld) at (4, -0.7) {$2$};
	\node[red] (le) at (6, -0.7) {$2$};
	\node[red] (lf) at (5.7, 1.7) {$4$};
	\end{tikzpicture}
\end{center}

\begin{center}
	\begin{tikzpicture}[<->,>=latex,shorten >=1pt, scale=0.8, every node/.style={scale=0.8}]
	\node[state, rectangle] (b) at (0, 0) {$b$};
	\node[state] (c) at (1, -1.7) {$c$};
	\node[state, rectangle] (a) at (2, 0) {$a$};
	\node[state] (d) at (4, 0) {$d$};
	\node[state] (e) at (6, 0) {$e$};
	\node[state, rectangle] (f) at (5, 1.7) {$f$};
	
	\path[<->] (a) edge node[above] {$\stackrel{\playcircle}{3} \stackrel{\Box}{0}$} (d);
	\path[->] (a) edge node[above] {$\stackrel{\playcircle}{0} \stackrel{\Box}{0}$} (b);
	\path[->] (b) edge node[below left] {$\stackrel{\playcircle}{0} \stackrel{\Box}{0}$} (c);
	\path[->] (c) edge node[below right] {$\stackrel{\playcircle}{0} \stackrel{\Box}{0}$} (a);
	\path (b) edge [loop left] node {$\stackrel{\playcircle}{1} \stackrel{\Box}{1}$} (b);
	\path (c) edge [loop below] node {$\stackrel{\playcircle}{3} \stackrel{\Box}{0}$} (c);
	\path[->] (d) edge node[below] {$\stackrel{\playcircle}{0} \stackrel{\Box}{0}$} (e);
	\path[->] (e) edge node[above right] {$\stackrel{\playcircle}{0} \stackrel{\Box}{0}$} (f);
	\path[->] (f) edge node[above left] {$\stackrel{\playcircle}{0} \stackrel{\Box}{0}$} (d);
	\path (e) edge [loop right] node {$\stackrel{\playcircle}{2} \stackrel{\Box}{2}$} (e);
	\path (f) edge [loop above] node {$\stackrel{\playcircle}{0} \stackrel{\Box}{4}$} (f);
	
	\node[red] (l) at (1, 1.7) {$(\lambda_3)$};
	\node[red] (la) at (2, 0.7) {$2$};
	\node[red] (lb) at (0, 0.7) {$1$};
	\node[red] (lc) at (1.7, -1.7) {$3$};
	\node[red] (ld) at (4, -0.7) {$3$};
	\node[red] (le) at (6, -0.7) {$2$};
	\node[red] (lf) at (5.7, 1.7) {$4$};
	\end{tikzpicture}
\end{center}

\begin{center}
	\begin{tikzpicture}[<->,>=latex,shorten >=1pt, scale=0.8, every node/.style={scale=0.8}]
	\node[state, rectangle] (b) at (0, 0) {$b$};
	\node[state] (c) at (1, -1.7) {$c$};
	\node[state, rectangle] (a) at (2, 0) {$a$};
	\node[state] (d) at (4, 0) {$d$};
	\node[state] (e) at (6, 0) {$e$};
	\node[state, rectangle] (f) at (5, 1.7) {$f$};
	
	\path[<->] (a) edge node[above] {$\stackrel{\playcircle}{3} \stackrel{\Box}{0}$} (d);
	\path[->] (a) edge node[above] {$\stackrel{\playcircle}{0} \stackrel{\Box}{0}$} (b);
	\path[->] (b) edge node[below left] {$\stackrel{\playcircle}{0} \stackrel{\Box}{0}$} (c);
	\path[->] (c) edge node[below right] {$\stackrel{\playcircle}{0} \stackrel{\Box}{0}$} (a);
	\path (b) edge [loop left] node {$\stackrel{\playcircle}{1} \stackrel{\Box}{1}$} (b);
	\path (c) edge [loop below] node {$\stackrel{\playcircle}{3} \stackrel{\Box}{0}$} (c);
	\path[->] (d) edge node[below] {$\stackrel{\playcircle}{0} \stackrel{\Box}{0}$} (e);
	\path[->] (e) edge node[above right] {$\stackrel{\playcircle}{0} \stackrel{\Box}{0}$} (f);
	\path[->] (f) edge node[above left] {$\stackrel{\playcircle}{0} \stackrel{\Box}{0}$} (d);
	\path (e) edge [loop right] node {$\stackrel{\playcircle}{2} \stackrel{\Box}{2}$} (e);
	\path (f) edge [loop above] node {$\stackrel{\playcircle}{0} \stackrel{\Box}{4}$} (f);
	
	\node[red] (l) at (1, 1.7) {$(\lambda_4)$};
	\node[red] (la) at (2, 0.7) {$+\infty$};
	\node[red] (lb) at (0, 0.7) {$+\infty$};
	\node[red] (lc) at (1.9, -1.7) {$+\infty$};
	\node[red] (ld) at (4, -0.7) {$+\infty$};
	\node[red] (le) at (6, -0.7) {$+\infty$};
	\node[red] (lf) at (5.9, 1.7) {$+\infty$};
	\end{tikzpicture}
\end{center}
\end{exa}

\begin{exa} \label{ex_big}
This example shows how a new requirement can emerge from the combination of several cycles.

Let $G$ be the following game:

\begin{center}
	\begin{tikzpicture}[->,>=latex,shorten >=1pt, scale=0.8, every node/.style={scale=0.8}]
	\node[state] (a) at (2, 1.5) {$a$};
	\node[state] (b) at (4, 1.5) {$b$};
	\node[state, rectangle] (c) at (6, 0) {$c$};
	\node[state] (d) at (8, 0) {$d$};
	\node[state] (e) at (4, -1.5) {$e$};
	\node[state, rectangle] (f) at (2, -1.5) {$f$};
	\node[state] (g) at (0, -1.5) {$g$};
	
	\node[red] (l) at (1, 0.8) {$(\lambda_1)$};
	\node[red] (a') at (2, 0.8) {$1$};
	\node[red] (b') at (4, 0.8) {$1$};
	\node[red] (c') at (6, -0.7) {$0$};
	\node[red] (d') at (8, -0.7) {$0$};
	\node[red] (e') at (4, -2.2) {$0$};
	\node[red] (f') at (2, -2.2) {$0$};
	\node[red] (g') at (0, -2.2) {$3$};
	
	\path (a) edge[loop left] node[left] {$\stackrel{\playcircle}{1} \stackrel{\Box}{3}$} (a);
	\path (c) edge[loop left] node[left] {$\stackrel{\playcircle}{0} \stackrel{\Box}{0}$} (c);
	\path (d) edge[loop above] node[above] {$\stackrel{\playcircle}{0} \stackrel{\Box}{0}$} (d);
	\path (g) edge[loop left] node[left] {$\stackrel{\playcircle}{3} \stackrel{\Box}{2}$} (g);
	\path[<->] (a) edge node[above] {$\stackrel{\playcircle}{0} \stackrel{\Box}{0}$} (b);
	\path[<->] (b) edge node[above right] {$\stackrel{\playcircle}{2} \stackrel{\Box}{3}$} (c);
	\path[<->] (c) edge node[above] {$\stackrel{\playcircle}{1} \stackrel{\Box}{3}$} (d);
	\path[<->] (c) edge node[below right] {$\stackrel{\playcircle}{0} \stackrel{\Box}{0}$} (e);
	\path[<->] (e) edge node[above] {$\stackrel{\playcircle}{0} \stackrel{\Box}{0}$} (f);
	\path[<->] (f) edge node[above] {$\stackrel{\playcircle}{0} \stackrel{\Box}{0}$} (g);
	\end{tikzpicture}
\end{center}

At the first step, the requirement $\lambda_1$ captures the antagonistic values.

Then, from the state $c$, if player $\Box$ forces the access to the state $b$, then player $\Circle$ must get at least $1$: the worst play that can be proposed to player $\Box$ is then $(babc)^\omega$, which gives player $\Box$ the payoff $\frac{3}{2}$.

From the state $f$, if player $\Box$ forces the access to the state $g$, then the worst play that can be proposed to them is $g^\omega$.

\begin{center}
	\begin{tikzpicture}[->,>=latex,shorten >=1pt, scale=0.8, every node/.style={scale=0.8}]
	\node[state] (a) at (2, 1.5) {$a$};
	\node[state] (b) at (4, 1.5) {$b$};
	\node[state, rectangle] (c) at (6, 0) {$c$};
	\node[state] (d) at (8, 0) {$d$};
	\node[state] (e) at (4, -1.5) {$e$};
	\node[state, rectangle] (f) at (2, -1.5) {$f$};
	\node[state] (g) at (0, -1.5) {$g$};
	
	\node[red] (l) at (1, 0.8) {$(\lambda_2)$};
	\node[red] (a') at (2, 0.8) {$1$};
	\node[red] (b') at (4, 0.8) {$1$};
	\node[red] (c') at (6, -0.8) {$\frac{3}{2}$};
	\node[red] (d') at (8, -0.7) {$0$};
	\node[red] (e') at (4, -2.2) {$0$};
	\node[red] (f') at (2, -2.2) {$2$};
	\node[red] (g') at (0, -2.2) {$3$};
	
	\path (a) edge[loop left] node[left] {$\stackrel{\playcircle}{1} \stackrel{\Box}{3}$} (a);
	\path (c) edge[loop left] node[left] {$\stackrel{\playcircle}{0} \stackrel{\Box}{0}$} (c);
	\path (d) edge[loop above] node[above] {$\stackrel{\playcircle}{0} \stackrel{\Box}{0}$} (d);
	\path (g) edge[loop left] node[left] {$\stackrel{\playcircle}{3} \stackrel{\Box}{2}$} (g);
	\path[<->] (a) edge node[above] {$\stackrel{\playcircle}{0} \stackrel{\Box}{0}$} (b);
	\path[<->] (b) edge node[above right] {$\stackrel{\playcircle}{2} \stackrel{\Box}{3}$} (c);
	\path[<->] (c) edge node[above] {$\stackrel{\playcircle}{1} \stackrel{\Box}{3}$} (d);
	\path[<->] (c) edge node[below right] {$\stackrel{\playcircle}{0} \stackrel{\Box}{0}$} (e);
	\path[<->] (e) edge node[above] {$\stackrel{\playcircle}{0} \stackrel{\Box}{0}$} (f);
	\path[<->] (f) edge node[above] {$\stackrel{\playcircle}{0} \stackrel{\Box}{0}$} (g);
	\end{tikzpicture}
\end{center}

Then, from the state $d$, if player $\Circle$ forces the access to the state $c$, then player $\Box$ must get at least $\frac{3}{2}$: the worst play that can be proposed to player $\Circle$ is then $(cccd)^\omega$, which gives player $\Circle$ the payoff $\frac{1}{2}$.

At the same time, from the state $e$, player $\Circle$ can now force the acces to the state $f$: then, the worst play that can be proposed to them is $fg^\omega$.

\begin{center}
	\begin{tikzpicture}[->,>=latex,shorten >=1pt, scale=0.8, every node/.style={scale=0.8}]
	\node[state] (a) at (2, 1.5) {$a$};
	\node[state] (b) at (4, 1.5) {$b$};
	\node[state, rectangle] (c) at (6, 0) {$c$};
	\node[state] (d) at (8, 0) {$d$};
	\node[state] (e) at (4, -1.5) {$e$};
	\node[state, rectangle] (f) at (2, -1.5) {$f$};
	\node[state] (g) at (0, -1.5) {$g$};
	
	\node[red] (l) at (1, 0.8) {$(\lambda_3)$};
	\node[red] (a') at (2, 0.8) {$1$};
	\node[red] (b') at (4, 0.8) {$1$};
	\node[red] (c') at (6, -0.8) {$\frac{3}{2}$};
	\node[red] (d') at (8, -0.8) {$\frac{1}{2}$};
	\node[red] (e') at (4, -2.2) {$3$};
	\node[red] (f') at (2, -2.2) {$2$};
	\node[red] (g') at (0, -2.2) {$3$};
	
	\path (a) edge[loop left] node[left] {$\stackrel{\playcircle}{1} \stackrel{\Box}{3}$} (a);
	\path (c) edge[loop left] node[left] {$\stackrel{\playcircle}{0} \stackrel{\Box}{0}$} (c);
	\path (d) edge[loop above] node[above] {$\stackrel{\playcircle}{0} \stackrel{\Box}{0}$} (d);
	\path (g) edge[loop left] node[left] {$\stackrel{\playcircle}{3} \stackrel{\Box}{2}$} (g);
	\path[<->] (a) edge node[above] {$\stackrel{\playcircle}{0} \stackrel{\Box}{0}$} (b);
	\path[<->] (b) edge node[above right] {$\stackrel{\playcircle}{2} \stackrel{\Box}{3}$} (c);
	\path[<->] (c) edge node[above] {$\stackrel{\playcircle}{1} \stackrel{\Box}{3}$} (d);
	\path[<->] (c) edge node[below right] {$\stackrel{\playcircle}{0} \stackrel{\Box}{0}$} (e);
	\path[<->] (e) edge node[above] {$\stackrel{\playcircle}{0} \stackrel{\Box}{0}$} (f);
	\path[<->] (f) edge node[above] {$\stackrel{\playcircle}{0} \stackrel{\Box}{0}$} (g);
	\end{tikzpicture}
\end{center}

But then, from the state $c$, player $\Box$ can now force the access to the state $e$: then, the worst play that can be proposed to them is $efg^\omega$.

\begin{center}
	\begin{tikzpicture}[->,>=latex,shorten >=1pt, scale=0.8, every node/.style={scale=0.8}]
	\node[state] (a) at (2, 1.5) {$a$};
	\node[state] (b) at (4, 1.5) {$b$};
	\node[state, rectangle] (c) at (6, 0) {$c$};
	\node[state] (d) at (8, 0) {$d$};
	\node[state] (e) at (4, -1.5) {$e$};
	\node[state, rectangle] (f) at (2, -1.5) {$f$};
	\node[state] (g) at (0, -1.5) {$g$};
	
	\node[red] (l) at (1, 0.8) {$(\lambda_4)$};
	\node[red] (a') at (2, 0.8) {$1$};
	\node[red] (b') at (4, 0.8) {$1$};
	\node[red] (c') at (6, -0.7) {$2$};
	\node[red] (d') at (8, -0.8) {$\frac{1}{2}$};
	\node[red] (e') at (4, -2.2) {$3$};
	\node[red] (f') at (2, -2.2) {$2$};
	\node[red] (g') at (0, -2.2) {$3$};
	
	\path (a) edge[loop left] node[left] {$\stackrel{\playcircle}{1} \stackrel{\Box}{3}$} (a);
	\path (c) edge[loop left] node[left] {$\stackrel{\playcircle}{0} \stackrel{\Box}{0}$} (c);
	\path (d) edge[loop above] node[above] {$\stackrel{\playcircle}{0} \stackrel{\Box}{0}$} (d);
	\path (g) edge[loop left] node[left] {$\stackrel{\playcircle}{3} \stackrel{\Box}{2}$} (g);
	\path[<->] (a) edge node[above] {$\stackrel{\playcircle}{0} \stackrel{\Box}{0}$} (b);
	\path[<->] (b) edge node[above right] {$\stackrel{\playcircle}{2} \stackrel{\Box}{3}$} (c);
	\path[<->] (c) edge node[above] {$\stackrel{\playcircle}{1} \stackrel{\Box}{3}$} (d);
	\path[<->] (c) edge node[below right] {$\stackrel{\playcircle}{0} \stackrel{\Box}{0}$} (e);
	\path[<->] (e) edge node[above] {$\stackrel{\playcircle}{0} \stackrel{\Box}{0}$} (f);
	\path[<->] (f) edge node[above] {$\stackrel{\playcircle}{0} \stackrel{\Box}{0}$} (g);
	\end{tikzpicture}
\end{center}

And finally, from that point, if from the state $d$ player $\Circle$ forces the access to the state $c$, then player $\Box$ must have at least the payof $2$; and therefore, the worst play that can be proposed to player $\Circle$ is now $(ccd)^\omega$, which gives her the payoff $\frac{2}{3}$.

\begin{center}
	\begin{tikzpicture}[->,>=latex,shorten >=1pt, scale=0.8, every node/.style={scale=0.8}]
	\node[state] (a) at (2, 1.5) {$a$};
	\node[state] (b) at (4, 1.5) {$b$};
	\node[state, rectangle] (c) at (6, 0) {$c$};
	\node[state] (d) at (8, 0) {$d$};
	\node[state] (e) at (4, -1.5) {$e$};
	\node[state, rectangle] (f) at (2, -1.5) {$f$};
	\node[state] (g) at (0, -1.5) {$g$};
	
	\node[red] (l) at (1, 0.8) {$(\lambda_5)$};
	\node[red] (a') at (2, 0.8) {$1$};
	\node[red] (b') at (4, 0.8) {$1$};
	\node[red] (c') at (6, -0.7) {$2$};
	\node[red] (d') at (8, -0.8) {$\frac{2}{3}$};
	\node[red] (e') at (4, -2.2) {$3$};
	\node[red] (f') at (2, -2.2) {$2$};
	\node[red] (g') at (0, -2.2) {$3$};
	
	\path (a) edge[loop left] node[left] {$\stackrel{\playcircle}{1} \stackrel{\Box}{3}$} (a);
	\path (c) edge[loop left] node[left] {$\stackrel{\playcircle}{0} \stackrel{\Box}{0}$} (c);
	\path (d) edge[loop above] node[above] {$\stackrel{\playcircle}{0} \stackrel{\Box}{0}$} (d);
	\path (g) edge[loop left] node[left] {$\stackrel{\playcircle}{3} \stackrel{\Box}{2}$} (g);
	\path[<->] (a) edge node[above] {$\stackrel{\playcircle}{0} \stackrel{\Box}{0}$} (b);
	\path[<->] (b) edge node[above right] {$\stackrel{\playcircle}{2} \stackrel{\Box}{3}$} (c);
	\path[<->] (c) edge node[above] {$\stackrel{\playcircle}{1} \stackrel{\Box}{3}$} (d);
	\path[<->] (c) edge node[below right] {$\stackrel{\playcircle}{0} \stackrel{\Box}{0}$} (e);
	\path[<->] (e) edge node[above] {$\stackrel{\playcircle}{0} \stackrel{\Box}{0}$} (f);
	\path[<->] (f) edge node[above] {$\stackrel{\playcircle}{0} \stackrel{\Box}{0}$} (g);
	\end{tikzpicture}
\end{center}

The requirement $\lambda_5$ is a fixed point of the negotiation function.
\end{exa}
\end{document}